\documentclass[journal]{IEEEtran}
\makeatletter

\let\proof\@undefined
\let\endproof\@undefined
\makeatother
\usepackage{epsfig,subfigure,graphicx}
\usepackage{amsmath}
\usepackage{amsthm}
\usepackage{amssymb}

\newtheorem{thm}{Theorem}
\theoremstyle{definition}
\theoremstyle{remark}
\theoremstyle{plain}
\newtheorem{prop}{Proposition}
\theoremstyle{remark}
\theoremstyle{plain}
\newtheorem{cor}{Corollary}
\theoremstyle{plain}
\newtheorem{lem}{Lemma}

\newcommand{\oper}[1]{\mathcal{#1}}

\newcommand{\norm}[1]{\left\|#1\right\|}
\newcommand{\inner}[1]{\left<#1\right>}
\newcommand{\R}{\mathbb{R}}

\newcommand{\E}{\mathbb{E}}
\newcommand{\Prob}{\mathbb{P}}

\DeclareMathOperator*{\sgn}{sgn}
\DeclareMathOperator*{\Oracle}{Oracle}
\DeclareMathOperator*{\Ber}{Ber}

\DeclareMathOperator*{\TV}{TV}



\begin{document}

\title{Exact recoverability from dense corrupted observations via $L_1$ minimization}

\author{Nam H. Nguyen and Trac D. Tran,~\IEEEmembership{Senior Member,~IEEE}
\thanks{This work has been partially supported by the National Science Foundation (NSF) under Grants CCF-1117545 and CCF-0728893; the Army Research Office (ARO) under Grant 58110-MA-II and Grant
60219-MA; and the Office of Naval Research (ONR) under Grant N102-183-0208.}
\thanks{Nam H. Nguyen and Trac D. Tran are with the Department of Electrical and Computer Engineering, the Johns Hopkins University, Baltimore, MD, 21218 USA (email: {nam, trac}@jhu.edu ).}
\thanks{This work was posted on arxiv.org on Feb. 7, 2011 as 1102.1227.}
}


\maketitle

\begin{abstract}
This paper confirms a surprising phenomenon first observed by Wright \textit{et al.} \cite{WYGSM_Face_2009_J} \cite{WM_denseError_2010_J} under different setting: given $m$ highly corrupted measurements $y = A_{\Omega \bullet} x^{\star} + e^{\star}$, where $A_{\Omega \bullet}$ is a submatrix whose rows are selected uniformly at random from rows of an orthogonal matrix $A$ and $e^{\star}$ is an unknown sparse error vector whose nonzero entries may be unbounded, we show that with high probability $\ell_1$-minimization can recover the sparse signal of interest $x^{\star}$ exactly from only $m = C \mu^2 k (\log n)^2$ where $k$ is the number of nonzero components of $x^{\star}$ and $\mu = n \max_{ij} A_{ij}^2$, even if nearly $100 \%$ of the measurements are corrupted. We further guarantee that stable recovery is possible when measurements are polluted by both gross sparse and small dense errors: $y = A_{\Omega \bullet} x^{\star} + e^{\star}+ \nu$ where $\nu$ is the small dense noise with bounded energy. Numerous simulation results under various settings are also presented to verify
the validity of the theory as well as to illustrate the promising
potential of the proposed framework.
\end{abstract}

\begin{keywords}
Compressed sensing, $\ell_1$-minimization, sparse signal recovery, discrete Fourier transform, (weak) restricted isometry, random matrix, dense error correction.
\end{keywords}

\section{Introduction}

Compressed sensing (CS) has been rigorously studied over a past few years as a revolutionary signal sampling paradigm \cite{Donoho_CS_2006_J}, \cite{CRT_CS_2004_J}, \cite{Rauhut_2004_J}. According to CS, a $k$-sparse signal $x^{\star} \in \R^n$ is measured through a set of linear projections $y_i = \inner{a_i , x^{\star}}$, $i = 1,..., m$, in which vectors $a_i \in \R^n$ form a matrix $A$ of size $m \times n$. The intriguing CS framework advocates the collection of significantly fewer measurements than the ambient dimension of the signal ($m < n$). To reconstruct $x^{\star}$, a standard $\ell_1$-minimization is proposed to solve the inverse problem


\begin{equation}
\label{opt::linear optimization - general approach}
\min_x \norm{x}_1 \quad\quad \text{subject to} \quad\quad y = A x.
\end{equation}

\noindent It has been well known in the literature that if $A$ obeys Restricted Isometry Property (RIP) \cite{CRT_Stability_2006a_J}, \cite{CT_CSdecoding_2005_J} - a property essentially implies that every subset of $k$ or fewer columns of $A$ is approximately an orthogonal system, then the linear program in (\ref{opt::linear optimization - general approach}) is able to faithfully recover $x^{\star}$. This RIP condition has been proven to hold for many types of random measurement matrices \cite{CT_CS_UUP_2006_J}, \cite{RV_CS_2008_J}. For example, random Gaussian or Bernoulli matrices satisfy RIP with high probability as long as the number of measurements $m$ is on the order of $k \log n$ \cite{CT_CS_UUP_2006_J}, whereas the sub-orthogonal matrix $A_{\Omega \bullet}$ sampled uniformly from an orthogonal matrix $A$ obeys RIP with high probability when $m$ is on the order of $k \log^4 n$ \cite{RV_CS_2008_J}.

In many practical applications, we are often interested in situations in which measurements are contaminated by noise. Mathematically, we often observe
$$
y = A x^{\star} + e^{\star},
$$
where $e^{\star} \in \R^m$ is the vector noise. To reconstruct $x^{\star}$ from the observation vector $y$, we minimize the following convex program
\begin{equation}
\label{opt::convex optimization - general approach}
\min_x \norm{x}_1 \quad\quad \text{subject to} \quad\quad \norm{y - A x}_2 \leq \sigma,
\end{equation}
where $\sigma$ is upper bound of the noise level $\norm{e^{\star}}_2$, which assumes to be known. It has been shown in \cite{CRT_Stability_2006a_J}, \cite{DET_2006_J}, \cite{Rauhut_2008_J}, \cite{Tropp_Relax_2006_J} that if $A$ satisfies RIP and $\sigma$ is not too large, then by the same amount of measurements as above, solution $\widehat{x}$ of (\ref{opt::convex optimization - general approach}) does not depart too far from the optimal solution $x^{\star}$. In particular, the authors of \cite{CRT_Stability_2006a_J} proved that the reconstruction error proportionally grows with $\sigma$ as $\norm{\widehat{x} - x^{\star}}_2 \leq C \sigma$, where $C$ is a small numerical constant.

This result is elegant when the noise level is low. However, as the noise energy gets larger, $\widehat{x}$ might be unexpectedly very different from $x^{\star}$. This implies that even a single grossly corrupted measurement may produce $\widehat{x}$ arbitrarily far from the true solution. Unfortunately, gross errors and irrelevant measurements are now ubiquitous in modern applications such as image processing, sensor network, where certain number of measurements may be severely corrupted due to occlusions, sensor failures, transmission error, etc \cite{CLMW_RobustPCA_2009_J}, \cite{WYGSM_Face_2009_J}, \cite{WM_denseError_2010_J}. These examples motivate us to consider a new problem in which we aim to recover a sparse vector $x^{\star}$ from highly corrupted measurements, $y = A x^{\star} + e^{\star}$. In contrast to previous approaches \cite{CRT_Stability_2006a_J}, \cite{DET_2006_J}, \cite{Rauhut_2008_J}, \cite{Tropp_Relax_2006_J} where only small dense noise term $e^{\star}$ is considered, in this paper, entries of $e^{\star}$ can have arbitrarily large magnitude, and their support is assumed to be sparse but unknown. The underlying model has been previously developed by Wright \textit{et al.} \cite{WYGSM_Face_2009_J}. Motivated from the face recognition problem, in which sparse error appears due to a fraction of the query image $y$ being occluded by glasses, hats, etc, the authors proposed to simultaneously minimize the $\ell_1$-norm of both $x$ and $e$,
\begin{equation}
\label{opt::Wright-Ma l1-l1 minimization - sub orthogonal matrix}
\min_{x,e} \norm{x}_1 + \norm{e}_1  \quad\quad \text{subject to} \quad\quad y = A x + e.
\end{equation}
where columns of matrix $A$ are associated with training images. To analyze the model, they assume $A$ obeys the Gaussian distribution \cite{WM_denseError_2010_J}. That is, entries of $A$ are i.i.d $\oper N(0, 1/m)$ Gaussian random variables.

As pointed out by Cand\`es and Romberg \cite{CR_incoherence_2007_J} and Do \textit{et al.} \cite{DGNT_2008_J}, in compressed sensing, completely random measurement matrices might not be relevant in many practical applications. First, we may not be allowed to control measurement matrices. For instance, in MRI or tomography, due to the acquisition system, measurements are inherently frequency based. The second drawback is computationally expensive and memory buffering due to their completely unstructured nature. These weaknesses prevent these fully random sensing matrices from being applied to applications in which both acquisition system (or encoder) and reconstruction system (or decoder) are required to have low complexity and fast implementation.



In this paper, we extend the analysis to a special class of measurement matrices which are constructed from an orthogonal matrix $A$. Let $\Omega$ be a subset of indices of $\{ 1,...,n \}$; and measurement matrix $A_{\Omega \bullet}$ is constructed from rows of $A$ associated with indices in $\Omega$. The observation vector $y$ is now obtained by
\begin{equation}
\label{eqt::measurement model}
y = A_{\Omega \bullet} x^{\star} + e^{\star},
\end{equation}
where we assume that signal $x^{\star}$ and error $e^{\star}$ are sparse vectors whose supports are $T$ and $S$, respectively. These suborthogonal measurement matrices have been carefully studied in the literature such as the partial Fourier ensemble \cite{CR_incoherence_2007_J} and structurally random matrix (SRM) \cite{DGNT_2008_J} as a promising replacement for fully random Gaussian/Bernulli sensing matrices. However, so far, none of the previous work guarantees stable reconstruction under highly corrupted sparse error or a combination of both large sparse error and small dense noise. This is our most significant technical contribution.


To recover $x^{\star}$ and $e^{\star}$, we propose to solve the following extended $\ell_1$-minimization
\begin{equation}
\label{opt::l1-l1 minimization - sub orthogonal matrix}
\min_{x,e} \norm{x}_1 + \lambda \norm{e}_1  \quad\quad \text{subject to} \quad\quad y = A_{\Omega \bullet} x + e,
\end{equation}
where $\lambda > 0$ is a controlled parameter that balance the two $\ell_1$-norm terms.

Surprisingly, with an appropriate choice of $\lambda$, this simple linear program (\ref{opt::l1-l1 minimization - sub orthogonal matrix}) can assure the exact recovery both $x^{\star}$ and $e^{\star}$ exactly,  even when the sparsity of $x^{\star}$ grows almost linearly in the dimension of signal and the errors in $e^{\star}$ are up to a constant fraction of all the entries. This observation will be confirmed via rigorously mathematical justifications as well as extensive simulations in the next few sections.





\subsection{Motivational applications} There are many important applications in which the observations of interest can be modeled as a linear projection of a sparse signal plus sparse error. Before shifting to the presentation of our main results, we briefly introduce several applications and show how well they fit into our underlying model of interest


\begin{itemize}
	
    \item \textit{Image inpainting.}  Given an image $Y$ with missing/corrupted pixels,  we would like to reconstruct the original image by filling in lost information \cite{ESQD_Inpainting_2005_J}. If we assume that errors are indicated by a matrix $E$ whose nonzero-value entries are associated with the missing/corrupted pixels, then $Y$ can be decomposed into two components: the original image $B$ and sparse noise $E$. In image inpainting, the key hypothesis frequently made to guarantee satisfactory performance is that $Y$ has to be sparsely represented by a few coefficients over an overcomplete dictionary $D$ \cite{SED_2005_J}, \cite{ESQD_Inpainting_2005_J}. This dictionary is typically a concatenation of orthogonal transformations, e.g. wavelet, Fourier, DCT or is learned from a set of training images. By denoting $y$, $b$ and $e$ as vectorized versions of matrices $Y$, $B$ and $E$, we have a mathematical representation, $y = D x + e$, where $x$ is the sparse coefficient vector. As opposed to previous works in which locations of missing entries are often required to be known in advance, here we do not need to make any of such assumptions in our model. Rather, utilizing the optimization in (\ref{opt::l1-l1 minimization - sub orthogonal matrix}), we let the algorithm guess both the noisy locations and their magnitudes.

	
    \item \textit{Compressed sensing for networked data.} In sensor networks, the goal is to design a low-power system but still guarantee reliability in transmission. In this setting \cite{HBRN_2008_J}, each sensor collects information of a signal or object $x^{\star}$ by simply projecting $x^{\star}$ onto row vectors $a_i$ of a sensing matrix $A \in \R^{m \times n}$, $b_i = \inner{a_i, x^{\star}}$. As suggested in \cite{HBRN_2008_J}, rather than realizing $A$ in a completely random manner, it is simpler and less computational complex to utilize a matrix $A$ that we can exploit fast implementation and avoid expensive memory buffering such as DCT, Hadarmard or the Fourier transform.


        After having gathered all the data, these sensors send measured information $y_i$ to their neighbors or a central hub for analysis and processing. However, due to the fact that sensors are low cost, it is highly likely that some sensors might fail in collecting data or producing measurements that are not well protected before transmission. This implies that some measurements may be severely corrupted by two types of errors:
        $$
        y = A x^{\star} + e^{\star} + \nu,
        $$
        where $e^{\star}$ is the sparse error, whose entry magnitudes in the support can be arbitrarily large and $\nu$ is dense noise with bounded energy $\sigma$. To recover both $x^{\star}$ and $e^{\star}$, we propose to solve
        $$
        \min_{x, e} \norm{x}_1 + \lambda \norm{e}_1 \quad \text{s.t.} \quad \norm{y - A x - e}_2 \leq \sigma.
        $$

    \item \textit{Joint source-channel coding.} One potential application of CS is simultaneous joint source channel coding \cite{CCZKHBB_2010_C}, \cite{LWW_grossError_2010_C}. In contrast to conventional approach where source data $x^{\star} \in \R^n$ is initially encoded to remove redundancy, then channel-coded for error protection. In CS, $x^{\star}$ is encoded by a simple linear projection $y = Ax$. In \cite{CCZKHBB_2010_C}, to protect the channel, the authors proposed to use more measurements than the optimal value that CS can recover accurately. In order to retrieve $x^{\star}$ under channel error, we need to know the probabilistic model of corrupted entries, which is usually unavailable in practice. We believe that ours is a more accessible and more robust approach in recovering such signal.





\end{itemize}



\subsection{Notations and organization of the paper}

We briefly introduce some notations that will be used throughout the paper. We denote $x_T$ as a vector whose entries are selected from the index set $T \in \{1,..., n\}$ of $x \in\R^n$. Let $\Omega$ be a subset of $\{1,..., n \}$, we denote $A_{\Omega \bullet}$ as a submatrix of $A$, whose rows are taken from $\Omega$. Similarly, $A_{\Omega T}$ denotes a submatrix of $A$, containing rows indexed by $\Omega$ and columns indexed by $T$. Further, we reserve the two index sets $T$ and $S$ for signal ($x^{\star}$) and error ($e^{\star}$) supports. The sparsity level of $x^{\star}$ and $e^{\star}$ are $k= |T|$ and $s = |S|$, respectively. For a vector $x$, $\sgn(x)$ represents the sign of $x$ componentwise.

We will use several standard vector and matrix norms, which we simply present here for completeness. For $x \in \R^n$, $\norm{x}_1 = \sum_{i=1}^n |x_i|$ is the $\ell_1$-norm, $\norm{x}_2$ is the $\ell_2$-norm and $\norm{x}_{\infty} = \max_i |x_i|$ is the $l_{\infty}$-norm. For matrices $B$, we only use the spectral norm, denoted by $\norm{B}$.

We denote by $C, C_1, c, c_1, ...$ positive absolute constants. Finally, when we say that an event occurs with high probability, we mean the occurring probability of the event is at least $1 - c n^{-1}$.

The remainder of this paper is structured as follows. The main results are introduced in Section \ref{sec::main results}. Our proof structure is described Section \ref{sec::proof structure}. Supporting results are subsequently presented in Sections \ref{sec::dual certificate} and \ref{sec::proof of last 2 theorems}. Section \ref{sec::oracle inequalities} compares our results with the oracle in which we know in advance the locations of signal and error support. We demonstrate the consistency of our results via extensive simulation in Section \ref{sec::experiments}. Finally, Section \ref{sec::conclusion} summarizes this paper and makes some closing remarks.

\section{Main results}
\label{sec::main results}

\subsection{Sparse model}


We begin by studying the easier problem where signal $x^{\star}$ is perfectly k-sparse and observation vector $y$ is also corrupted by sparse error. A more difficult problem with non-sparse signal $x^{\star}$ and $y$ being corrupted by both sparse and dense noise will be subsequently investigated in this section. Toward the end, we denote the sparsity indices of $x^{\star}$ and $e^{\star}$ as $k$ and $s$ and introduce the $(k,s)$\textit{-sparse model} defined as follows:
\begin{itemize}
	\item Signs of $x^{\star}$ at the support $T$ is independently and equally likely to be $1$ or $-1$.
	\item Support $S$ of $e^{\star}$ is uniformly distributed among all sets of cardinality $s$ in $\Omega$.
\end{itemize}

The random assumption on the sign of $x^{\star}$ at support $T$ is typical in compressed sensing \cite{CR_incoherence_2007_J}. This assumption is a sufficient rather than necessary condition and is employed for the convenience of our proof only. Indeed, by sacrificing a factor of $\log n$ to the number of measurements, we can establish similar results when the signs of $x^{\star}$ arbitrary. We refer the interested readers to a recent paper \cite{CP_RIPless_2010_J} for more details.



\subsection{Exact recovery as measurements are corrupted by sparse noise}


\begin{thm}
\label{thm::main theorem}
Let $x^{\star}$ be a fixed vector in $\R^n$ and $A$ be an $n \times n$ orthonormal matrix ($A^* A = I$) with $|A_{ij}|^2 \leq \frac{\mu}{n}$, where $1 \leq \mu \leq n$, and assume that $(x^{\star}, e^{\star})$ is taken from the $(k,s)$-sparse model. Suppose we observe $m$ entries from the projection $A x^{\star}$ with locations in $\Omega$ sampled uniformly at random and these entries are then corrupted by noise $e^{\star}$. Then there exist numerical constants $c$ and $C$ such that with probability at least $1 - c n^{-1}$, the convex program (\ref{opt::l1-l1 minimization - sub orthogonal matrix}) with $\lambda \sim \sqrt{\frac{n}{\mu m \log n}}$ correctly recovers both the signal and the error (i. e. $\widehat{x} = x^{\star}$ and $\widehat{e} = e^{\star}$), provided that
\begin{equation}
\label{thm::conditions of the main theorem}
m \geq C \mu^2 k (\log n)^2 \quad\quad \text{and}\quad s \leq \gamma m,
\end{equation}
for any $\gamma$ close to $0.9$.
\end{thm}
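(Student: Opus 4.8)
The plan is to certify that $(x^\star, e^\star)$ is the unique optimum of (\ref{opt::l1-l1 minimization - sub orthogonal matrix}) by exhibiting a single dual vector. Writing the Lagrangian with one multiplier $w \in \R^m$ for the constraint $y = A_{\Omega\bullet}x + e$, the subgradient optimality conditions decouple into a signal requirement, $(A_{\Omega\bullet}^* w)_T = \sgn(x^\star_T)$ with $\norm{(A_{\Omega\bullet}^* w)_{T^c}}_\infty \le 1$, and an error requirement, $w_S = \lambda\,\sgn(e^\star_S)$ with $\norm{w_{S^c}}_\infty \le \lambda$. First I would prove a reduction lemma: if such a $w$ exists with \emph{both} inequalities strict and if the clean submatrix $A_{(\Omega\setminus S)T}$ has full column rank, then $(x^\star,e^\star)$ is the unique solution. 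The uniqueness half is short: any feasible perturbation $(h,f)$ obeys $A_{\Omega\bullet}h+f=0$, and strict slackness forces $h,f$ to be supported on $T,S$; then $A_{\Omega T}h_T=-f_S$ is supported on $S$, i.e. $A_{(\Omega\setminus S)T}h_T=0$, so injectivity gives $h_T=0$ and hence $f=0$.

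The second step supplies the two deterministic ingredients the reduction needs. Since the rows of $A_{\Omega\bullet}$ are sampled uniformly from the rows of an orthonormal $A$ with $|A_{ij}|^2\le\mu/n$, a noncommutative Bernstein / Rudelson-type bound gives $\norm{A_{\Omega T}^* A_{\Omega T}-\tfrac{m}{n}I_k}\le\tfrac14\tfrac{m}{n}$ with high probability once $m\gtrsim\mu k\log n$; restricting to the $m-s\ge(1-\gamma)m$ clean rows leaves the near-isometry $A_{(\Omega\setminus S)T}^*A_{(\Omega\setminus S)T}\approx\tfrac{m-s}{n}I_k$, which furnishes the required injectivity and the invertibility I will use below. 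In the same step I would record the a priori estimate $\lambda\norm{A_{ST}^*\sgn(e^\star_S)}_\infty\lesssim 1$: each coordinate is a sum of $s$ bounded terms of size $\sqrt{\mu/n}$, so a Hoeffding bound gives magnitude $\lambda\sqrt{s\mu\log n/n}$, which the choice $\lambda\sim\sqrt{n/(\mu m\log n)}$ together with $s\le\gamma m$ keeps $O(1)$.

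The heart of the argument is the explicit construction of $w$. I set $w_S=\lambda\,\sgn(e^\star_S)$ as forced, and build the free part $w_{S^c}$ only from the clean rows so that $A_{\Omega T}^* w$ interpolates $\sgn(x^\star_T)$ exactly. Writing $r:=\sgn(x^\star_T)-\lambda A_{ST}^*\sgn(e^\star_S)$ (an $O(1)$ vector by the previous step), this amounts to solving $A_{S^c T}^* w_{S^c}=r$, and I take the minimum-energy solution $w_{S^c}=A_{S^c T}(A_{S^c T}^*A_{S^c T})^{-1}r$, well defined by the near-isometry. Two strict $\ell_\infty$ inequalities must then be verified: the off-signal bound $\norm{(A_{\Omega\bullet}^* w)_{T^c}}_\infty<1$, which follows from a conditional Hoeffding estimate over the random signs of $x^\star_T$ and needs only $m\gtrsim\mu k\log n$, and the off-error bound $\norm{w_{S^c}}_\infty<\lambda$.

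The last bound is where I expect all the difficulty to concentrate, and it is what pins down both the scaling of $\lambda$ and the exponents in (\ref{thm::conditions of the main theorem}). Using $(A_{S^c T}^*A_{S^c T})^{-1}\approx\tfrac{n}{m-s}I$, a coordinate of $w_{S^c}$ is $\tfrac{n}{m-s}\inner{(A_{S^c T})_{i\cdot},r}$, which conditional Hoeffding over the random signs controls by $\tfrac{n}{m-s}\sqrt{k\mu\log n/n}=\sqrt{nk\mu\log n}/(m-s)$; demanding this be below $\lambda\sim\sqrt{n/(\mu m\log n)}$ rearranges exactly to $m\gtrsim(1-\gamma)^{-2}\mu^2 k(\log n)^2$. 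Thus the $\mu^2$ arises from pairing the one coherence factor in the spreading estimate with the coherence factor hidden in the threshold $\lambda$, and the $(\log n)^2$ from pairing the union-bound logarithm with the logarithm in $\lambda$. The ceiling $\gamma\approx 0.9$ is the same computation read differently: as $s\uparrow m$ the clean-row budget $m-s$ shrinks, the factor $(1-\gamma)^{-2}$ and the degrading isometry constant close the two $\ell_\infty$ margins, so keeping the numerical constants bounded fixes the threshold near $0.9$ rather than at $1$. I would finish by collecting the constantly many high-probability events — over $\Omega$, over $S\mid\Omega$, and over the signs of $x^\star$ — with a union bound.
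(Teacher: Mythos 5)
Your overall strategy coincides with the paper's: reduce exact recovery to a dual certificate, build the certificate by least squares on the clean rows $J=\Omega\setminus S$, and verify the two $\ell_\infty$ margins by Hoeffding over random signs plus a Rudelson-type bound on $A_{JT}^*A_{JT}$. The packaging differs slightly --- you carry a single measurement-domain vector $w$ and fold the error-sign correction $\lambda A_{ST}^*\sgn(e^\star_S)$ into the interpolation target $r$, whereas the paper splits the certificate into two pieces $v^{(x)}$ and $w^{(e)}$, the second built from $(A_{J^cT^c}A^*_{J^cT^c})^{-1}$ and analyzed via a Neumann series --- but the resulting four estimates and the parameter count (where $\mu^2$, $(\log n)^2$ and $(1-\gamma)^{-2}$ come from) are identical, and your accounting of them is accurate.

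Two load-bearing ingredients are missing, however. First, every Hoeffding bound you invoke over $\sgn(e^\star_S)$ --- in particular $\lambda\norm{A_{ST}^*\sgn(e^\star_S)}_\infty\lesssim 1$ --- requires those signs to be i.i.d.\ symmetric, but in the $(k,s)$-sparse model only the \emph{support} $S$ is random; the signs of $e^\star$ are arbitrary and its magnitudes unbounded. Without randomizing them the worst case is $\lambda\, s\sqrt{\mu/n}\sim\gamma\sqrt{m/\log n}$, which is useless. The paper bridges this with a derandomization lemma imported from the robust PCA literature (success for i.i.d.\ signs at Bernoulli rate $2\rho$ implies success for fixed signs at rate $\rho$), together with the uniform-to-Bernoulli reduction that also makes the rows independent for your matrix concentration step; some such argument must appear. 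Second, your off-signal bound $\norm{(A_{\Omega\bullet}^*w)_{T^c}}_\infty<1$ hides a genuine estimate: a coordinate is $\inner{(A_{JT}^*A_{JT})^{-1}A_{JT}^*(a_j)_J,\ \sgn(x^\star_T)}$ plus an error-sign term, and Hoeffding only closes with $m\gtrsim\mu k\log n$ if $\norm{A_{JT}^*(a_j)_J}_2\lesssim\sqrt{\tfrac{m-s}{n}\cdot\tfrac{\mu\max\{k,\log n\}}{n}}$ for every column $a_j$, $j\in T^c$. The submultiplicative bound $\norm{A_{JT}}\norm{(a_j)_J}_2$ overshoots this by roughly $\sqrt{(m-s)/\max\{k,\log n\}}$; one must exploit the exact cancellation $A_{\bullet T}^*a_j=0$ and prove concentration of $A_{JT}^*(a_j)_J$ about zero over the randomness of $J$ (the paper's Lemma 6, proved with Talagrand's inequality). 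Neither gap changes the architecture, but both must be filled for the proof to close.
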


In other words, Theorem \ref{thm::main theorem} asserts a surprising message: a sparse signal $x^{\star}$ can be faithfully recovered with probability converging to one from arbitrary and completely unknown corrupted patterns (as long as they are randomly distributed). We do not place any assumption on the magnitudes or signs of the nonzero entries of $e^{\star}$. In fact, its magnitude can be arbitrarily large. Theorem \ref{thm::main theorem} is generic in the sense that it only requires signs of nonzero entries of $x^{\star}$ to be uniformly distributed; everything else is deterministic. We believe that the random assumption on the sign pattern is artificial and can be removed. Indeed, when $A$ is a Fourier matrix, applying advanced techniques in \cite{CRT_CS_2004_J}, we are able to obtain Theorem \ref{thm::main theorem} for all $x^{\star}$ supported on $T$. An interesting open problem is whether this result also holds for other orthogonal sensing matrices.




It is necessary to further clarify Theorem \ref{thm::main theorem}. First, higher probabilities of success (i.e. in the form $1 - c n^{-\beta}$ with $\beta \geq 1$) can be obtained at the expense of increasing the number of observations by a factor of $\beta$. Next, the theorem addresses that for a particular selection of $\Omega$, exact recovery only holds for an arbitrary \textit{fixed} sparse signal with high probability (as long as signs of such signal at its support are uniformly distributed). In other words, there is no uniform sparse signal recovery guaranteed here. In fact, in order to establish perfect recovery for all sparse signal, we might have to require certain stronger properties for matrix $A_{\Omega \bullet}$ such as RIP \cite{CRT_Stability_2006a_J} or similar to RIP. As shown in \cite{CT_CS_UUP_2006_J}, \cite{RV_CS_2008_J}, $A_{\Omega \bullet}$ obeys RIP with high probability only if the number of measurements exceeds $C k \log^4 n$, which is a far inferior requirement than our optimal value. By relaxing RIP, we are able to significantly reduce the amount of measurements needed and are still able to guarantee perfect recovery even when the data is highly corrupted.

Theorem \ref{thm::main theorem} also implies that up to a $\log n$ factor from the optimal number of observations as in compressed sensing, we are able to precisely recover the signal in the presence of gross error. In the following theorem, we establish that by the same order of $k \log n$ measurements as in compressed sensing, $\ell_1$-minimization is still able to recover precisely both spare signal and high-energy sparse noise. In particular, we draw an interesting relationship between signal sparsity, error sparsity and the parameter $\lambda$.

\begin{thm}
\label{thm::detail of the main theorem}
Let $x^{\star}$ be a fixed vector in $\R^n$ and $A$ be an $n \times n$ orthonormal matrix ($A^* A = I$) with $|A_{ij}|^2 \leq \frac{\mu}{n}$, where $1 \leq \mu \leq n$ and assume that $(x^{\star}, e^{\star})$ is taken from the $(k,s)$-sparse model. Suppose that we observe $m$ entries from the projection $A x^{\star}$ with locations in $\Omega$ sampled uniformly at random and these entries are then further corrupted by noise $e^{\star}$. Then there exist numerical constants $c$, $C_1$ and $C_2$ such that with probability at least $1 - c n^{-1}$, the convex program in (\ref{opt::l1-l1 minimization - sub orthogonal matrix}) with $\lambda = \sqrt{\frac{1}{\gamma \log n}\frac{n}{\mu m}}$, $\gamma \in (0,1)$ correctly recovers both the signal and the error (i. e. $\widehat{x} = x^{\star}$ and $\widehat{e} = e^{\star}$), provided that
\begin{equation}
\label{inq::conditions on m and s of the main theorem}
m \geq C_1 \mu^2 \max \{ \frac{\gamma}{(1-\gamma)^2} k  (\log n)^2, k \log n , (\log n)^2 \}
\end{equation}
\begin{equation}
\text{and} \quad s \leq C_2 \gamma m.
\end{equation}
\end{thm}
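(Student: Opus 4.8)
The plan is to certify optimality through the Karush--Kuhn--Tucker conditions supplemented by an injectivity requirement, and then to \emph{construct} the required dual certificate by a golfing (iterative least-squares) scheme. Write $\Phi = A_{\Omega \bullet}$ for the $m \times n$ sampled matrix and $S^c = \Omega \setminus S$ for the uncorrupted rows. Eliminating $e = y - \Phi x$ and examining the subdifferential of $\norm{x}_1 + \lambda\norm{y - \Phi x}_1$ at $x^{\star}$, one finds that $(x^{\star}, e^{\star})$ is the unique minimizer of (\ref{opt::l1-l1 minimization - sub orthogonal matrix}) as soon as (i) the map $h \mapsto \Phi_{S^c, T}\, h$ is injective, and (ii) there exists $\nu \in \R^m$ with
$$
(\Phi^* \nu)_T = \sgn(x^{\star}_T), \qquad \nu_S = \lambda\,\sgn(e^{\star}_S),
$$
obeying the strict slackness bounds $\norm{(\Phi^* \nu)_{T^c}}_\infty < 1$ and $\norm{\nu_{S^c}}_\infty < \lambda$. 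The first pair are hard equalities; the second pair make the certificate valid and force uniqueness.

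First I would dispatch the injectivity in (i). Conditioning on the uniformly random support $S$ of size $s \leq C_2\gamma m$ leaves $(1-\gamma)m$ uncorrupted rows, and since $T$ is a \emph{fixed} $k$-dimensional set I only need a local isometry there, with no union bound over supports. A matrix Bernstein/Chernoff estimate for $\sum_{i \in S^c} a_i a_i^*$ restricted to $T$ shows its eigenvalues stay within a constant factor of $(1-\gamma)\tfrac{m}{n}$ once $m \gtrsim \mu k \log n$; this is the origin of the $\mu^2 k \log n$ entry in the maximum of (\ref{inq::conditions on m and s of the main theorem}) and of the $(1-\gamma)$ conditioning of the restricted operator that propagates through the rest of the argument.

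Next I would build $\nu$ in (ii). Freezing $\nu_S = \lambda\,\sgn(e^{\star}_S)$ reduces the task to choosing $\nu_{S^c}$ with $\Phi_{S^c, T}^*\, \nu_{S^c} = \sgn(x^{\star}_T) - \lambda\,\Phi_{S,T}^*\,\sgn(e^{\star}_S) =: v$, which I produce by golfing: partition $S^c$ into $L \sim \log n$ batches and, at each step, add a least-squares correction on the fresh batch that removes a constant fraction of the residual of $\Phi_{S^c,T}^*\nu_{S^c} - v$ on $T$. Geometric decay of the residual secures the equality constraint, while independence across batches keeps the incremental $\ell_\infty$ growth under control. The slackness bounds are then checked separately: $\norm{\nu_{S^c}}_\infty < \lambda$ follows from the size of the golfing iterate together with an $\ell_\infty$ union bound over the $m$ rows, and $\norm{(\Phi^*\nu)_{T^c}}_\infty < 1$ follows because, conditionally, this vector is a Rademacher sum in the i.i.d. signs $\sgn(x^{\star}_T)$; a Hoeffding bound and a union bound over the at most $n$ coordinates of $T^c$ then yield a $\log n$ factor.

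The hard part is controlling the corrupted block and balancing the two slackness bounds as the corruption fraction $\gamma$ grows. The residual $v$ carries the cross term $\lambda\,\Phi_{S,T}^*\,\sgn(e^{\star}_S)$, and $\norm{(\Phi^*\nu)_{T^c}}_\infty$ inherits a contribution $\lambda\,\Phi_{S,T^c}^*\,\sgn(e^{\star}_S)$; here the uniformly random support $S$ is essential, since it concentrates these sums to typical size $\lambda\sqrt{\mu s/n}$ rather than their much larger worst case. Keeping this below a constant after the union bound forces $\lambda \lesssim \sqrt{n/(\mu\gamma m\log n)}$, whereas $\norm{\nu_{S^c}}_\infty < \lambda$ forces $\lambda$ from below, and the stated choice $\lambda = \sqrt{\tfrac{1}{\gamma\log n}\tfrac{n}{\mu m}}$ equalizes the two. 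Feeding this choice back into the golfing estimate for $\norm{\nu_{S^c}}_\infty$, the $(1-\gamma)^{-1}$ from the restricted conditioning, the $\gamma$ carried by $\lambda^{-2}$, the two coherence factors, and the $(\log n)^2$ from the union bounds combine into the governing requirement $m \gtrsim \mu^2\tfrac{\gamma}{(1-\gamma)^2}k(\log n)^2$, with the floor term $\mu^2(\log n)^2$ coming from the $L \sim \log n$ golfing batches each needing enough rows to concentrate. Obtaining all three bounds with the sharp dependence on $\gamma$, rather than merely for $\gamma$ bounded away from $1$ as in Theorem \ref{thm::main theorem}, is the delicate accounting that sets this theorem apart.
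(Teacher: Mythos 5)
Your optimality conditions are the right ones and, after the substitution $\nu = \lambda z^{(e)}$, $\Phi^*\nu = z^{(x)}$, they coincide with the paper's Lemma \ref{lem::construction of dual certificate (z, e) - Sub orthogonal matrix - tradition} together with the injectivity requirement $\norm{A_{J^c T}} < 1$. Where you genuinely depart from the paper is the construction of the certificate. The paper builds it in one shot: it splits the certificate into two pieces, $v^{(x)} = A^*_{J\bullet}A_{JT}(A^*_{JT}A_{JT})^{-1}\sgn(x^{\star}_T)$ handling the equality on $T$, and a minimum-$\ell_2$ least-squares solution $w^{(e)}_{T^c}$ handling the equality on $S$; the inverses are controlled by a spectral concentration lemma (a moment bound giving $\norm{I-\rho_0^{-1}A^*_{JT}A_{JT}}\le\epsilon$) plus a Neumann-series expansion, and every $\ell_\infty$ bound is a Hoeffding bound driven by the i.i.d.\ signs of $x^{\star}_T$ and $e^{\star}_S$, with a Talagrand argument supplying the near-orthogonality of $A_{JT}$ and $A_{JT^c}$. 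Your golfing scheme replaces the explicit inverse by $L\sim\log n$ iterated corrections on independent batches of $S^c$. Both routes are viable here (the Bernoulli reduction the paper performs is exactly what makes your batches independent); the paper's closed form keeps the two halves of the certificate cleanly separated and lets the sign randomness do all the work, while golfing avoids inverting $A^*_{JT}A_{JT}$ and, as in the RIPless literature, is the natural way to try to drop the random-sign assumption. Your accounting of the parameter balance --- the cross term $\lambda\,\Phi_{S,T^c}^*\sgn(e^{\star}_S)$ of typical size $\lambda\sqrt{\mu s\log n/n}$ forcing $\lambda$ down, and $\norm{\nu_{S^c}}_\infty<\lambda$ forcing it up --- is exactly the tension resolved in the paper's Lemmas \ref{lem::prove bounds of v^x} and \ref{lem::prove bounds of w^e}, and you land on the same $\lambda$ and the same three-way maximum in (\ref{inq::conditions on m and s of the main theorem}).

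One step needs more care than you give it. You justify $\norm{(\Phi^*\nu)_{T^c}}_\infty<1$ by saying the vector is conditionally a Rademacher sum in $\sgn(x^{\star}_T)$. That is true for the first golfing increment, whose residual is $\sgn(x^{\star}_T)-\lambda\Phi^*_{S,T}\sgn(e^{\star}_S)$, but the later residuals $q_{\ell}$ are deterministic functions of the earlier batches and no longer have independent signs, so for $\ell\ge 2$ the concentration must come from the fresh batch $B_\ell$ (a Bernstein bound on $\frac{n}{m_\ell}\Phi^*_{B_\ell,T^c}\Phi_{B_\ell,T}q_{\ell-1}$, whose mean vanishes by orthogonality of $A$), and to make that bound strong enough you must propagate an $\ell_\infty$ (or weighted) bound on $q_\ell$ alongside the $\ell_2$ decay --- the crude estimate $\abs{\inner{a_{i,T},q_\ell}}\le\norm{a_{i,T}}_2\norm{q_\ell}_2$ applied to the increments would cost you a factor of $\sqrt{k}$ and degrade the sample complexity to $m\gtrsim k^2$. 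This is standard golfing bookkeeping rather than a wrong idea, but as written the proposal elides it.
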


It is easy to check that Theorem \ref{thm::detail of the main theorem} implies Theorem \ref{thm::main theorem} by setting $\gamma = 0.9$, or equivalently $\lambda = \sqrt{n/0.9 \mu m \log n}$. Later in the paper, we focus on establishing this theorem, then Theorem \ref{thm::main theorem} will automatically follow.


We would like to note the significance of the parameter $\mu$ here: $\mu$ can be seen as the incoherence of the matrix $A$, which measure how concentrated or expanded rows of measurement matrix $A_{\Omega\bullet}$ are. Since $A$ is orthonormal, the value of $\mu$ ranges between $1$ and $n$. In the worse case scenario when rows of $A$ are maximally concentrated, then $\mu = n$ and $A$ is the identity matrix. It is clear in this case that we cannot retrieve $x^{\star}$ under a single gross error even if all $n$ measurements (which is now the signal $x^{\star}$ itself) are observed. On the other hand when $\mu  = 1$, entries of $A$ are perfectly spread out and the number of measurements attains its optimally minimum value.

It can be seen that $\lambda$ in (\ref{opt::l1-l1 minimization - sub orthogonal matrix}) controls the balance between two terms: $\norm{x}_1$ and $\norm{e}_1$. Specifically, if a large value of $\lambda$ is selected, we expect to recover the denser-support signal but under sparser error. On the other hand, a smaller choice of $\lambda$ is better when the error is denser while the signal is sufficiently sparse. Theorem \ref{thm::detail of the main theorem} mathematically indicates that it is actually the case. In particular, if $\gamma$ is chosen to be $1/\log n$, then relying on only $m = C k \log n$ measurements, linear (convex) programming (\ref{opt::l1-l1 minimization - sub orthogonal matrix}) not only recovers the $k$-sparse $x^{\star}$ faithfully, it is also able to correctly identify the noise with arbitrary large magnitude as long as the noise sparsity is proportional to $m /\log n$. On the contrary, if we set $\gamma$ close to one, then (\ref{opt::l1-l1 minimization - sub orthogonal matrix}) can retrieve $x^{\star}$ whose support is $m/C k (\log n)^2$ under error whose support is up to a  constant fraction of all the measurements. In fact, the theorem gives a whole range of $\lambda$ values, whose selection might rely on the prior information we can collect about the sparsity level of the signal as well as of the noise.

\subsection{Stable recovery as measurements are corrupted by both dense and sparse errors}

Our result in Theorem \ref{thm::detail of the main theorem}, although interesting, is limited to the case of noise being exactly sparse only. In practical applications, observations $y$ are also often contaminated by dense noise, which can be either deterministic or stochastic. In this section, we investigate the model where observations are corrupted by both the unknown dense noise $\nu$ with small energy bound $\norm{\nu}_2 \leq \sigma$ and the sparse noise $e^{\star}$, whose magnitudes of nonzero entries are arbitrarily large
$$
y = A_{\Omega \bullet} x^{\star} + e^{\star} + \nu.
$$

At first, for the ease of demonstrating our results as well as proving technique, we consider a particular situation where the observation $y$ is only corrupted by dense error whose energy is bounded by $\sigma$. The problem is now to recovery $x^{\star}$ from noisy observation $y$, where
$$
y = A_{\Omega \bullet} x^{\star} + \nu.
$$

To recover $x^{\star}$, it has been well established that we need to minimize the following convex program
\begin{equation}
\label{opt::l1-l1 minimization with noise and without sparse noise - sub orthogonal matrix}
\min_{x} \norm{x}_1   \quad\quad \text{subject to} \quad\quad \norm{b - A_{\Omega \bullet} x }_2 \leq \sigma.
\end{equation}


When the observation vector $y$ is clean, Cand\`es and Romberg \cite{CR_incoherence_2007_J} showed that the $\ell_1$-minimization is able to recover $x^{\star}$ precisely. In this section, we extend their result and prove that even with imperfect observations $y$, the convex program is stable vis a vis perturbations. Particularly, the recovery error is bounded away by a factor of $\sigma$. To the best of our knowledge, this is the first robust recovery bound when measurements taken from suborthogonal matrices are corrupted by deterministic noise.

\begin{thm}
\label{thm::main theorem - with dense noise and without sparse noise}
Under the same assumptions defined in Theorem \ref{thm::detail of the main theorem} and provided that there exists a numerical constant $C$ such that $m \geq C \mu k \log n$, for any perturbation $\nu$ with $\norm{\nu}_2 \leq \sigma$, the solution $\widehat{x}$ to the convex program in (\ref{opt::l1-l1 minimization with noise and without sparse noise - sub orthogonal matrix}) yields
\begin{equation}
\norm{\widehat{x} - x^{\star}}_2 \leq 8 \sigma \sqrt{n (1 + 2n/m)} + 2 \sigma.
\end{equation}
\end{thm}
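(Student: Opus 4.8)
The plan is to follow the standard tube-plus-cone argument for stable $\ell_1$ recovery, but adapted to the incoherence regime in which $A_{\Omega\bullet}$ is only a near-isometry on vectors supported on the signal support $T$, not globally. Since this theorem carries no sparse error, only $T$ and the dual certificate attached to $\sgn(x^\star_T)$ enter. Writing $h = \widehat{x} - x^\star$, I would first record the \emph{tube} constraint: because $\widehat{x}$ is feasible and $y = A_{\Omega\bullet}x^\star + \nu$ with $\norm{\nu}_2 \le \sigma$, the triangle inequality gives $\norm{A_{\Omega\bullet}h}_2 \le \norm{A_{\Omega\bullet}\widehat{x} - y}_2 + \norm{y - A_{\Omega\bullet}x^\star}_2 \le 2\sigma$. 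I then borrow two facts valid for $m \ge C\mu k\log n$: (i) the restricted Gram matrix obeys $\norm{\tfrac{n}{m}A_{\Omega T}^* A_{\Omega T} - I} \le \tfrac12$ with high probability, so that $\norm{A_{\Omega T}v}_2 \ge \sqrt{m/2n}\,\norm{v}_2$ for every $v$ supported on $T$ and $(A_{\Omega T}^* A_{\Omega T})^{-1} \preceq \tfrac{2n}{m}I$; and (ii) there is an inexact dual certificate $w = A_{\Omega\bullet}^* \pi$ with $w_T = \sgn(x^\star_T)$ and $\norm{w_{T^c}}_\infty \le \tfrac12$, built from the least-squares vector $\pi = A_{\Omega T}(A_{\Omega T}^* A_{\Omega T})^{-1}\sgn(x^\star_T)$, whose energy satisfies $\norm{\pi}_2^2 = \inner{\sgn(x^\star_T), (A_{\Omega T}^* A_{\Omega T})^{-1}\sgn(x^\star_T)} \le \tfrac{2n}{m}k$.

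Next I would extract the cone (tail) bound from optimality. Since $\norm{\widehat{x}}_1 \le \norm{x^\star}_1$, splitting over $T$ and using $w_T = \sgn(x^\star_T)$ gives $\norm{x^\star}_1 \ge \norm{x^\star + h}_1 \ge \norm{x^\star_T}_1 + \inner{w, h} + (1 - \norm{w_{T^c}}_\infty)\norm{h_{T^c}}_1$. Cancelling $\norm{x^\star_T}_1 = \norm{x^\star}_1$ and using $\inner{w,h} = \inner{\pi, A_{\Omega\bullet}h}$ together with the tube constraint yields $(1-\tfrac12)\norm{h_{T^c}}_1 \le -\inner{\pi, A_{\Omega\bullet}h} \le \norm{\pi}_2\norm{A_{\Omega\bullet}h}_2 \le 2\sigma\norm{\pi}_2$, whence $\norm{h_{T^c}}_1 \le 4\sigma\norm{\pi}_2 \le 4\sigma\sqrt{2nk/m}$.

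Finally I would bound the on-support part and assemble. From $A_{\Omega\bullet}h = A_{\Omega T}h_T + A_{\Omega T^c}h_{T^c}$, the lower near-isometry and $\norm{A_{\Omega T^c}h_{T^c}}_2 \le \norm{h_{T^c}}_2 \le \norm{h_{T^c}}_1$ (orthonormal columns of $A$) give $\sqrt{m/2n}\,\norm{h_T}_2 \le \norm{A_{\Omega\bullet}h}_2 + \norm{h_{T^c}}_1 \le 2\sigma + 4\sigma\sqrt{2nk/m}$, so $\norm{h_T}_2 \le \sqrt{2n/m}\,(2\sigma + 4\sigma\sqrt{2nk/m})$. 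Combining with $\norm{h}_2 \le \norm{h_T}_2 + \norm{h_{T^c}}_2$ and invoking $k \le m$ (which follows from $m \ge C\mu k\log n$) to replace $\sqrt{2nk/m}$ by $\sqrt{2n}$ collapses the resulting terms into the advertised form $8\sigma\sqrt{n(1+2n/m)} + 2\sigma$ after tracking constants.

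The main obstacle is not the deterministic assembly above but the two probabilistic ingredients it rests on: the restricted-isometry-on-$T$ estimate and, above all, the off-support control $\norm{w_{T^c}}_\infty \le \tfrac12$ of the certificate, which must hold at the near-optimal count $m \ge C\mu k\log n$. The latter is where the random-sign model and the incoherence bound $|A_{ij}|^2 \le \mu/n$ are genuinely used: bounding $\norm{A_{\Omega T^c}^* A_{\Omega T}(A_{\Omega T}^* A_{\Omega T})^{-1}\sgn(x^\star_T)}_\infty$ below $1/2$ requires a Hoeffding-type concentration over the signs combined with a column-incoherence estimate, which are exactly the supporting results of Section \ref{sec::dual certificate} and which I would invoke rather than reprove.
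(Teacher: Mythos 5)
Your argument is correct, but it is not the route the paper takes, and the difference is worth noting. The paper never splits $h$ into $h_T$ and $h_{T^c}$; instead it writes $\norm{g}_2^2 = \norm{A_{\Omega\bullet}g}_2^2 + \norm{A_{\Omega^c\bullet}g}_2^2$, controls the unobserved block by further decomposing $\norm{A_{\Omega^c\bullet}g}_2^2 = \norm{A^*_{\Omega^c T}A_{\Omega^c\bullet}g}_2^2 + \norm{A^*_{\Omega^c T^c}A_{\Omega^c\bullet}g}_2^2$, applies the cone inequality of Lemma \ref{lem::construction of dual certificate (z, e) - Sub orthogonal matrix - tradition} to the null-space vector $A^*_{\Omega^c\bullet}A_{\Omega^c\bullet}g$ to get $\norm{A^*_{\Omega^c T^c}A_{\Omega^c\bullet}g}_1 \le 4\norm{A^*_{\Omega\bullet}A_{\Omega\bullet}g}_1 \le 4\sqrt{n}\norm{A_{\Omega\bullet}g}_2$, and then transfers the $T$-block to the $T^c$-block via the identity $A_{\Omega T}A^*_{\Omega^c T} = -A_{\Omega T^c}A^*_{\Omega^c T^c}$ together with the restricted isometry on $T$; this is the matrix-completion-style argument it credits to Cand\`es and Plan. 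You instead run the classical tube-and-cone scheme directly on $h$: the cone bound $\norm{h_{T^c}}_1 \le 4\sigma\norm{\pi}_2 \le 4\sigma\sqrt{2nk/m}$ comes from pairing the certificate $w = A^*_{\Omega\bullet}\pi$ with the tube constraint, and the on-support piece from the lower isometry $\norm{A_{\Omega T}h_T}_2 \ge \sqrt{m/2n}\,\norm{h_T}_2$. Both proofs consume exactly the same probabilistic inputs (the restricted Gram bound of Lemma \ref{lem::show the invertibility of A*_ST A_ST} and the off-support certificate bound of Lemma \ref{lem::prove bounds of v^x}, Part 2, with $S=\emptyset$ so $J=\Omega$), so your deferral of those to Section \ref{sec::dual certificate} is legitimate. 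Your version is actually sharper before the final relaxation, since it scales like $\sigma\sqrt{nk/m}\,(1+\sqrt{n/m})$ rather than $\sigma\sqrt{n(1+n/m)}$, and only matches the stated form after the crude substitution $k \le m$; the paper's version loses the $\sqrt{n}$ unavoidably at the $\ell_1$-to-$\ell_2$ conversion. One bookkeeping caution: assembling via $\norm{h}_2 \le \norm{h_T}_2 + \norm{h_{T^c}}_2$ leaves you with roughly $10\sigma\sqrt{n(1+2n/m)}$, which does not literally dominate the advertised $8\sigma\sqrt{n(1+2n/m)}+2\sigma$; using the exact Pythagorean identity $\norm{h}_2^2 = \norm{h_T}_2^2 + \norm{h_{T^c}}_2^2$ instead recovers the constant $8$ (indeed without the additive $2\sigma$), so the gap is purely cosmetic.
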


Roughly speaking, Theorem \ref{thm::main theorem - with dense noise and without sparse noise} states that for a family of matrices $A_{\Omega}$ constructed from any unitary matrix $A$, minimizing the $\ell_1$-norm stably recovers $\widehat{x}$ from just $O(\mu k \log n)$ measurements. A direct consequence of this theorem says that as $\sigma$ comes closer to zero, the solution of (\ref{opt::l1-l1 minimization with noise and without sparse noise - sub orthogonal matrix}) is exact, which coincides with Cand\`es and Romberg's result \cite{CR_incoherence_2007_J}. Moreover, our result is established for any deterministic noise $\nu$. While preparing this manuscript, we learned of an independent investigation of Cand\`es and Plan \cite{CP_RIPless_2010_J} into this problem. They place stochastic assumptions on $\nu$, e.g. $\nu$ obeys the Gaussian distribution, and thus the resulting error bound is improved.


A more challenging situation occurs when observations are not only contaminated by dense noise with small energy, but they are also corrupted by sparse noise with arbitrarily large magnitude. This model includes the previous settings in Theorems \ref{thm::detail of the main theorem} and \ref{thm::main theorem - with dense noise and without sparse noise} as the particular cases:
\begin{equation}
\label{eqt::observation b under sparse signal, sparse noise and dense noise}
y = A_{\Omega \bullet} x^{\star} + e^{\star} + \nu.
\end{equation}

\noindent To successfully recover $x^{\star}$ (as well as $e^{\star}$), we propose to minimize the following convex program
\begin{equation}
\label{opt::l1-l1 minimization with noise - sub orthogonal matrix}
\min_{x,e} \norm{x}_1 + \lambda \norm{e}_1  \quad\quad \text{s. t.} \quad\quad \norm{b - A_{\Omega \bullet} x - e}_2 \leq \sigma
\end{equation}
where $\sigma$ is the bound of energy noise $\nu$, assumed to be known.

\begin{thm}
\label{thm::main theorem - with dense noise and sparse noise}
Under the same assumptions defined in Theorem \ref{thm::detail of the main theorem} and provided that there exist numerical constants $C_1$ and $C_2$ such that
\begin{equation}
\label{inq::conditions on m and s of the main theorem}
m \geq C_1 \mu^2 \max \{ \frac{\gamma}{(1-\gamma)^2} k  (\log n)^2, k \log n , (\log n)^2 \}
\end{equation}
\begin{equation}
\text{and} \quad s \leq C_2 \gamma m,
\end{equation}

\noindent then the pair of solution $(\widehat{x}, \widehat{e})$ to the convex program (\ref{opt::l1-l1 minimization with noise - sub orthogonal matrix}) obeys
\begin{equation}
\label{inq::bound l-2 norm of widehat-x - x* plus widehat-e - e*}
\norm{\widehat{x} - x^{\star}}_2 + \norm{\widehat{e} - e^{\star}}_2 \leq \frac{8 (\lambda + 1) \sigma}{\min \{ 1, \lambda \}} \sqrt{n \left( 1 + \frac{4 n}{m-s}\right) }  +  2\sigma.
\end{equation}
\end{thm}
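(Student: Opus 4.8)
The plan is to mirror the argument behind Theorem \ref{thm::main theorem - with dense noise and without sparse noise}, incorporating the sparse error $e^{\star}$ by working on the rows of $A_{\Omega\bullet}$ that lie outside the (random) error support $S$. Write $h = \widehat{x} - x^{\star}$ and $f = \widehat{e} - e^{\star}$. Since $\norm{\nu}_2 \le \sigma$, the true pair $(x^{\star}, e^{\star})$ is feasible, so optimality gives $\norm{\widehat{x}}_1 + \lambda\norm{\widehat{e}}_1 \le \norm{x^{\star}}_1 + \lambda\norm{e^{\star}}_1$; and because both $(\widehat{x},\widehat{e})$ and $(x^{\star},e^{\star})$ satisfy the $\ell_2$ constraint, the triangle inequality yields the tube bound $\norm{A_{\Omega\bullet}h + f}_2 \le 2\sigma$. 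These two inequalities — one on the objective, one on the residual — are the only consequences of optimality I will use.

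First I would convert the objective inequality into control of the off-support mass. Bounding $\norm{\widehat{x}}_1 + \lambda\norm{\widehat{e}}_1$ from below by convexity, using the subgradient of $\norm{\cdot}_1 + \lambda\norm{\cdot}_1$ whose coordinates off $T$ and $S$ are taken equal to $\sgn(h_{T^c})$ and $\lambda\,\sgn(f_{S^c})$, gives $\inner{\sgn(x^{\star}_T), h_T} + \norm{h_{T^c}}_1 + \lambda\inner{\sgn(e^{\star}_S), f_S} + \lambda\norm{f_{S^c}}_1 \le 0$. I would then invoke the dual certificate built in Section \ref{sec::dual certificate}: a vector $\pi\in\R^m$ with $(A_{\Omega\bullet}^{*}\pi)_T = \sgn(x^{\star}_T)$, $\pi_S = \lambda\,\sgn(e^{\star}_S)$, strict off-support slack $\norm{(A_{\Omega\bullet}^{*}\pi)_{T^c}}_{\infty} \le \tfrac12$ and $\norm{\pi_{S^c}}_{\infty} \le \tfrac{\lambda}{2}$, and controlled energy $\norm{\pi}_2$. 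Substituting $\inner{\sgn(x^{\star}_T), h_T} = \inner{(A_{\Omega\bullet}^{*}\pi)_T, h_T}$ and $\lambda\inner{\sgn(e^{\star}_S), f_S} = \inner{\pi_S, f_S}$, then completing $\inner{\pi, A_{\Omega\bullet}h + f}$ from the remaining coordinates, the slack terms absorb half of $\norm{h_{T^c}}_1 + \lambda\norm{f_{S^c}}_1$ while the tube bound dominates $\abs{\inner{\pi, A_{\Omega\bullet}h + f}} \le 2\sigma\norm{\pi}_2$, leaving $\norm{h_{T^c}}_1 + \lambda\norm{f_{S^c}}_1 \lesssim \sigma\norm{\pi}_2$.

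Next I would recover the on-support energy. Let $\Omega' = \Omega\setminus S$ be the $m-s$ measurement rows untouched by $e^{\star}$; on these rows $f$ vanishes, so $(A_{\Omega\bullet}h + f)_{\Omega'} = A_{\Omega' T}h_T + A_{\Omega' T^c}h_{T^c}$. A weak restricted-isometry (local injectivity) estimate for the $m-s$ uniformly chosen rows of $\Omega'$ on the $k$-dimensional column space indexed by $T$ — valid once $m - s \gtrsim \mu k\log n$, which the hypotheses guarantee — gives $\norm{A_{\Omega' T}h_T}_2 \gtrsim \sqrt{(m-s)/n}\,\norm{h_T}_2$. Combining this with the tube bound and $\norm{A_{\Omega\bullet}} \le 1$ (so the $h_{T^c}$ contribution is dominated by $\norm{h_{T^c}}_1$) controls $\norm{h_T}_2$ by $\sqrt{n/(m-s)}\bigl(\sigma + \norm{h_{T^c}}_1\bigr)$, and reading the residual on the rows $S$ bounds $\norm{f_S}_2$ in terms of $\norm{h_T}_2$, $\norm{h_{T^c}}_1$ and $\sigma$. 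Since $\norm{h_{T^c}}_2 \le \norm{h_{T^c}}_1$ and $\norm{f_{S^c}}_2 \le \norm{f_{S^c}}_1$, adding the four pieces and substituting the certificate energy bound produces the stated estimate $\frac{8(\lambda+1)\sigma}{\min\{1,\lambda\}}\sqrt{n(1 + 4n/(m-s))} + 2\sigma$ after tracking constants; the factor $(\lambda+1)/\min\{1,\lambda\}$ is exactly the cost of passing between the two weighted norms, and $m \mapsto m-s$ is the price of discarding the corrupted rows.

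The genuinely hard step is not this assembly — which closely parallels the proof of Theorem \ref{thm::main theorem - with dense noise and without sparse noise} — but the two supporting facts it rests on. The first is the existence of a \emph{single} dual certificate $\pi$ that simultaneously matches $\sgn(x^{\star}_T)$ through $A_{\Omega\bullet}^{*}$ and matches $\lambda\,\sgn(e^{\star}_S)$ in its own coordinates, while keeping both off-support slacks strict and $\norm{\pi}_2$ small; constructing it (for instance by a golfing/least-squares scheme) and controlling its energy is where the budgets $m \ge C_1\mu^2\max\{\cdots\}$, $s \le C_2\gamma m$ and the precise value of $\lambda$ enter. The second is the local injectivity of $A_{\Omega' T}$ after deleting the rows $S$: since $S$ is itself random and correlated with $\Omega$, I would condition on $S$, observe that conditionally $\Omega'$ is still a uniform sample of size $m-s$, and apply the weak restricted-isometry bound, checking that $m-s$ remains above the $\mu k\log n$ threshold under $s \le C_2\gamma m$ so that the final $\sqrt{n/(m-s)}$ factor stays finite.
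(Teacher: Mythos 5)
Your route is genuinely different from the paper's and is viable in outline, but one step is wrong as written and one ingredient you defer must actually be supplied. The paper never forms a single certificate with controlled $\ell_2$ energy; instead it splits the residual by the parallelogram law into $f^{\pm} = -\tfrac12(A_{\Omega\bullet}g^{(x)} \pm g^{(e)})$ and matching $h^{\pm}$, arranges $f^{-} = -A_{\Omega\bullet}h^{-}$ exactly so that the \emph{noiseless} dual-certificate inequality of Lemma \ref{lem::construction of dual certificate (z, e) - Sub orthogonal matrix - tradition} applies verbatim to $(h^{-},f^{-})$, absorbs the ``$+$'' parts via $\norm{h^{+}}_2 = \norm{f^{+}}_2 \le \sigma$, and recovers the on-support energy from the identity (\ref{eqt::main equation to bound M1}) together with $\norm{A_{J^c T}} \le \sqrt{1-\rho_0/2}$. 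Your cone-plus-certificate computation replaces all of this with the standard inner product against $\pi$, which is arguably cleaner but requires a bound on $\norm{\pi}_2$ that the paper never states. Fortunately it is free from the paper's construction: with $\pi = \lambda z^{(e)}$ one has $\norm{\pi_S}_2 = \lambda\sqrt{s}$ and $\norm{\pi_{S^c}}_{\infty} \le \tfrac{3\lambda}{4}$ over at most $m$ coordinates, so $\norm{\pi}_2 \lesssim \lambda\sqrt{m} = \sqrt{n/(\gamma\mu m\log n)}\cdot\sqrt{m}$, which is of the right size for a $\sigma\sqrt{n}$-type bound. (Also note the paper's slacks are $3/4$ and $3\lambda/4$, not $1/2$ and $\lambda/2$; this only changes your constant.)

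The genuine error is the claim that $f$ vanishes on $\Omega' = \Omega\setminus S$. On those rows it is $e^{\star}$ that vanishes, not $f = \widehat{e} - e^{\star}$: nothing forces $\widehat{e}_{\Omega'} = 0$, so $(A_{\Omega\bullet}h+f)_{\Omega'} = A_{\Omega'\bullet}h + f_{\Omega'}$ with $f_{\Omega'} \ne 0$ in general, and your stated identity is false. The step is reparable because $f_{\Omega'}$ is a subvector of $f_{S^c}$, and your first stage already yields $\norm{f_{S^c}}_2 \le \norm{f_{S^c}}_1 \lesssim \sigma\norm{\pi}_2/\lambda$, so you can subtract it by the triangle inequality before invoking the lower isometry of $A_{\Omega' T}$; but as written the bound on $\norm{h_T}_2$ omits this contribution. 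With that repair, and accepting that ``tracking constants'' delivers constants comparable to rather than identical with those in (\ref{inq::bound l-2 norm of widehat-x - x* plus widehat-e - e*}), your argument goes through as a legitimate alternative to the paper's proof.
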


Theorem \ref{thm::main theorem - with dense noise and sparse noise} is significant because it addresses that the convex program in (\ref{opt::l1-l1 minimization with noise - sub orthogonal matrix}) can reliably reconstruct the sparse signal even when the measurements are severely corrupted by both gross sparse and small dense errors. This is the situation that we most likely will encounter in practical applications. When the measurement $y$ is not corrupted by the dense noise $\nu$, the signal can be reconstructed perfectly, regardless of how large the sparse noise is, as previously mentioned in Theorem \ref{thm::detail of the main theorem}. In addition, we will demonstrate in Section 5 that this reconstruction error is optimal up to a $\sqrt{n}$ factor as compared to the oracle situation in which locations of $T$ nonzero entries of signal $x^{\star}$ as well as $S$ nonzero entries of the sparse error $e^{\star}$ are known in prior. In particular, if ignoring this $\sqrt{n}$ factor, (\ref{inq::bound l-2 norm of widehat-x - x* plus widehat-e - e*}) is unimprovable.


The preceding results have focused on scenarios where the signal is perfectly sparse. We now consider probably the most general setting, in which $x^{\star}$ is not exactly sparse, but rather can be approximated well by a sparse vector and the observation vector $y$ is corrupted by both sparse error and dense noise with noise level $\sigma$. Denote $x^{\star}_T \in \R^n$ as a vector containing the $k$ largest magnitude entries of $x^{\star}$ and zeros elsewhere and assume an uniform distribution on the sign of $x^{\star}_T$ at the support $T$, we can now establish the following corollary

\begin{cor}
\label{corr::main corr - with generic signal and dense noise and sparse noise}
Under the same assumptions defined in Theorem \ref{thm::detail of the main theorem}, the pair of solution $(\widehat{x}, \widehat{e})$ to the convex program (\ref{opt::l1-l1 minimization with noise - sub orthogonal matrix}) obeys
\begin{equation}
\label{corr::with dense noise and sparse noise and generic signal}
\begin{split}
&\norm{\widehat{x} - x^{\star}}_2 + \norm{\widehat{e} - e^{\star}}_2 \leq  \frac{8 (\lambda + 1) }{\min \{ 1, \lambda \}} \\
&{ }\times \left[ \sigma \sqrt{n \left( \frac{4 n}{m-s} + 1 \right) } + 2\norm{x^{\star} - x^{\star}_{T} }_1 \right]  + 2 \sigma.
\end{split}
\end{equation}
\end{cor}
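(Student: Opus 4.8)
The plan is to derive this corollary directly from Theorem~\ref{thm::main theorem - with dense noise and sparse noise} by re-running its proof with the sole modification being the bookkeeping of the approximation tail $x^{\star} - x^{\star}_T$. The crucial structural observation is that the non-sparsity of $x^{\star}$ enters the argument in exactly one place. Write $h = \widehat{x} - x^{\star}$ and $g = \widehat{e} - e^{\star}$. Since $y = A_{\Omega \bullet} x^{\star} + e^{\star} + \nu$, feasibility of the minimizer $(\widehat{x}, \widehat{e})$ together with $\norm{\nu}_2 \le \sigma$ gives
\begin{equation}
\norm{A_{\Omega \bullet} h + g}_2 \le \sigma + \norm{\nu}_2 \le 2\sigma ,
\end{equation}
and this estimate is \emph{identical} to the exactly-sparse case: the tail $x^{\star}_{T^c}$ has been absorbed into the definition of $h$ and plays no role whatsoever in the feasibility bound.

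First I would re-derive the cone inequality from optimality. Because $\norm{\widehat{x}}_1 + \lambda \norm{\widehat{e}}_1 \le \norm{x^{\star}}_1 + \lambda \norm{e^{\star}}_1$, splitting each $\ell_1$-norm across the supports $T$, $T^c$ (for the signal) and $S$, $S^c$ (for the error), and using that $e^{\star}$ is exactly supported on $S$, yields
\begin{equation}
\norm{h_{T^c}}_1 + \lambda \norm{g_{S^c}}_1 \le \norm{h_T}_1 + \lambda \norm{g_S}_1 + 2\norm{x^{\star} - x^{\star}_T}_1 .
\end{equation}
This is precisely the cone constraint used in the proof of Theorem~\ref{thm::main theorem - with dense noise and sparse noise}, except for the additive surcharge $2\norm{x^{\star} - x^{\star}_T}_1$ produced by the non-cancelling copies of $\norm{x^{\star}_{T^c}}_1$ on the two sides. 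Everything feeding this inequality --- the restricted injectivity and conditioning of $[A_{\Omega \bullet}\; I]$ on the support, and the inexact dual certificate of Section~\ref{sec::dual certificate} --- depends only on the pair of supports $(T,S)$, on $\Omega$, and on the signs of $x^{\star}$ on $T$. Since $x^{\star}_T$ now collects the $k$ largest entries and we continue to assume its signs are uniform, the certificate and the conditioning estimates remain valid verbatim under the hypotheses of Theorem~\ref{thm::main theorem - with dense noise and sparse noise}.

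Next I would feed the modified cone inequality into the same chain of estimates that produced \eqref{inq::bound l-2 norm of widehat-x - x* plus widehat-e - e*}. There, the dual certificate controls the off-support mass $\norm{h_{T^c}}_1 + \lambda \norm{g_{S^c}}_1$, the support energy $\norm{h_T}_2 + \norm{g_S}_2$ is controlled by the feasibility bound through the restricted conditioning factor $\sqrt{n(1 + 4n/(m-s))}$, and combining the two bounds $\norm{h}_2 + \norm{g}_2$. The surcharge $2\norm{x^{\star} - x^{\star}_T}_1$ travels alongside the $\ell_1$ cone quantity and therefore surfaces multiplied only by the same prefactor $8(\lambda+1)/\min\{1,\lambda\}$, landing inside the bracket of \eqref{corr::with dense noise and sparse noise and generic signal} next to $\sigma \sqrt{n(4n/(m-s)+1)}$, while the residual $2\sigma$ term is left unchanged.

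The main obstacle is this last piece of bookkeeping: confirming that the tail enters additively \emph{inside the bracket} at the raw $\ell_1$ scale, rather than being amplified by the dimensional factor $\sqrt{n(1 + 4n/(m-s))}$. This requires checking that $2\norm{x^{\star} - x^{\star}_T}_1$ is routed through the dual-certificate inequality --- which charges only the factor $8(\lambda+1)/\min\{1,\lambda\}$ --- and never through the restricted-injectivity/energy estimate that generates the conditioning factor from the feasibility residual $2\sigma$. Making this separation precise, namely verifying that the certificate bound absorbs the surcharge before the energy estimate is invoked, is the one step that demands care; the remainder is a faithful transcription of the proof of Theorem~\ref{thm::main theorem - with dense noise and sparse noise}.
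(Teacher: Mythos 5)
Your high-level instinct is the right one and matches the paper's: the non-sparsity of $x^{\star}$ enters the argument at exactly one point (the optimality comparison), it contributes an additive surcharge $2\norm{x^{\star}-x^{\star}_T}_1$, and the rest of the proof of Theorem~\ref{thm::main theorem - with dense noise and sparse noise} is rerun unchanged. However, the specific inequality you propose to modify is not the one that proof actually uses, and this creates a real gap. The proof of Theorem~\ref{thm::main theorem - with dense noise and sparse noise} never invokes a cone constraint of the form $\norm{h_{T^c}}_1 + \lambda\norm{g_{S^c}}_1 \le \norm{h_T}_1 + \lambda\norm{g_S}_1$ on the raw residuals $(h,g)$; it runs entirely through the dual-certificate lower bound of Lemma~\ref{lem::construction of dual certificate (z, e) - Sub orthogonal matrix - tradition}, applied not to $(h,g)$ but to the auxiliary pair $(h^-,f^-)$ (which satisfies $f^-=-A_{\Omega\bullet}h^-$), yielding control of $\lambda\norm{f^-_J}_1+\norm{h^-_{T^c}}_1$ by $\norm{h^+}_1+\lambda\norm{f^+}_1$. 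To close your cone-constraint route you would additionally need restricted-isometry--type control of the concatenated matrix $[A_{\Omega\bullet}\ \ I]$ on $(T,S)$-supported vectors, which the paper deliberately never establishes (avoiding RIP is the point of the construction). Moreover, the one step you flag as ``demanding care''---whether the tail is amplified by the conditioning factor $\sqrt{n(1+4n/(m-s))}$ or enters at raw $\ell_1$ scale---is precisely where the argument is decided, and you leave it unresolved.

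The correct modification is simpler and more local. One proves a variant of Lemma~\ref{lem::construction of dual certificate (z, e) - Sub orthogonal matrix - tradition} for non-sparse $x^{\star}$ by writing $\norm{x^{\star}+h}_1 \ge \norm{x^{\star}_T+h}_1 - \norm{x^{\star}_{T^c}}_1$ and applying the lemma to the sparse vector $x^{\star}_T$; this yields
\begin{equation*}
\norm{x^{\star}+h}_1+\lambda\norm{e^{\star}+f}_1 \ge \norm{x^{\star}_T}_1+\lambda\norm{e^{\star}}_1-\norm{x^{\star}_{T^c}}_1+\tfrac14\bigl(\norm{h_{T^c}}_1+\lambda\norm{A_{J\bullet}h}_1\bigr).
\end{equation*}
Since the optimality side gives $\norm{x^{\star}+g^{(x)}}_1+\lambda\norm{e^{\star}+g^{(e)}}_1 \le \norm{x^{\star}_T}_1+\norm{x^{\star}_{T^c}}_1+\lambda\norm{e^{\star}}_1$, the two copies of $\norm{x^{\star}_{T^c}}_1$ combine, and the key estimate (\ref{inq::bound l2 norm of f^-_J + h^-_Tc - last step}) becomes $\frac{\min\{\lambda,1\}}{4}(\norm{f^-_J}_2+\norm{h^-_{T^c}}_2) \le \sqrt{n}(1+\lambda)\sigma + 2\norm{x^{\star}_{T^c}}_1$. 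Everything downstream---the bounds on $M_1$ and $M_2$ and the final assembly---is verbatim. Your proposal would be repaired by replacing the cone constraint with this modified certificate inequality applied to $(h^-,f^-)$.
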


Ignoring the $\sigma \sqrt{8}$ term, one can see how the bound in Corollary \ref{corr::main corr - with generic signal and dense noise and sparse noise} shows a natural splitting into two terms. The first can be interpreted as \textit{data error} associated with the noise $\nu$, whereas the second term relates to the \textit{approximation error}, measuring how far the signal $x^{\star}$ is from the best $k$-sparse approximation $x^{\star}_T$.

\subsection{When error is sparsified under an arbitrary basis}

Thus far, we have only investigated truly sparse error $e^{\star}$. That is, $e^{\star}$ is sparse under the identity transformation. A natural generation is to consider $e^{\star}$ being sparse under any orthogonal transformation $D$, including the former as a particular case. Mathematically, we consider the observation model
\begin{equation}
\label{eqt::observation b under sparse signal, sparse noise under D and dense noise}
y = A_{\Omega \bullet} x^{\star} + D g^{\star} + \nu,
\end{equation}
where $e^{\star} = D g^{\star}$ and $g^{\star}$ is a $s$-sparse vector. It is clear that via simple algebra, this setting boils down to (\ref{eqt::observation b under sparse signal, sparse noise and dense noise}) as
$$
D^* y = D^* A_{\Omega \bullet} x^{\star} + g^{\star} + D^* \nu.
$$

\noindent Notice that due to the orthogonality of $D$, $D^* A_{\Omega \bullet}$ is also an orthogonal matrix. Therefore, all preceding theorems are still relevant in this setting. The parameter $\mu$ is now interpreted as the mutual incoherence between the sensing matrix $A_{\Omega}$ and the sparsifying transform $D$. In particular,
\begin{equation}
\mu = n \max_{i,j} | \inner{a_i, d_j} |,
\end{equation}
where $a_i$ and $d_j$ are columns of matrices $A_{\Omega \bullet}$ and $D$. As the incoherence of these two matrices is small, fewer measurements are required to still guarantee stable recovery. This results from an intuitive fact that it is easier to decompose $y$ into $x^{\star}$ and $g^{\star}$ if two column spaces of $A_{\Omega \bullet}$ and $D$ are sufficiently separated.


\subsection{Contribution and connections to previous works}


The problem of recovering the signal from grossly corrupted measurements has initially been formulated by Wright \textit{et al.} in an appealing practical paper \cite{WMMSHY_2010_J} and further analyzed in \cite{WM_denseError_2010_J}.  Taking the sparsity information of $e^{\star}$ into account, the authors proposed to solve
\begin{equation}
\label{opt::l1-l1 minimization - Gaussian matrix}
\min_{x, e} \norm{x}_1 + \norm{e}_1 \quad\quad \text{subject to} \quad\quad y = A x + e.
\end{equation}

\noindent The result of \cite{WM_denseError_2010_J} is asymptotic in nature. The authors showed that as $n$ is extremely large and provided $x$ is extremely sparse, then (\ref{opt::l1-l1 minimization - Gaussian matrix}) can precisely recover both $x^{\star}$ and $e^{\star}$ from almost any error with support fraction bounded away from $100 \%$. Their analysis is based on the Gaussian assumption of the matrix $A$. Particularly, $A$ is a matrix whose columns $a_i$'s are assumed to be $\oper N(\mu, \frac{\nu^2}{m} I_m)$, where $\norm{\mu}_2 = 1$ and $\norm{\mu}_{\infty} \leq C m^{-1/2}$. Furthermore, for sufficiently large $m$, they require the sparsity of $x$ to grow sublinearly with $m$. This is of course far from the optimal bound, in which $k$ is almost linear with $m$ (i.e. only in the order of $m/\log n$).

One of the appealing consequence of their analysis is an explicit expression between three important terms: the dimension ratio $\delta = \frac{n}{m}$ of the matrix $A \in \R^{m \times n}$, the fraction error $\rho = \frac{s}{m}$ and the signal support density $\alpha = \frac{k}{m}$. However, this relationship is difficult to interpret due to the complicated coupling of these terms.


Employing the idea from \cite{WM_denseError_2010_J}, Li \textit{et al.} \cite{LWW_grossError_2010_C} and Laska \textit{et al.} \cite{LDB_corruption_2009_C} proposed different applications under the same framework. The former considered the problem of joint source-channel coding, and the later proposed a so-called pursuit of justice model to deal with sparse unbounded noise. When the measurement matrix $A$ obeys restricted isometry property (RIP), both of them showed that the combination matrix $[A, \text{ }I]$ also satisfies the RIP with high probability, where $I$ is the identity matrix. A consequent conclusion is that the signal is perfectly recovered as long as signal and error sparsity levels are in the order of $m/\log n$. The main drawback of these papers is that they are not able to show that perfect recovery is guaranteed when the number of corrupted entries is linearly proportional to the total number of observations.

After the initial submission of our paper to Arxiv, we noticed another two independent investigations into this problem: Studer \textit{et al.} \cite{SKPB_2011_C} and Li \cite{Li_2011_J}. The former studies the more general observation model, $y = A x + D e$, where $A$ and $D$ are general matrices. The authors established deterministic guarantee, which is weaker than our results in Theorem \ref{thm::main theorem}. Using different proof techniques, the latter paper delivered similar results as in Theorem \ref{thm::main theorem} with more general model of the sensing matrix $A_{\Omega \bullet}$. In particular, rows of $A_{\Omega \bullet}$ are sampled independently from a population $F$ obeying $\E a_i a_i^* = I$. However, both papers do not investigate the more realistic model in which both sparse and dense noise present in the observations.


In another direction and much earlier, Cand\`es and Tao investigated the error correction problem \cite{CT_CSdecoding_2005_J}. In this problem, the question is how to reconstruct the input vector $x^{\star}$ from corrupted measurements $y = B x^{\star} + e^{\star}$, where the coding matrix $B \in \R^{m \times n}$ is required to be overcomplete ($m > n$) and $e^{\star}$ is the channel corruption vector, which is usually assumed to be sparse. They proposed to retrieve $x^{\star}$ by solving the following $\ell_1$-minimization problem
\begin{equation}
\label{opt::Candes optimization}
\min_x \norm{b - B x}_1 .
\end{equation}

Though sharing the same general $\ell_1$ model, our approach departs from all previous work in compressed sensing in many aspects:


1) Unlike Wright and Ma's model \cite{WM_denseError_2010_J} where Gaussian measurement matrices are analyzed, we study the problem with suborthogonal matrices. These matrices often possess many desirable properties over Gaussian matrices in term of fast and efficient computation \cite{CR_incoherence_2007_J}, \cite{DGNT_2008_J}. Furthermore, we investigate the more difficult problem in which both sparse and dense error appear in the observations. This model is not studied in \cite{WM_denseError_2010_J}. We show a surprising message that the extended $\ell_1$ minimization is stable under both perturbations, even if the sparse error is arbitrarily large and its support size is arbitrarily close to the total number of observations. A straight forward consequence of this result is that accurate recovery is achieved when measurements are not perturbed by dense noise.



2) Our model is different from Cand\`es and Tao \cite{CT_CSdecoding_2005_J} in two aspects. First, we allow the coding matrix to be under-determined, that is $m \leq n$. Second, the input vector is assumed to be sparse. If we recast the extended $\ell_1$-minimization in (\ref{opt::l1-l1 minimization - sub orthogonal matrix}) as
\begin{equation}
\min_x \norm{x}_1 + \lambda \norm{b - A_{\Omega \bullet} x}_1,
\end{equation}
then one can clearly see the integration of the two $\ell_1$-norms in a unified optimization: one is used to impose sparsity of the input vector whereas the other exploits error sparsity as in (\ref{opt::Candes optimization}).

3) We propose a minor but subtle modification in the extended $\ell_1$ minimization of \cite{WM_denseError_2010_J}. By adding a regularization parameter $\lambda$ into (\ref{opt::l1-l1 minimization - sub orthogonal matrix}), we can balance the $\ell_1$-norm of both $x$ and $e$. Specifically, we can establish an explicit expression for the regularization parameter $\lambda$ as well as the sparsity levels of both signal and error. This mathematical expression is intuitively interpretable: signal and error sparsity levels should be inversely proportional. If more measurements are corrupted $-$ equivalently, the error is denser $-$ we expect to recover the signal with smaller support size. In contrast, we are able to recover the signal with larger support size when fewer errors appear in the measurement vector. In practice, when the fraction of error is unknown, we can set a good-for-all parameter $\lambda = \sqrt{\frac{n}{m \log^{1/2} n}}$.


\section{Structure of our proof}
\label{sec::proof structure}

\subsection{Bernoulli model and derandomization technique}

\textbf{The Bernoulli model. } Instead of showing that Theorem \ref{thm::detail of the main theorem} holds as $\Omega$ and $S$ are sets of size $m$ and $s$ sampled uniformly at random, we find that it is more convenient to prove the theorem for subsets $\Omega$ and $S$ sampled according to the Bernoulli model. This way, we can take advantage of the statistical independence of measurements. The same argument as presented in \cite{CRT_CS_2004_J}, \cite{CLMW_RobustPCA_2009_J} shows that the probability of "failure" under the uniform model is less than two times the probability of failure under the Bernoulli model. Here, "failure" implies the optimization in (\ref{opt::l1-l1 minimization - sub orthogonal matrix}) does not recover exactly the signal. Thus, from now on, we instead consider $\Omega = \{ i \in [1,n]: \delta_i = 1 \}$ where $\{\delta_i\}_{1 \leq i \leq n}$ is a sequence of independent identically distributed Bernoulli random variables taking value one with probability $\eta$ and zero with probability $1 - \eta$, where $\eta$ is chosen such that the expected cardinality of $\Omega$ is $\eta n = m$. Similarly, let $S = \{ i \in \Omega: \delta'_i = 1 \}$, $1 \leq i \leq m$ where $\{\delta'_i \}_{i \in \Omega}$ are i.i.d Bernoulli random variables with $\Prob(\delta'_i = 1) = \rho$ so that the expected cardinality of $S$ is $\rho m = s$. Toward this end, we will write $\Lambda \sim \Ber(\eta)$ as a shorthand for $\Lambda$ sampled from the Bernoulli model with parameter $\eta$.

The following are five important index sets that is frequently used in the sequel.
\begin{itemize}
    \item $\Omega$ are those locations corresponding to observations: $\Omega \sim \Ber (\eta)$ with $\eta = \frac{m}{n}$.
    \item $S \subset \Omega$ are locations where the measurements are available but absolutely unreliable. It is clear that the distribution of $S$ relies on that of $\Omega$. Conditioning on $\Omega$, we have $S \sim \Ber(\rho)$ with $\rho = \frac{s}{m}$. We can also think $S$ as a subset selected from the set $ \{1,..., n\}$ with parameter $\eta \rho$. That is, $S \sim \Ber( \eta \rho )$.
    \item $J \subset \Omega$ are locations where the measurements are available and truthworthy. It is clear that $J = \Omega/S$. Conditioned on $\Omega$, we have $J \sim \Ber(1-\rho)$. In other words, $J \sim \Ber ( \rho_0)$ with $\rho_0 := \eta (1 - \rho)$.
    \item We also consider the index sets $S^c = \{1,...,n \} / S$ and $J^c = \{1,...,n \} / J$
\end{itemize}

\textbf{Derandomization. } In Theorem \ref{thm::detail of the main theorem}, the sign of $e^{\star}$ is fixed. During the proof, we need to place an additional assumption on $e^{\star}$. That is, the sign of $e^{\star}_S$ is uniformly distributed, receiving value $1$ or $-1$ with probability $1/2$. However, by the same appealing derandomization technique presented in \cite{CLMW_RobustPCA_2009_J}, the probability of recovering $e^{\star}$ whose signs on the support $S$ are arbitrary is at least equal to that of recovering $e^{\star}$ whose signs are equally likely to be $1$ or $-1$. This is formally stated in the lemma below

\begin{lem} [\textit{Theorem 2.3 of \cite{CLMW_RobustPCA_2009_J}}]
Suppose $x^{\star}$ obeys conditions of Theorem \ref{thm::detail of the main theorem} and the locations of nonzero entries of $e^{\star}$ follows the Bernoulli model with parameter $2 \rho$, and signs of $e^{\star}$ are i.i.d $\pm 1$ with probability $1/2$. Then, if the solution of extended $\ell_1$-minimization (\ref{opt::l1-l1 minimization - sub orthogonal matrix}) is exact with high probability, it is also exact with at least the same probability with the model in which the signs of $e^{\star}$ are fixed and its nonzero entries are selected from the Bernoulli model with parameter $\rho$.

\end{lem}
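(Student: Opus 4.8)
The plan is to establish the lemma from two independent ingredients: an \emph{elimination} (monotonicity) property of the extended $\ell_1$-program (\ref{opt::l1-l1 minimization - sub orthogonal matrix}), and a \emph{sign-matching coupling} between the two Bernoulli models. Throughout I would fix the measurement set $\Omega$, the signal $x^{\star}$, and the magnitudes $|e^{\star}_i|$ of the error, letting only the error support and its signs vary.

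First I would prove the elimination step: if, for an error supported on $S$ with signs $\mathrm{sgn}$, the pair $(x^{\star}, e^{\star})$ is the \emph{unique} minimizer of (\ref{opt::l1-l1 minimization - sub orthogonal matrix}), then for every $S' \subseteq S$ the restricted pair $(x^{\star}, e')$, where $e' = e^{\star}_{S'}$ and zero elsewhere, is the unique minimizer of the program with observation $y' = A_{\Omega \bullet} x^{\star} + e'$. The argument runs by contradiction. Suppose $(x', e'') \neq (x^{\star}, e')$ is feasible for the $y'$-problem with objective no larger than that of $(x^{\star}, e')$. I would form $\widehat{e} := e'' + e^{\star}_{S \setminus S'}$, which is feasible for the original $y$-problem because adding back $e^{\star}_{S \setminus S'}$ restores the removed corruptions, i.e. $A_{\Omega \bullet} x' + \widehat{e} = y$. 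Using $\|\widehat{e}\|_1 \leq \|e''\|_1 + \|e^{\star}_{S \setminus S'}\|_1$ together with the disjointness of $S'$ and $S \setminus S'$ (so that $\|e'\|_1 + \|e^{\star}_{S \setminus S'}\|_1 = \|e^{\star}\|_1$), the competitor $(x', \widehat{e})$ attains objective no larger than the optimum of the $y$-problem. Uniqueness of $(x^{\star}, e^{\star})$ then forces $(x', \widehat{e}) = (x^{\star}, e^{\star})$, hence $x' = x^{\star}$ and $e'' = e'$, contradicting distinctness. Intuitively, this says that removing corrupted entries can only make the decoding easier, and it propagates uniqueness from the larger support to the smaller one.

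Next I would couple the two models. Conditioned on $\Omega$, draw a support $T \sim \Ber(2\rho)$ together with i.i.d. signs $\epsilon_i \in \{\pm 1\}$, producing the random-sign error of the hypothesis. Given the prescribed fixed sign vector $\mathrm{sgn}$, define $S := \{\, i \in T : \epsilon_i = \mathrm{sgn}_i \,\}$. Then $\Prob(i \in S) = 2\rho \cdot \tfrac{1}{2} = \rho$ with independence preserved across coordinates, so $S \sim \Ber(\rho)$; moreover $\epsilon$ restricted to $S$ agrees with $\mathrm{sgn}_S$, and $S \subseteq T$. Consequently the fixed-sign $\Ber(\rho)$ error is exactly the restriction to $S$ of the random-sign $\Ber(2\rho)$ error.

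Finally I would combine the two. Applying the elimination step with the larger support $T$ and the smaller support $S \subseteq T$, whenever exact recovery succeeds for $(T, \epsilon)$ it also succeeds for $(S, \mathrm{sgn})$. Taking probabilities over the coupled randomness yields $\Prob(\text{success} \mid \text{fixed sign}, \Ber(\rho)) \geq \Prob(\text{success} \mid \text{random sign}, \Ber(2\rho))$, which is precisely the claim. The hard part will be the elimination step: the weighting $\lambda$ and, more importantly, the need to transfer \emph{uniqueness} rather than mere optimality demand careful triangle-inequality bookkeeping, whereas the thinning coupling of the last two paragraphs is routine.
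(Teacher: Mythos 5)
Your proposal is correct: the elimination step (restricting the corruption to a subset preserves unique recoverability) combined with the thinning coupling ($S=\{i\in T:\epsilon_i=\sgn_i\}\sim\Ber(\rho)$ inside $T\sim\Ber(2\rho)$) is exactly the derandomization argument of Theorem 2.3 in \cite{CLMW_RobustPCA_2009_J}, which the paper invokes by citation without reproducing a proof. The only cosmetic issue is your reuse of the letter $T$ for the larger error support, which collides with the paper's reserved notation for the signal support.
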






\subsection{Dual certificate}

The following lemma shows that if there exists a dual pair ($z^{(x)}, z^{(e)}$) satisfying certain conditions, then for any pair $(x, e)$, its $\ell_1$-norm sum is no smaller than that of $(x^{\star}, e^{\star})$ .

\begin{lem}
\label{lem::construction of dual certificate (z, e) - Sub orthogonal matrix - tradition}
Suppose that $\norm{A_{J^c T}} < 1$. If there exists a pair of dual vectors ($z^{(x)}, z^{(e)}$) with the following properties,
\begin{enumerate}
    \item $z^{(x)} = \lambda A^*_{\Omega \bullet } z^{(e)}$,
    \item $z^{(x)}_T = \sgn(x^{\star}_T)$ and $\norm{z^{(x)}_{T^c}}_{\infty} \leq 3/4$,
    \item $z^{(e)}_S = \sgn(e^{\star}_S)$ and $\norm{z^{(e)}_{S^c}}_{\infty} \leq 3/4$,
\end{enumerate}
then for any perturbation pair ($h, f$) satisfying $f = - A_{\Omega \bullet} h$, we have
\begin{equation}
\label{inq::lemma construction of dual certificate - sub orthogonal matrix - inequality}
\begin{split}
&\norm{x^{\star} + h}_1 + \lambda \norm{e^{\star}+f}_1 \\
& \geq \norm{x^{\star}}_1 + \lambda \norm{e^{\star}}_1 + \frac{1}{4} (\norm{h_{T^c}}_1 + \lambda \norm{A_{J \bullet} h}_1).
\end{split}
\end{equation}
\end{lem}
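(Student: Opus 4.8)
The plan is to run the standard convex-duality (KKT) argument, in which the two certificate conditions are engineered precisely so that all first-order terms cancel and only a nonnegative remainder survives. First I would rephrase admissibility: writing any competitor as $(x^\star+h,\,e^\star+f)$, the feasibility constraint $y=A_{\Omega\bullet}(x^\star+h)+(e^\star+f)$ together with $y=A_{\Omega\bullet}x^\star+e^\star$ is equivalent to $f=-A_{\Omega\bullet}h$, which is exactly the stated hypothesis on the perturbation pair $(h,f)$.

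The elementary building block I would establish next is a one-sided subgradient estimate: if $v$ is supported on a set $T$ and $z$ satisfies $z_T=\sgn(v_T)$ together with $\norm{z_{T^c}}_\infty\le 3/4$, then
\[
\norm{v+h}_1 \ge \norm{v}_1 + \inner{z,h} + \tfrac14\norm{h_{T^c}}_1 .
\]
On the support this follows from $|v_i+h_i|\ge \sgn(v_i)(v_i+h_i)=|v_i|+z_ih_i$, and off the support from $|h_i|\ge z_ih_i+(1-|z_i|)|h_i|\ge z_ih_i+\tfrac14|h_i|$. Applying this to $x^\star$ with $z=z^{(x)}$ (property~2) and to $e^\star$, supported on $S$, with $z=z^{(e)}$ (property~3), then adding the latter with weight $\lambda$, yields
\[
\norm{x^\star+h}_1+\lambda\norm{e^\star+f}_1 \ge \norm{x^\star}_1+\lambda\norm{e^\star}_1 + \inner{z^{(x)},h}+\lambda\inner{z^{(e)},f} + \tfrac14\norm{h_{T^c}}_1 + \tfrac{\lambda}{4}\norm{f_{S^c}}_1 .
\]

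The heart of the argument is then to eliminate the two linear terms and to re-express the last slack term. Substituting property~1, $z^{(x)}=\lambda A^*_{\Omega\bullet}z^{(e)}$, together with $f=-A_{\Omega\bullet}h$, gives $\inner{z^{(x)},h}+\lambda\inner{z^{(e)},f}=\lambda\inner{z^{(e)},A_{\Omega\bullet}h}-\lambda\inner{z^{(e)},A_{\Omega\bullet}h}=0$, so the first-order terms cancel exactly. For the remaining term, since $f=-A_{\Omega\bullet}h$ is supported on $\Omega$ and $S\subset\Omega$, the restriction of $f$ to $S^c$ coincides with its restriction to $J=\Omega\setminus S$, namely $f_J=-A_{J\bullet}h$; hence $\norm{f_{S^c}}_1=\norm{A_{J\bullet}h}_1$. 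Combining these two observations produces inequality (\ref{inq::lemma construction of dual certificate - sub orthogonal matrix - inequality}) directly.

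Finally I would indicate where the hypothesis $\norm{A_{J^cT}}<1$ enters: it promotes the bound to strict improvement, and hence to uniqueness of the minimizer. If the remainder $\norm{h_{T^c}}_1+\lambda\norm{A_{J\bullet}h}_1$ vanishes, then $h$ is supported on $T$ and $A_{JT}h_T=0$; since $A$ has orthonormal columns, $\norm{h_T}_2^2=\norm{A_{JT}h_T}_2^2+\norm{A_{J^cT}h_T}_2^2=\norm{A_{J^cT}h_T}_2^2\le\norm{A_{J^cT}}^2\norm{h_T}_2^2$, which is strictly smaller than $\norm{h_T}_2^2$ unless $h_T=0$, forcing $h=0$. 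The argument carries essentially no analytic difficulty; the only points needing care are the index-set bookkeeping that identifies $f_{S^c}$ with $-A_{J\bullet}h$, the exact cancellation of the cross terms, and this last spectral-norm step. I expect the bookkeeping to be the main (mild) obstacle, since the vectors $e^\star$, $f$, and $z^{(e)}$ live on $\Omega$ while the off-support complements are taken relative to $\{1,\dots,n\}$, so one must keep track of the embedding throughout.
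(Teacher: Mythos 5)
Your proposal is correct and follows essentially the same dual-certificate/subgradient route as the paper: both arguments reduce to the $\ell_1$ subgradient inequality, use the identity $z^{(x)}=\lambda A^*_{\Omega\bullet}z^{(e)}$ together with $f=-A_{\Omega\bullet}h$ to dispose of the linear terms, and extract the $1/4$ remainder from the $3/4$ bounds on $z^{(x)}_{T^c}$ and $z^{(e)}_{S^c}$ (the paper bounds the cross terms $-\inner{z^{(x)}_{T^c},h_{T^c}}+\lambda\inner{z^{(e)}_J,A_{J\bullet}h}$ by $\tfrac34$ of the respective norms, while you fold the same estimate into your elementary one-sided lemma and get exact cancellation of the first-order terms). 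Your identification $f_{S^c}=f_J=-A_{J\bullet}h$ and your closing remark on where $\norm{A_{J^cT}}<1$ enters match the paper's post-lemma discussion of uniqueness.
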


Before proving this lemma, it is necessary to notice how the Lemma implies the perfect recovery of the linear program in (\ref{opt::l1-l1 minimization - sub orthogonal matrix}). Indeed, denote by $(\widehat{x}, \widehat{e})$ the optimal solution of (\ref{opt::l1-l1 minimization - sub orthogonal matrix}) and let $\widehat{x} := x^{\star} + h$ and $\widehat{e} := e^{\star} + f$, then it is obvious that $f = - A_{\Omega \bullet} h$. By the convexity of the objective function, we have $\norm{x^{\star} + h}_1 + \lambda \norm{e^{\star}+f}_1 \leq \norm{x^{\star}}_1 + \lambda \norm{e^{\star}}_1$.

Furthermore, from Lemma \ref{lem::construction of dual certificate (z, e) - Sub orthogonal matrix - tradition}, assuming the existence of a dual pair $(z^{(x)}, z^{(e)})$ and $\norm{A_{J^c T}} < 1$, the inequality (\ref{inq::lemma construction of dual certificate - sub orthogonal matrix - inequality}) obeys. Combining both arguments, we have
$$
\frac{1}{4} (\norm{h_{T^c}}_1 + \lambda \norm{A_{J \bullet} h}_1) \leq 0.
$$

\noindent It is clear that the left-hand side of the above equation is strictly greater than $0$ for every $h \neq 0$. Thus, in order for the equality to occur, it is necessary that $h_{T^c} = 0$ and $A_{J T} h_T = 0$. We can establish that, due to the orthogonality of matrix $A$, the condition $\norm{A_{J^c T}} < 1$ is equivalent to $\norm{I - A^*_{J T} A_{J T}} < 1$. This suggests that $A^*_{JT} A_{J T}$ is invertible, and thus, $A_{J T} h_T = 0$ only if $h_T = 0$. We therefore conclude that $h=0$ and $f=- A_{\Omega \bullet} h=0$ or in other words, $(\widehat{x}, \widehat{e})$ is the exact solution.

\begin{proof} [Proof of Lemma \ref{lem::construction of dual certificate (z, e) - Sub orthogonal matrix - tradition}]

\noindent Denote as $v_0$ and $w_0$ the subgradients of $\norm{x}_1$ and $\norm{e}_1$ at $x^{\star}$ and $e^{\star}$, respectively. It is well-known that $v_{0_T} = \sgn(x^{\star}_T)$ and $\norm{v_{0_{T^c}}}_{\infty} \leq 1$. Similarly, we have $w_{0_S} = \sgn(e^{\star}_S)$ and $\norm{w_{0_{S^c}}}_{\infty} \leq 1$. By the definition of subgradients, we derive
\begin{equation}
\begin{split}
\label{inq::bound ||x*+h||_1 + lambda ||e*+f||_1 }
&\norm{x^{\star} + h}_1 + \lambda \norm{e^{\star} + f}_1 \\
&\geq  \norm{x^{\star}}_1 + \lambda \norm{e^{\star}}_1 + \inner{v_0, h} + \lambda \inner{w_0, f} \\
&= \norm{x^{\star}}_1 + \lambda \norm{e^{\star}}_1 + \inner{v_0, h} - \lambda \inner{w_0, A_{\Omega \bullet} h}.
\end{split}
\end{equation}

 Let us now consider $\inner{v_0, h} - \lambda \inner{w_0, A_{\Omega \bullet} h}$. By decomposing $v_0$ and $w_0$ into vectors of index sets $\{S,T \}$ and their complements $\{S^c, T^c \}$, we have
\begin{align*}
\inner{v_0, h} - \lambda \inner{w_0, A_{\Omega \bullet} h} &= \inner{\sgn(x^{\star}_T), h_T} - \lambda \inner{\sgn(e^{\star}_S), A_{S \bullet} h} \\
&{ } + \inner{v_{0_{T^c}}, h_{T^c}} - \lambda \inner{w_{0_J}, A_{J \bullet} h}.
\end{align*}

\noindent Now choosing $v_{0_{T^c}}$ such that $\inner{v_{0_{T^c}}, h_{T^c}} = \norm{h_{T^c}}_1$ and $w_{0_J}$ such that $\inner{w_{0_J}, A_{J \bullet} h} = -\norm{A_{J \bullet} h}_1$, we can rewrite
\begin{equation}
\label{eqt::equation of v_0 and w_0}
\begin{split}
\inner{v_0, h} &- \lambda \inner{w_0, A_{\Omega \bullet} h} = \inner{\sgn(x^{\star}_T), h_T}  \\
&{ }- \lambda \inner{\sgn(e^{\star}_S), A_{S \bullet} h} + \norm{h_{T^c}}_1 + \lambda \norm{A_{J \bullet} h}_1.
\end{split}
\end{equation}

\noindent In addition, the identity relation $z^{(x)} = \lambda A^*_{ \Omega \bullet} z^{(e)}$ can be reformulated as
\begin{align*}
\left(
  \begin{array}{c}
    \sgn(x^{\star}_T) \\
    0_{T^c} \\
  \end{array}
\right)
&- \lambda A^*_{\Omega \bullet } \left(
  \begin{array}{c}
    \sgn(e^{\star}_S) \\
    0_J \\
  \end{array}
\right) \\
&=
-\left(
  \begin{array}{c}
    0_T \\
    z^{(x)}_{T^c} \\
  \end{array}
\right)
+
\lambda A^*_{\Omega \bullet } \left(
  \begin{array}{c}
    0_S \\
    z^{(e)}_J \\
  \end{array}
\right).
\end{align*}

\noindent Taking the inner product with $h$ on both sides yields
\begin{align*}
\inner{\sgn(x^{\star}_T), h_T} &- \lambda \inner{A^*_{S \bullet} \sgn(e^{\star}_S), h} \\
&= - \inner{z^{(x)}_{T^c}, h_{T^c}} + \lambda \inner{A^*_{J \bullet } z^{(e)}_J, h} \\
&= - \inner{z^{(x)}_{T^c}, h_{T^c}} + \lambda \inner{z^{(e)}_J, A_{J \bullet} h}.
\end{align*}

\noindent Notice also that $\inner{A^*_{S \bullet } \sgn(e^{\star}_S), h} = \inner{\sgn(e^{\star}_S), A_{S \bullet} h}$. Hence, (\ref{eqt::equation of v_0 and w_0}) is equivalent to
\begin{align*}
&\inner{v_0, h} - \lambda \inner{w_0, A_{\Omega \bullet} h} \\
&= \norm{h_{T^c}}_1 + \lambda \norm{A_{J \bullet} h}_1 - \inner{z^{(x)}_{T^c}, h_{T^c}} + \lambda \inner{z^{(e)}_J, A_{J \bullet} h} \\
%
%
&\geq \frac{1}{4} (\norm{h_{T^c}}_1 + \lambda \norm{A_{J \bullet} h}_1),
\end{align*}
where the last inequality is due to $\inner{z^{(x)}_{T^c}, h_{T^c}} \leq \norm{z^{(x)}_{T^c}}_{\infty} \norm{h_{T^c}}_1 \leq \frac{3}{4} \norm{h_{T^c}}_1$ and $ \inner{z^{(e)}_J, A_{J \bullet} h} \leq \norm{z^{(e)}_J}_{\infty} \norm{A_{J \bullet} h}_1 \leq \frac{3}{4} \norm{A_{J \bullet} h}_1$. Substituting this inequality into (\ref{inq::bound ||x*+h||_1 + lambda ||e*+f||_1 }), we complete the proof.
\end{proof}

From the result of Lemma \ref{lem::construction of dual certificate (z, e) - Sub orthogonal matrix - tradition}, in order to prove exact recovery of the convex program, it suffices to construct a dual certificate $( z^{(x)}, z^{(e)} )$ obeying the conditions of Lemma \ref{lem::construction of dual certificate (z, e) - Sub orthogonal matrix - tradition}. Partitioning $z^{(x)}$, $z^{(e)}$ into two subsets belonging to $T$ and $T^c$, $S$ and $S^c$, the identity relation between $z^{(x)}$ and $z^{(e)}$ can be reformulated as follows
\begin{equation}
\label{eqt::relax relation between z^x and z^e - part 2 - modified}
\begin{split}
&\left(
  \begin{array}{c}
    \sgn(x^{\star}_T) \\
    0_{T^c} \\
  \end{array}
\right)
+
\left(
  \begin{array}{c}
    0_T \\
    z^{(x)}_{T^c} \\
  \end{array}
\right) \\
&= \lambda A^* \left(
  \begin{array}{c}
    \sgn(e^{\star}_S) \\
    0_J \\
    0_{\Omega^c} \\
  \end{array}
\right)
+
\lambda A^* \left(
  \begin{array}{c}
    0_S \\
    z^{(e)}_J \\
    0_{\Omega^c}
  \end{array}
\right).
\end{split}
\end{equation}

If we can construct a pair of vectors $(v^{(x)}, w^{(e)})$ such that $v^{(x)} + w^{(e)}$ is equal to both sides of (\ref{eqt::relax relation between z^x and z^e - part 2 - modified}), that is
$$
\left\{
  \begin{array}{ll}
    v^{(x)}_T + w^{(e)}_T = \sgn(x^{\star}_T) \\
    v^{(x)}_{T^c} + w^{(e)}_{T^c} = z^{(x)}_{T^c}\\
    A_{S \bullet} (v^{(x)} + w^{(e)}) = \lambda \sgn(e^{\star}_S) \\
    A_{J \bullet} (v^{(x)} + w^{(e)}) = \lambda z^{(e)}_J  \\
    A_{\Omega^c \bullet} (v^{(x)} + w^{(e)}) = 0,
  \end{array}
\right.
$$
then the existence of the dual certificate $(z^{(x)}, z^{(e)})$ in Lemma \ref{lem::construction of dual certificate (z, e) - Sub orthogonal matrix - tradition} is guaranteed. As a consequence, it now suffices to produce a dual pair $(v^{(x)}, w^{(e)})$ obeying
\begin{equation}
\label{eqt::requirements of the dual pair v^x and w^e - final}
\left\{
  \begin{array}{ll}
    v^{(x)}_T = \sgn(x^{\star}_T)   \\
    \norm{v^{(x)}_{T^c}}_{\infty} < 3/8  \\
    A_{S \bullet} v^{(x)} = 0  \\
    \norm{A_{J \bullet} v^{(x)}}_{\infty} < 3 \lambda/8    \\
    A_{\Omega^c \bullet} v^{(x)} = 0
  \end{array}
\right.
\quad \text{and} \quad
\left\{
  \begin{array}{ll}
	w^{(e)}_T = 0 \\
	\norm{w^{(e)}_{T^c}}_{\infty} < 3/8 \\
       A_{S \bullet}w^{(e)} = \lambda \sgn(e^{\star}_S)\\
	 \norm{ A_{J \bullet}w^{(e)}}_{\infty} < 3 \lambda /8  \\
	A_{\Omega^c \bullet} w^{(e)} = 0.
  \end{array}
\right.
\end{equation}

In the next section, we will establish that the valid dual pair ($v^{(x)}, w^{(e)}$) exists with probability converging to unity.

\subsection{Dual certification constructions}

We now propose to construct a dual certificate pair ($v^{(x)}, w^{(e)}$) whose components are described as follows

\begin{enumerate}

	\item \textit{Construction of $w^{(e)}$ via least-square.} Since $w^{(e)}_T = 0$, the identity conditions $A_{S \bullet}w^{(e)} = \lambda \sgn(e^{\star}_S)$ and $A_{\Omega^c \bullet} w^{(e)} = 0$ can now be represented by a single equation
\begin{equation}
\label{eqt::linear equations of w^e_T^c}
A_{J^c T^c} w^{(e)}_{T^c} = \lambda \left(
                                  \begin{array}{c}
                                    \sgn(e^{\star}_S) \\
                                    0_{\Omega^c} \\
                                  \end{array}
                                \right),
\end{equation}
where we recall that $J^c = S \cup \Omega^c $. Next, assuming that $\norm{A_{J^c T}} < 1$, then we have $\norm{I - A_{J^c T^c} A^*_{J^c T^c }} = \norm{A_{J^c T} A^*_{ J^c T}} < 1$. Consequently, matrix $A_{J^c T^c} A^*_{ J^c T^c}$ is invertible. We then set
\begin{equation}
\label{eqt::construction of w^e_T^c - part 2}
w^{(e)}_{T^c} = \lambda A^*_{ J^c T^c} (A_{J^c T^c} A^*_{ J^c T^c})^{-1} \left(
                                  \begin{array}{c}
                                    \sgn(e^{\star}_S) \\
                                    0_{\Omega^c} \\
                                  \end{array}
                                \right).
\end{equation}

Clearly, $w^{(e)}_{T^c}$ is the least-square solution of the linear system in (\ref{eqt::linear equations of w^e_T^c}). This construction has a natural interpretation: among all solutions of the linear system, $w^{(e)}_{T^c}$ has the minimum $\ell_2$-norm. We expect that its $\ell_{\infty}$-norm is also sufficiently small to obey the condition in (\ref{eqt::requirements of the dual pair v^x and w^e - final}).
	
	\item \textit{Construction of $v^{(x)}$.}  A simple way to produce $v^{(x)}$ is as follows
\begin{equation}
\label{eqt::construction of w^x_T^c - part 2}
    v^{(x)} = A^*_{J \bullet } A_{J T} (A^*_{J T} A_{J T})^{-1} \sgn(x^{\star}_T).
\end{equation}

It is obvious from this construction that $v^{(x)}_T = \sgn(x^{\star}_T)$. Furthermore, $A_{S \bullet} v^{(x)} = 0$ and $A_{\Omega^c \bullet} v^{(x)} = 0$ due to the orthogonality property of the matrix $A$. Thus, all three identity relations with respect to $v^{(x)}$ in (\ref{eqt::requirements of the dual pair v^x and w^e - final}) are guaranteed.

\end{enumerate}

We now state two key lemmas that establish the $\ell_{\infty}$-norm bounds for $v^{(x)}$ and $w^{(e)}$.

\begin{lem}
\label{lem::prove bounds of v^x}
Assume that $\Omega \sim \Ber(\eta)$ and $S \sim \Ber(\eta \rho)$ where parameters $\eta = m/n$ and $\rho = s/m$. Under the same assumptions as in Theorem \ref{thm::detail of the main theorem}, with high probability, the dual vector $v^{(x)}$ constructed in (\ref{eqt::construction of w^x_T^c - part 2}) obeys
\begin{enumerate}
	\item $\norm{A_{J \bullet} v^{(x)} }_{\infty} <  3 \lambda/8$,
	\item $\norm{v^{(x)}_{T^c}}_{\infty} < 3/8$.
\end{enumerate}
\end{lem}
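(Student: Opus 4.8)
The plan is to obtain both bounds from two separate sources of randomness, handled in sequence. Writing $M = A_{JT}$ and $\omega = \sgn(x^{\star}_T)$, the construction reads $v^{(x)} = A^*_{J \bullet} M (M^*M)^{-1}\omega$, so that $v^{(x)}$ is \emph{linear} in the random sign vector $\omega$ once the random row set $J \sim \Ber(\rho_0)$, $\rho_0 = \eta(1-\rho)$, is fixed. I would therefore first condition on $J$ and absorb its randomness into a single ``well-conditioned'' event, and only afterwards exploit the randomness of $\omega$ through Hoeffding's inequality applied coordinatewise, followed by a union bound over the (at most $n$) coordinates in each of the two claims.

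Step one is to produce the event on which the construction is valid and stable. Using matrix Bernstein on the sum of rank-one terms $M^*M = \sum_{j} \delta_j A_{jT}^* A_{jT}$, whose mean is $\rho_0 I_T$ and whose summands have norm at most $\norm{A_{jT}}_2^2 \le k\mu/n$, I would show $\norm{M^*M - \rho_0 I} \le \rho_0/2$ with probability $1 - cn^{-1}$ provided $m \gtrsim \mu k \log n$. On this event $M^*M$ is invertible with $\norm{(M^*M)^{-1}} \le 2/\rho_0$ (this also gives $\norm{A_{J^c T}}<1$, since $I - M^*M = A^*_{J^c T}A_{J^c T}$ by orthonormality of the columns of $A$). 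For the second claim I additionally need, on a companion event, the cross-correlations to be small: writing $A^*_{JT}A_{Ji} = \sum_j \delta_j A_{ji} A_{jT}^*$, orthonormality of $A$ forces $\E[A^*_{JT}A_{Ji}] = 0$ for $i \in T^c$, and the incoherence bound $|A_{ij}|^2 \le \mu/n$ controls the summands, so a vector Bernstein inequality yields $\norm{A^*_{JT}A_{Ji}}_2 \lesssim \sqrt{\rho_0 \mu k (\log n)/n}$ (plus a lower-order term), uniformly over $i \in T^c$, with high probability.

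Step two conditions on these events and uses only the signs. For the first claim, row-orthonormality of $A$ gives $A_{J \bullet}A^*_{J \bullet} = I$, so for $j \in J$ the $j$-th entry of $A_{J \bullet}v^{(x)}$ collapses to $\inner{q_j, \omega}$ with $q_j = (M^*M)^{-1}A_{jT}^*$ and $\norm{q_j}_2 \le (2/\rho_0)\sqrt{k\mu/n}$; Hoeffding and a union bound over $j \in J$ then give $\norm{A_{J \bullet}v^{(x)}}_{\infty} \lesssim \norm{q_j}_2\sqrt{\log n}$, and one checks this falls below $3\lambda/8$ exactly in the regime $m \gtrsim \frac{\gamma}{(1-\gamma)^2}\mu^2 k(\log n)^2$ after inserting $\lambda = \sqrt{\frac{1}{\gamma\log n}\frac{n}{\mu m}}$. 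For the second claim, $v^{(x)}_i = \inner{r_i, \omega}$ with $r_i = (M^*M)^{-1}A^*_{JT}A_{Ji}$ and $\norm{r_i}_2 \le (2/\rho_0)\norm{A^*_{JT}A_{Ji}}_2$, so Hoeffding and a union bound over $i \in T^c$ give $\norm{v^{(x)}_{T^c}}_{\infty} \lesssim \norm{r_i}_2\sqrt{\log n} < 3/8$ once $m \gtrsim \mu k(\log n)^2$, which is covered by the stated $\max\{\cdot\}$ condition on $m$. Choosing the constants so that every Hoeffding exponent exceeds $(1+c)\log n$ makes each failure probability $O(n^{-1-c})$, and summing the $O(n)$ terms keeps the total at $O(n^{-1})$.

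The main obstacle is the part-2 estimate, namely controlling the random cross-correlation $\norm{A^*_{JT}A_{Ji}}_2$. A naive operator-norm split $\norm{A_{JT}}\,\norm{A_{Ji}}_2$ is useless here, as it only yields a bound of order $\sqrt{\mu}$, far too large for the threshold $3/8$; the whole point is that orthogonality of $A$ makes this quantity mean-zero and incoherence makes its increments small, so that concentration buys the crucial extra factor $\sqrt{(\log n)/m}$. Getting the two layers of randomness to cooperate cleanly, i.e.\ establishing the $J$-events with probability $1 - cn^{-1}$ so that one may legitimately condition and then run Hoeffding over the signs, while tracking the union bounds and the dependence on $\gamma$ and $\rho$ carried through $\lambda$, is the delicate bookkeeping at the heart of the argument.
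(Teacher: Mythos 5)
Your architecture is exactly the paper's: condition on the Bernoulli set $J$, establish two high-probability events (good conditioning of $A^*_{JT}A_{JT}$, giving $\norm{(A^*_{JT}A_{JT})^{-1}}\le 2/\rho_0$, and smallness of the cross-correlations $A^*_{JT}A_{Ji}$ for $i\in T^c$), then write each coordinate of $A_{J\bullet}v^{(x)}$ and of $v^{(x)}_{T^c}$ as $\inner{\cdot,\sgn(x^{\star}_T)}$ and finish with Hoeffding plus a union bound. Part 1 of your argument, including the identity $A_{J\bullet}A^*_{J\bullet}=I$ and the bound $\norm{q_j}_2\le(2/\rho_0)\sqrt{\mu k/n}$, reproduces the paper's Part 1 essentially verbatim (the paper establishes the conditioning event via a moment bound rather than matrix Bernstein, which is immaterial).

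The one substantive discrepancy is your cross-correlation estimate in Part 2. Vector Bernstein applied to $A^*_{JT}A_{Ji}=\sum_j(\delta_j-\rho_0)A_{ji}A^*_{jT}$ has variance proxy $\sigma^2\le\rho_0\mu k/n$, so the high-probability bound you quote, $\sqrt{\rho_0\mu k(\log n)/n}$, carries a $\sqrt{\log n}$ on top of the typical size. The paper's Lemma \ref{lem::show the almost orthogonality of A_ST^c versus A_ST} (proved via Talagrand's inequality) gets $\sqrt{C'\rho_0\mu\max\{k,\log n\}/n}$ --- the deviation term there is only $\sqrt{\rho_0\mu(\log n)/n}$ and is swallowed by the expectation once $k\ge\log n$. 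This is not cosmetic: with your bound, the final Hoeffding step for $\norm{v^{(x)}_{T^c}}_\infty$ forces $m-s\gtrsim\mu k(\log n)^2$, whereas the theorem's hypothesis only guarantees $m\gtrsim\mu^2\max\{\tfrac{\gamma}{(1-\gamma)^2}k(\log n)^2,\,k\log n,\,(\log n)^2\}$; in the regime $\gamma\sim 1/\log n$ and $\mu=O(1)$ this is only $m\gtrsim k\log n$, and your Part 2 does not close. So you need the sharper concentration (Talagrand, or a Bernstein bound in which the logarithm multiplies only the variance of a single coordinate rather than the full variance proxy) to cover the whole parameter range claimed by the lemma. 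Everything else in your write-up, including the identification of the cross-correlation term as the crux and the observation that a crude operator-norm split is useless there, matches the paper.
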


\begin{lem}
\label{lem::prove bounds of w^e}
Assume that $\Omega$ and $S$ are sampled as in Lemma \ref{lem::prove bounds of v^x}. Under the same assumptions as in Theorem \ref{thm::detail of the main theorem}, with high probability, the dual vector $w^{(e)}$ constructed in  (\ref{eqt::construction of w^e_T^c - part 2}) obeys
\begin{enumerate}
	\item $\norm{w^{(e)}_{T^c}}_{\infty} < 3/8 $,
	\item $\norm{A_{J \bullet} w^{(e)} }_{\infty} < 3 \lambda/8$.
\end{enumerate}
\end{lem}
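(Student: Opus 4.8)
The plan is to reduce each of the two $\ell_\infty$ bounds to a single Rademacher sum over the random index set $S$ and then apply Hoeffding's inequality, after first establishing the operator-norm estimates that make the construction (\ref{eqt::construction of w^e_T^c - part 2}) meaningful. Throughout I would work conditionally on the sets $\Omega$ and $S$ (equivalently on $J$ and $J^c$) and exploit the randomness of the signs $\sgn(e^\star_S)$, which by the derandomization lemma may be taken i.i.d. $\pm 1$.

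First I would collect the algebraic consequences of $A^*A = AA^* = I$. Write $\xi$ for the vector equal to $\sgn(e^\star_S)$ on $S$ and $0$ on $\Omega^c$, so that the construction reads $w^{(e)}_{T^c} = \lambda A^*_{J^c T^c}p$ with $p := (A_{J^c T^c}A^*_{J^c T^c})^{-1}\xi$ and $(I - A_{J^c T}A^*_{J^c T})\,p = \xi$. Splitting the identity $A^*_{\bullet T}A_{\bullet T} = I_T$ according to whether a row lies in $J$ or $J^c$ gives $A^*_{J^c T}A_{J^c T} = I_T - A^*_{J T}A_{J T}$, so multiplying the defining identity for $p$ on the left by $A^*_{J^c T}$ collapses the large ($|J^c|$-dimensional) inverse into the $k$-dimensional vector
\begin{equation*}
r := A^*_{J^c T}p = (A^*_{J T}A_{J T})^{-1}A^*_{J^c T}\xi .
\end{equation*}
Using row-orthogonality $AA^* = I$ to replace sums over $T^c$ by minus the corresponding sums over $T$, I then obtain the closed forms
\begin{equation*}
(w^{(e)}_{T^c})_i = \lambda\big(\inner{A_{J^c i},\xi} - \inner{A^*_{J T}A_{J i},\,r}\big),\qquad A_{J\bullet}w^{(e)} = -\lambda\,A_{J T}\,r,
\end{equation*}
valid for every $i\in T^c$ and on all of $J$. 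After substituting $r$, each coordinate is a fixed linear functional of $\xi$, hence a signed sum $\sum_{\ell\in S}\beta_\ell\,\sgn(e^\star_\ell)$ over the support of $e^\star$.

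The second step is probabilistic. Conditioned on the sets, Hoeffding's inequality bounds each such sum by a constant multiple of $\sqrt{\log n}\,\big(\sum_{\ell\in S}\beta_\ell^2\big)^{1/2}$ except with probability $O(n^{-2})$, so a union bound over the at most $n$ coordinates controls both quantities at once. It remains to bound the weight energies. For bound (1) the weights are $A_{\ell i}$ together with the $S$-entries of $A_{J^c T}(A^*_{JT}A_{JT})^{-1}A^*_{JT}A_{Ji}$, whose $S$-restriction carries a factor $A_{ST}$; for bound (2) they are the $S$-entries of $A_{J^c T}(A^*_{JT}A_{JT})^{-1}A_{jT}$, again carrying an $A_{ST}$ factor. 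I would estimate these with scalar/matrix Bernstein inequalities, using $\abs{A_{ij}}^2\le\mu/n$ and the unit column norms, via the facts $\sum_{\ell\in S}A_{\ell i}^2\lesssim s/n$, $\norm{A_{ST}}^2\lesssim s/n$, $\norm{A_{JT}}^2\lesssim\rho_0$ and $\norm{A_{jT}}_2^2\le k\mu/n$. Crucially, the apparently dangerous factor $\norm{(A^*_{JT}A_{JT})^{-1}}\approx 1/\rho_0$ is cancelled by the $\rho_0$-sized factors it sits between, so that each weight energy is of order $s/n$ for bound (1) and of order $sk\mu/(\rho_0^2 n^2)$ for bound (2). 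Inserting $\lambda = \sqrt{n/(\gamma\mu m\log n)}$ together with $s\le C_2\gamma m$ and $m\ge C_1\mu^2\frac{\gamma}{(1-\gamma)^2}k(\log n)^2$ then gives $\norm{w^{(e)}_{T^c}}_\infty\lesssim\sqrt{C_2/\mu} < 3/8$ and $\norm{A_{J\bullet}w^{(e)}}_\infty < 3\lambda/8$, the latter being exactly where the $\gamma/(1-\gamma)^2$ factor is consumed.

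I expect the main obstacle to be the matrix-concentration layer rather than the Hoeffding step. Because $J^c = S\cup\Omega^c$ contains almost all rows, $\norm{A_{J^c T}}$ is close to $1$ and $A^*_{JT}A_{JT}\approx\rho_0 I_T$ is nearly singular; establishing invertibility with a quantitative margin $\lambda_{\min}(A^*_{JT}A_{JT})\gtrsim\rho_0$ (equivalently $\norm{A_{J^c T}} < 1$ with room to spare) is delicate precisely when the corruption is dense, i.e. $\rho\to 1$. This is what forces the measurement count to scale like $\frac{\gamma}{(1-\gamma)^2}\mu^2 k(\log n)^2$ and is the one place where the incoherence parameter $\mu$ and the random sampling must be balanced carefully. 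I would therefore isolate these spectral estimates as separate concentration lemmas and feed them into the conditional Hoeffding argument above.
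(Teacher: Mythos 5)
Your argument is correct and, at its probabilistic core, is the same as the paper's: condition on $\Omega$ and $S$, treat each coordinate of $w^{(e)}_{T^c}$ and of $A_{J\bullet}w^{(e)}$ as a Rademacher sum in $\sgn(e^\star_S)$, apply Hoeffding plus a union bound, and control the weight energies through the spectral estimates $\norm{A_{ST}}^2\lesssim s/n$ (Proposition 2) and $\norm{A^*_{J^c T}A_{J^c T}}\le 1-\rho_0/2$, i.e. $\sigma_{\min}(A^*_{JT}A_{JT})\gtrsim\rho_0$ (Proposition 1). Where you genuinely differ is the algebra: the paper expands $(A_{J^cT^c}A^*_{J^cT^c})^{-1}=(I-A_{J^cT}A^*_{J^cT})^{-1}$ as a Neumann series, splits off the leading term $\lambda A^*_{ST^c}\sgn(e^\star_S)$, and sums the geometric tail to get the factor $(1-\norm{A^*_{J^cT}A_{J^cT}})^{-1}\le 2/\rho_0$; your identity $A^*_{J^cT}p=(A^*_{JT}A_{JT})^{-1}A^*_{ST}\sgn(e^\star_S)$ produces exactly the same two terms and the same $2/\rho_0$ factor in closed form, without convergence bookkeeping, so it is a tidier but equivalent packaging. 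One caution on bound (1): the cancellation of $\norm{(A^*_{JT}A_{JT})^{-1}}\approx\rho_0^{-1}$ in the correction term hinges on using the \emph{restricted} column energy $\norm{A_{Ji}}_2^2\le \mu\,|J|/n=\mu\rho_0$ for $i\in T^c$ (this is precisely the paper's $\norm{u_i}_2^2\le\mu(m-s)/n$), not the ``unit column norms'' you mention; with the full column norm $\norm{A_{\bullet i}}_2=1$ the weight energy becomes $s/(\rho_0 n)$ rather than $\mu s/n$ and the bound fails when $\gamma$ is close to $1$. With that substitution (and the harmless bookkeeping that your weight energies carry a factor $\mu$, which cancels against the $\mu$ in $\lambda^2$), your computation lands on the same conditions $s\le C_2\gamma m$ and $k\lesssim\frac{(1-\gamma)^2}{\gamma}\frac{m}{\mu\log n}$ as the paper; note also that, unlike the companion Lemma~\ref{lem::prove bounds of v^x}, neither your route nor the paper's needs the cross-Gram concentration of Lemma~\ref{lem::show the almost orthogonality of A_ST^c versus A_ST} here.
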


Lemmas \ref{lem::prove bounds of v^x} and \ref{lem::prove bounds of w^e} suggest the existence of ($z^{(x)}, z^{(e)}$). In other words, the solution of the convex program in (\ref{opt::l1-l1 minimization - sub orthogonal matrix}) is exact and unique.

\section{Proofs of dual certificates}
\label{sec::dual certificate}


\subsection{Important auxiliary lemmas}

In this section, we first develop several auxiliary results concerning the main proof.

\begin{lem}
\label{lem::show the invertibility of A*_ST A_ST}
Let $S_0$ be locations sampled randomly from the set $\{1,...,n\}$, $S_0 \sim \Ber(\rho_0)$. With probability of success at least $1 - n^{-1}$, we have
$$
\norm{ I_{k \times k} - \rho_0^{-1} A^*_{S_0 T} A_{S_0 T} } \leq \epsilon,
$$
provided that $\rho_0 \geq C_0 \frac{\epsilon^{-2} \mu k \log n}{n}$ for $C_0 = 2^{3/4} e^2 \sqrt{\pi e}$.
\end{lem}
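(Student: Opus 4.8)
The plan is to recognize the claim as a concentration inequality for a sum of independent, mean-zero, rank-one random matrices and to attack it with a matrix Bernstein-type bound. First I would exploit the orthonormality of $A$: since the columns of $A$ indexed by $T$ are orthonormal, $A_{\bullet T}^* A_{\bullet T} = I_{k \times k}$. Letting $a_i$ denote the $i$-th row of $A_{\bullet T}$ written as a column vector (so $\|a_i\|_2^2 = \sum_{j \in T} |A_{ij}|^2$), the matrix of interest decomposes as $A^*_{S_0 T} A_{S_0 T} = \sum_{i=1}^n \delta_i\, a_i a_i^*$, where $\{\delta_i\}$ are i.i.d.\ $\Ber(\rho_0)$ from the Bernoulli model. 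Its expectation is $\rho_0 \sum_i a_i a_i^* = \rho_0 A_{\bullet T}^* A_{\bullet T} = \rho_0 I_{k \times k}$, so the lemma is exactly the statement that the centered sum $Y := \sum_i (\delta_i - \rho_0) a_i a_i^*$ obeys $\norm{Y} \le \epsilon \rho_0$ with high probability.

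Next I would assemble the two quantities that feed a Bernstein/Chernoff estimate. The summands are uniformly bounded: $\norm{(\delta_i - \rho_0) a_i a_i^*} \le \|a_i\|_2^2 \le k\mu/n$, where the incoherence hypothesis $|A_{ij}|^2 \le \mu/n$ together with $|T| = k$ is used. For the variance, since $(a_i a_i^*)^2 = \|a_i\|_2^2\, a_i a_i^*$ and $\E(\delta_i - \rho_0)^2 = \rho_0(1-\rho_0) \le \rho_0$, one obtains $\sum_i \E\big[(\delta_i-\rho_0)^2 (a_i a_i^*)^2\big] \preceq \rho_0 \tfrac{k\mu}{n}\sum_i a_i a_i^* = \rho_0 \tfrac{k\mu}{n} I_{k \times k}$, so the variance parameter is at most $\rho_0 k\mu/n$.

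Feeding $B = k\mu/n$ and $\sigma^2 = \rho_0 k\mu/n$ into a matrix concentration inequality with deviation level $t = \epsilon \rho_0$ produces a tail of the form $2k\,\exp\!\big(-c\,\epsilon^2 \rho_0 n/(\mu k)\big)$ for $\epsilon \le 1$. Requiring this to be at most $n^{-1}$ and absorbing the dimension prefactor via $k \le n$ (so that $\log(2kn) \le 3\log n$) yields precisely a threshold of the form $\rho_0 \gtrsim \epsilon^{-2}\mu k \log n / n$, which is the claimed condition.

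The main obstacle is the explicit constant $C_0 = 2^{3/4} e^2 \sqrt{\pi e}$: a crude invocation of matrix Bernstein gives the correct form but a smaller numerical constant (on the order of $8$), so matching the stated value requires the specific moment inequality the paper relies on. I would obtain it through the moment method, symmetrizing $Y$, applying a noncommutative Khintchine / Rudelson-type bound to $\E\norm{Y}^p$, and optimizing the moment order $p \approx \log n$; the factors $e^2$ and $\sqrt{\pi e}$ emerge from Stirling estimates on $p!$ in that optimization. Two minor points to track are the restriction $\epsilon \le 1$ (needed to dominate the $Bt/3$ term) and the constraint $\rho_0 \le 1$, which ensures the deviation bound is non-vacuous.
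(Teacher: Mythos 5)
Your proposal is correct and uses the same starting point as the paper: write $A^*_{S_0 T} A_{S_0 T} = \sum_{i=1}^n \delta_i\, a_i a_i^*$ as a sum of independent rank-one matrices with expectation $\rho_0 I_{k\times k}$ (via $A_{\bullet T}^* A_{\bullet T} = I$), and bound $\norm{a_i}_2^2 \le \mu k/n$ from the incoherence hypothesis. Where you diverge is the concentration step: you invoke matrix Bernstein with $B = \mu k/n$ and $\sigma^2 = \rho_0 \mu k /n$, which indeed yields a tail of the form $2k\exp\bigl(-c\,\epsilon^2\rho_0 n/(\mu k)\bigr)$ and hence the threshold $\rho_0 \gtrsim \epsilon^{-2}\mu k\log n/n$ after absorbing the dimensional prefactor with $k \le n$. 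The paper instead applies a Rudelson-type high-order moment bound (its Theorem 5 of the cited reference, a noncommutative-Khintchine consequence) at order $q = \log n$, obtaining $\bigl(\E\norm{I - \rho_0^{-1}\sum_i \delta_i u_i\otimes u_i}^{\log n}\bigr)^{1/\log n} \le C\sqrt{\rho_0^{-1}\mu k\log n/n}$ with $C = 2^{3/4}\sqrt{\pi e}$, and then finishes with Markov's inequality at level $\epsilon$, using $E/\epsilon \le 1/e$ to get probability $e^{-\log n} = n^{-1}$; this is exactly the secondary route you sketch in your last paragraph, and it is where the specific constant $C_0 = C e^2$ comes from. Your Bernstein route is more self-contained and gives a cleaner (in fact smaller) constant at the cost of a dimensional factor $2k$ in the failure probability, which is harmless here; the paper's moment method avoids that prefactor and pins down the stated $C_0$. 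Both arguments are valid, and your attention to the side conditions $\epsilon \le 1$ and $\rho_0 \le 1$ is appropriate.
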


This result has been known in the literature \cite{RV_MatSamp_2007_J}, \cite{Tropp_Subdic_2008_J}, \cite{NDT_LRMA_2009_C}. However, for completeness, we provide a brief proof which relies on high order moment bound of the spectral norm. We emphasize that the lemma is important since it provides us the bound of $\norm{A_{J^c T}}$. In fact, recall that $J \sim \Ber(\rho_0)$ with $\rho_0 = \eta(1 - \rho)$, Lemma \ref{lem::show the invertibility of A*_ST A_ST} suggests that
\begin{equation}
\label{inq::bound spectral norm of I - rho A*_JT A_IT}
\norm{I - \rho_0^{-1} A^*_{JT} A_{J T} } \leq \epsilon,
\end{equation}
provided that $\rho_0 \geq C \epsilon^{-2} \frac{\mu k \log n}{n}$. Furthermore, from the fact that $A^*_{JT} A_{J T} = I - A^*_{J^c T} A_{J^c T}$, we obtain
\begin{align*}
\epsilon &\geq \norm{I - \rho_0^{-1} ( I - A^*_{J^c T} A_{J^c T}) } \\
&\geq \rho_0^{-1} \norm{A^*_{J^c T} A_{J^c T}} - (\rho_0^{-1} - 1).
\end{align*}

\noindent This inequality leads to $\norm{A^*_{J^c T} A_{J^c T}} \leq \rho_0 \epsilon + (1 - \rho_0)$. We conclude the argument by the following proposition.
\begin{prop}
\label{lem::bound spectral norm of A_Jc Tc}
Provided that $m-s \geq 4 C_0 \mu k \log n $. With probability at least $1-n^{-1}$, we have
$$
\norm{A_{J^c T}} \leq \sqrt{1 - \rho_0/2}.
$$
\end{prop}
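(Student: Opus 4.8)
The plan is to obtain the proposition as a direct corollary of Lemma \ref{lem::show the invertibility of A*_ST A_ST} by specializing the accuracy parameter $\epsilon$ and converting the sampling rate $\rho_0$ back into the measurement counts $m$ and $s$. First I would invoke Lemma \ref{lem::show the invertibility of A*_ST A_ST} with the index set $S_0 = J$, noting that $J \sim \Ber(\rho_0)$ with $\rho_0 = \eta(1-\rho) = (m-s)/n$; this gives, with probability at least $1 - n^{-1}$,
$$\norm{I - \rho_0^{-1} A^*_{J T} A_{J T}} \leq \epsilon,$$
as already recorded in (\ref{inq::bound spectral norm of I - rho A*_JT A_IT}), whenever $\rho_0 \geq C_0 \epsilon^{-2} \mu k (\log n)/n$.

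Next I would exploit the orthogonality of $A$. Since the columns of $A$ indexed by $T$ are orthonormal, $A^*_{\bullet T} A_{\bullet T} = I_{k\times k}$, and splitting the rows into $J$ and $J^c$ yields the identity $A^*_{J T} A_{J T} = I - A^*_{J^c T} A_{J^c T}$. Substituting this into the inequality above and rearranging — precisely the two-line computation displayed immediately before the proposition statement — produces
$$\norm{A^*_{J^c T} A_{J^c T}} \leq \rho_0 \epsilon + (1 - \rho_0).$$

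The final step is a bookkeeping choice: take $\epsilon = 1/2$. Then the right-hand side equals $1 - \rho_0/2$, and since $\norm{A_{J^c T}}^2 = \norm{A^*_{J^c T} A_{J^c T}}$, taking square roots gives $\norm{A_{J^c T}} \leq \sqrt{1 - \rho_0/2}$, the claimed bound. With $\epsilon = 1/2$ the hypothesis of Lemma \ref{lem::show the invertibility of A*_ST A_ST} reads $\rho_0 \geq 4 C_0 \mu k (\log n)/n$, and since $\rho_0 = (m-s)/n$ this is exactly the assumption $m - s \geq 4 C_0 \mu k \log n$. I therefore expect no genuine obstacle here: all the probabilistic heavy lifting — the high-order moment bound on the spectral norm of a randomly row-subsampled isometry — is already absorbed into Lemma \ref{lem::show the invertibility of A*_ST A_ST}, so the proposition follows by tracking constants. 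The only point deserving mild care is that the resulting estimate $\sqrt{1 - \rho_0/2}$ is in fact strictly less than $1$ whenever $\rho_0 > 0$, which is what delivers the strict inequality $\norm{A_{J^c T}} < 1$ needed to invoke Lemma \ref{lem::construction of dual certificate (z, e) - Sub orthogonal matrix - tradition}.
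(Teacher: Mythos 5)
Your proposal is correct and follows exactly the paper's own route: invoke Lemma \ref{lem::show the invertibility of A*_ST A_ST} with $S_0 = J$ to get (\ref{inq::bound spectral norm of I - rho A*_JT A_IT}), use $A^*_{JT}A_{JT} = I - A^*_{J^cT}A_{J^cT}$ to deduce $\norm{A^*_{J^cT}A_{J^cT}} \leq \rho_0\epsilon + (1-\rho_0)$, and set $\epsilon = 1/2$ so that the hypothesis becomes $m - s \geq 4C_0\mu k\log n$. The constant-tracking and the identification $\norm{A_{J^cT}}^2 = \norm{A^*_{J^cT}A_{J^cT}}$ are all as in the text preceding the proposition.
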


\begin{proof}[Proof of Lemma \ref{lem::show the invertibility of A*_ST A_ST}]

Define $S_0 = \{i: \delta_i = 1 \}$ where $\delta_i$ is an independent sequence of Bernoulli variables with parameter $\rho_0$ and denote $u_i$ ($i \in S_0$) to be row vectors of $A_{S_0 T}$. With these notation, we have
$$
A^*_{S_0 T} A_{S_0 T} = \sum_{i \in S_0} u_i \otimes u_i = \sum_{i = 1}^n \delta_i u_i \otimes u_i.
$$

\noindent Applying Theorem 5 of \cite{NDT_LRMA_2009_C} with $q = \log n$, we obtain
\begin{align*}
&\left( \E \norm{ I - \rho_0^{-1} \sum_{i = 1}^n \delta_i u_i \otimes u_i }^{\log n}  \right)^{1/\log n} \\
&\leq C \sqrt{\rho_0^{-1} \log n} \max_i \norm{u_i}_2  \\
&\leq C \sqrt{\rho_0^{-1} (\mu k \log n) / n} := E,
\end{align*}
where the constant $C = 2^{3/4} \sqrt{\pi e}$, and the last inequality holds from $\norm{u_i}_2 \leq \sqrt{\mu k / n}$.

\noindent From Markov's inequality, we can establish
\begin{align*}
\Prob \left( \norm{ I - \rho_0^{-1} \sum_{i = 1}^n \delta_i u_i \otimes u_i } \geq \epsilon \right) \leq \frac{E^{\log n}}{(\epsilon)^{\log n}}.
\end{align*}

\noindent By the assumption of the Lemma that $\frac{C \sqrt{\rho_0^{-1} \mu k \log n}}{\epsilon} \leq \frac{1}{e}$, we have with probability of success at least $1 - n^{-1}$,
$$
\norm{ I - \rho_0^{-1} \sum_{i = 1}^n \delta_i u_i \otimes u_i }  \leq \epsilon,
$$
as claimed.
\end{proof}

The next lemma shows the matrix $A_{S_0 T}$ is almost orthogonal to the matrix $A_{S_0 T^c}$ where $S_0$ is a random subset selected from columns of the matrix $A$. This property is important in distinguishing the set $T$ from the set $T^c$ and helping the algorithm identify the true support of $x^{\star}$. We defer the proof to the Appendix.

\begin{lem}
\label{lem::show the almost orthogonality of A_ST^c versus A_ST}
Let $S_0$ be locations sampled randomly from the set $\{1,...,n\}$, $S_0 \sim \Ber(\rho_0)$. With probability at least $1 - 3 n^{-1}$, the following inequality obeys
$$
\norm{ A^*_{S_0 T} u }_{2} \leq  \sqrt{C' \rho_0 \frac{\mu \max \{k ,\log n \}}{n}}
$$
for any column vector $u$ of the matrix $A_{S_0 T^c}$, provided that $\rho_0 \geq C \frac{\mu \max \{ k , \log n\}}{n}$ where $C$ and $C'$ are numerical constants.
\end{lem}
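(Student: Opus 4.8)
The plan is to prove the bound for a single fixed column $u$ with probability at least $1-cn^{-2}$ and then take a union bound over the at most $n$ columns of $A_{S_0 T^c}$. Fix $j\in T^c$, let $u=(A_{ij})_{i\in S_0}$, write $\phi_i:=(A_{i\ell})_{\ell\in T}\in\R^k$ for the $T$-restriction of the $i$-th row of $A$, and let $\delta_i\in\{0,1\}$ be the Bernoulli selectors defining $S_0$. The quantity to control is
$$
W:=A^*_{S_0 T}u=\sum_{i=1}^n\delta_i A_{ij}\,\phi_i .
$$
The first point is that $W$ is centered: $\E W=\rho_0\sum_i A_{ij}\phi_i=\rho_0\big(\inner{\text{col}_\ell,\text{col}_j}\big)_{\ell\in T}=0$, since the $j$-th column of the orthonormal matrix $A$ is orthogonal to every column indexed by $T$ (as $j\notin T$). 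A direct second-moment computation then gives $\E\norm{W}_2^2=\rho_0(1-\rho_0)\sum_i A_{ij}^2\norm{\phi_i}_2^2\le\rho_0\,\mu k/n$, using $A_{ij}^2\le\mu/n$ and $\sum_i A_{i\ell}^2=1$. Note that $\sqrt{\E\norm{W}_2^2}$ already equals the target bound when $k\ge\log n$; so the content of the lemma is a sharp upper tail certifying that $\norm{W}_2$ does not overshoot its standard deviation by more than a constant, which can only come from averaging over the $k=\abs{T}$ coordinates.

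The key reduction I would make is to recognize $\norm{W}_2^2$ as a quadratic form in the centered selectors $\zeta_i:=\delta_i-\rho_0$ with \emph{no} linear term. Let $P:=A_{\bullet T}A^*_{\bullet T}$ be the rank-$k$ orthogonal projector onto the column span of $A_{\bullet T}$, let $D_j:=\diag\big((A_{ij})_i\big)$, and set $G:=D_j P D_j$, so that $G_{ii'}=A_{ij}A_{i'j}P_{ii'}$ and $\norm{W}_2^2=\delta^* G\,\delta$. Expanding $\delta=\rho_0\mathbf 1+\zeta$, the constant term $\rho_0^2\mathbf 1^*G\mathbf 1=\rho_0^2\norm{\sum_i A_{ij}\phi_i}_2^2$ vanishes, and the linear term $2\rho_0\,\zeta^*G\mathbf 1=2\rho_0\,\zeta^* D_j\,(P A_{\bullet j})$ vanishes as well because $A_{\bullet j}$ is orthogonal to the range of $P$, i.e.\ $PA_{\bullet j}=0$. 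Hence $\norm{W}_2^2=\zeta^* G\zeta$ with $\E\,\zeta^*G\zeta=\rho_0(1-\rho_0)\tr(G)\le\rho_0\mu k/n$, and the governing spectral quantities are $\norm{G}\le\mu/n$ (from $\norm{D_j}^2\norm{P}$) and $\norm{G}_F\le\mu\sqrt k/n$ (from $\sum_{i,i'}A_{ij}^2A_{i'j}^2P_{ii'}^2\le(\mu/n)\sum_iA_{ij}^2P_{ii}\le\mu^2k/n^2$, using $P^2=P$ and $P_{ii}\le\mu k/n$).

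Given this reduction, I would bound the Bernoulli chaos $\zeta^*G\zeta$ by a variance-sensitive Hanson--Wright estimate, obtained by splitting off the diagonal $\sum_i\zeta_i^2 G_{ii}$ — a sum of independent bounded variables handled by scalar Bernstein and concentrating at scale $\tr(G)$ up to the term $(\max_iG_{ii})\log n\le(\mu^2k/n^2)\log n$ — and decoupling the off-diagonal part against an independent copy. The leading fluctuation is of order $\tau^2\norm{G}_F\sqrt{\log n}\asymp\rho_0\,\mu\sqrt{k\log n}/n$ with $\tau^2=\rho_0(1-\rho_0)$, which is dominated by the target square $\rho_0\mu\max\{k,\log n\}/n$ because $\sqrt{k\log n}\le\max\{k,\log n\}$. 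Adding back the mean and taking a union bound over the $\le n$ columns turns the per-column $\log n$ probability budget into the $\max\{k,\log n\}$ of the statement; the three exceptional events — the main concentration, the union, and the auxiliary Chernoff bound $\abs{S_0}\le 2\rho_0 n$ used to control $\norm{\zeta}_2^2$ — account for the $3n^{-1}$ failure probability.

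The crux, and the step I expect to be the main obstacle, is the exponential (``spiky'') regime of the chaos tail. A Hanson--Wright bound that ignores the variance of the selectors contributes a term of order $\norm{G}\log n=\mu\log n/n$, which exceeds the target square by a factor $1/\rho_0$ and is therefore fatal once $\rho_0\ll 1$. The bound survives only by exploiting the sparsity of the selectors, i.e.\ that their genuine variance $\tau^2\approx\rho_0$ is small rather than merely that they are bounded, so that the exponential regime carries an extra power of $\rho_0$; this is precisely where the hypothesis $\rho_0\ge C\mu\max\{k,\log n\}/n$ (with $C$ large) is consumed. Concretely I would establish the required variance-aware tail by a high-moment estimate of $\zeta^*G\zeta$ with exponent $q\sim\log n$ followed by Markov's inequality — the same moment-method template already used for Lemma~\ref{lem::show the invertibility of A*_ST A_ST} — carefully separating self-intersections (diagonal) from genuine off-diagonal contributions in the expansion so that no spurious $\log n$ factor appears in the final scale.
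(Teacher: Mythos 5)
Your setup is correct and your centering observation is exactly the one the paper uses: since $j\notin T$ and $A$ is orthonormal, $A^*_{\bullet T}A_{\bullet j}=0$, so $A^*_{S_0T}u=\sum_i(\delta_i-\rho_0)A_{ij}\phi_i$ is genuinely centered, and your second-moment bound $\rho_0\mu k/n$ matches the paper's. From there, however, you take a genuinely different route. The paper does not square the norm: it writes $M=\norm{A^*_{S_0T}u}_2=\sup_{\norm{g}_2\le1}\sum_i(\delta_i-\rho_0)u_i\inner{v_i,g}$ and applies Talagrand's concentration inequality for suprema of empirical processes, with envelope $B=\mu\sqrt{k}/n$ and weak variance $\sigma^2\le\rho_0\mu/n$. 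This packages exactly the variance-awareness you are after into a single Bennett-type exponent $\frac{\tau}{B}\log\bigl(1+\frac{B\tau}{\sigma^2+B\,\E M}\bigr)$; the sparsity of the selectors enters through $\sigma^2\asymp\rho_0\mu/n$ rather than through the crude bound $|\delta_i-\rho_0|\le1$, which is precisely the mechanism you correctly identify as indispensable. The paper then splits into the two cases $\rho_0\gtrless\mu k^2/n$ according to which of $\sigma^2$ and $B\,\E M$ dominates the denominator.

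The gap in your proposal is that the crux --- the variance-aware Hanson--Wright bound for the sparse Bernoulli chaos $\zeta^*G\zeta$ --- is announced but not executed, and the intermediate regimes of the chaos tail are not checked. Your discussion covers only the pure Gaussian term $\sqrt{q}\,\tau^2\norm{G}_F$ and the pure sup-norm term $q\norm{G}$, but a decoupled moment bound at level $q\sim\log n$ also produces mixed terms of order $q^{3/2}\tau\max_i\norm{G_{i\cdot}}_2$ and $q^{2}\max_{i,i'}|G_{ii'}|$. With $\max_i\norm{G_{i\cdot}}_2\le(\mu/n)\sqrt{\mu k/n}$ and $\max_{i,i'}|G_{ii'}|\le(\mu/n)(\mu k/n)$, a direct comparison against the target $\rho_0\mu\max\{k,\log n\}/n$ under the stated hypothesis $\rho_0\ge C\mu\max\{k,\log n\}/n$ leaves these terms borderline --- off by factors of order $\sqrt{\log n}$ and $\log n$ respectively --- so the argument as sketched does not visibly close without either sharpening those estimates or strengthening the hypothesis on $\rho_0$ by a logarithmic factor. (To be fair, the paper's own case analysis also quietly consumes $\rho_0\ge C\mu k\log n/n$ in its first case, which is what the main theorems supply anyway.) If you want to pursue your route, you should either invoke a chaos inequality of Boucheron--Bousquet--Lugosi--Massart type and verify all four regimes explicitly, or simply adopt the paper's device of treating the norm itself as a supremum of a linear empirical process, which avoids the second-order chaos entirely.
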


We are now ready to prove Lemmas \ref{lem::prove bounds of v^x} and \ref{lem::prove bounds of w^e} regarding the dual certificates.

\subsection{Proof of Lemma \ref{lem::prove bounds of v^x}}

\begin{proof}[\textbf{  Part 1}] By the construction of $v^{(x)}$ in (\ref{eqt::construction of w^x_T^c - part 2}),
$$
A_{J \bullet} v^{(x)} = A_{JT} (A^*_{JT} A_{JT})^{-1} \sgn(x^{\star}_T).
$$

\noindent Denote $u_i$ as a row of the matrix $A_{JT}$, we have
\begin{align*}
\norm{A_{J \bullet} v^{(x)}}_{\infty} &= \max_i | u_i  (A^*_{JT} A_{JT})^{-1} \sgn(x^{\star}_T) | \\
&:= \max_i | \inner{W u^*_i, \sgn(x^{\star}_T)} |,
\end{align*}
where we denote $W := (A^*_{JT} A_{JT})^{-1}$. The right-hand side is a sum of zero mean random variables, which can be bounded by Hoeffding's inequality. Hence,
$$
\Prob \left( | \inner{W u^*_i, \sgn(x^{\star}_T)} | \geq \tau \right) \leq 2 \exp \left( - \frac{\tau^2}{2 \norm{W u^*_i}_2^2 }  \right).
$$

\noindent Notice from (\ref{inq::bound spectral norm of I - rho A*_JT A_IT}) that with probability converging to one, $\norm{I - \rho_0 A^*_{JT} A_{J T}} \leq \epsilon$. Thus, $(1 - \epsilon) \rho_0 \leq \sigma_{\min} (A^*_{JT} A_{J T}) \leq \sigma_{\max} (A^*_{JT} A_{J T}) \leq (1 + \epsilon) \rho_0$ where $\sigma_{\min}$ and $\sigma_{\max}$ are minimum and maximum singular values of the matrix. In addition, we have exploited the fact that spectral norm for any matrix $H$ obeys $\norm{H^{-1}} \leq \frac{1}{\sigma_{\min} (H)}$. Thus, conditioning on the event $\oper E = \{ \norm{I - \rho_0 A^*_{JT} A_{J T}} \leq \epsilon \}$, we have
$$
\norm{W} \leq \frac{1}{(1 - \epsilon) \rho_0} \leq \frac{2}{\rho_0},
$$
with the choice of $\epsilon \leq 1/2$. Consequently, combining with $\norm{u_i}_2^2 \leq \mu \frac{k}{n}$, we conclude that $\norm{W u^*_i}_2^2 \leq \frac{4 \mu k}{\rho_0^2 n}$.

\noindent Now setting $\tau := \sqrt{\frac{16 \mu k \log n}{\rho_0^2 n}}$ and taking the union bound over all row vectors of matrix $A_{J T}$, we obtain
\begin{align*}
\Prob \left( \norm{A_{J \bullet} v^{(x)}}_{\infty} \geq \sqrt{\frac{16 \mu k \log n}{\rho_0^2 n}} \right) &\leq 2 |J| e^{- 2 \log n} + \Prob (\oper E^c) \\
&\leq 3 n^{-1},
\end{align*}
where the inequality follows from the total probability rule: $\Prob (F \geq \tau) \leq \Prob(F \geq \tau | \oper E) + \Prob(\oper E^c)$ with $F := \norm{A_{J \bullet} v^{(x)}}_{\infty}$. We conclude that $\norm{A_{J \bullet} v^{(x)}}_{\infty} \leq \frac{\lambda}{4}$ as long as $k \leq C \frac{\lambda^2 \rho_0^2 n}{\mu \log n}$. Replace $\lambda = \sqrt{\frac{1}{\gamma \log n} \frac{n}{m}}$, $\rho_0 = \frac{m-s}{n}$ and $s = \gamma m$, one can see that the upper bound of $k$ automatically follows from the assumption that $k \leq C \frac{(1-\gamma)^2}{\gamma} \frac{m}{\mu^2 (\log n)^2}$.
\end{proof}

\begin{proof}[\textbf{Part 2}] In this part, we need to show that with high probability,
$$
\norm{A^*_{J T^c} A_{J T} (A_{J T}^* A_{JT})^{-1} \sgn(x^{\star}_T) }_{\infty} \leq 3/8.
$$

\noindent Denote $u_i$ as a column vector of the matrix $A_{J T^c}$ and consider $u^*_i A_{J T} (A_{J T}^* A_{JT})^{-1} \sgn(x^{\star}_T) = \inner{(A_{J T}^* A_{JT})^{-1} A_{J T}^* u_i, \sgn(x_T^{\star}) }$, which is a sum of random variables. Its absolute value can be estimated via Hoeffding's inequality,
$$
\Prob ( |u^*_i A_{J T} (A_{J T}^* A_{JT})^{-1} \sgn(x^{\star}_T)| \geq \tau ) \leq 2 \exp \left( - \frac{\tau^2}{2 \norm{z}^2_2 } \right),
$$
where $z := (A_{J T}^* A_{JT})^{-1} A_{J T}^* u_i$. As previously showed, conditioning on the event $\oper E_1 = \{ \norm{I - \rho_0 A^*_{JT} A_{J T}} \leq \epsilon \leq 1/2 \}$, we have $\norm{(A_{J T}^* A_{JT})^{-1}} \leq 2/\rho_0$. In addition, we define the event $\oper E_2 := \{ \norm{ A^*_{J T} u_i }_{2} \leq  \sqrt{C' \rho_0 \frac{\mu \max \{k ,\log n \}}{n}} \}$, which bounds the $\ell_2$ norm of $A_{J T}^* u$ with $J \sim \Ber (\rho_0)$. We showed from Lemma \ref{lem::show the almost orthogonality of A_ST^c versus A_ST} that $\Prob (\oper E_2) \leq 1-3n^{-1}$. Therefore, conditioning on both $\oper E_1$ and $\oper E_2$, we get
$$
\norm{z}_2 \leq \norm{ (A_{J T}^* A_{JT})^{-1}} \norm{A_{J T}^* u_i} \leq \sqrt{C'  \frac{\mu \max \{k ,\log n \}}{\rho_0 n}}.
$$

\noindent Setting $\tau^2 := 4 C' \frac{\mu \max \{k ,\log n \}}{\rho_0 n}$ and taking the union bound, we conclude that
$$
\Prob \left( \norm{v^{(x)}_{T^c}}_{\infty} \geq \tau \right) \leq 2 (n-k) e^{- 2 \log n} + \Prob (\oper E_1^c) + \Prob (\oper E_2^c),
$$
which is less than $6 n^{-1}$. Now replace $\rho_0 = \frac{m-s}{n}$ and assume that $m - s \geq C \mu \max \{ k , \log n\} \log n$ where $C = 4 (8/3)^2 C'$, we achieve $\norm{v^{(x)}_{T^c}}_{\infty} \leq 3/8$ as claimed.

\end{proof}

\subsection{Proof of Lemma \ref{lem::prove bounds of w^e}}

\subsubsection{Preliminary results}

In order to set up the bounds of Lemma \ref{lem::prove bounds of w^e}, it is necessary to estimate the spectral norm bound of $\norm{A_{S T}}$. The following proposition establishes such a bound.

\begin{prop}
\label{prop::spectral norm bound of A_TS}
With probability at least $1 - n^{-1}$,
$$
\norm{A^*_{ST} A_{S T}} \leq \left(1 + \sqrt{\frac{\mu k \log n}{s}} \right) \eta \rho.
$$
\end{prop}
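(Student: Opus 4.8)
The plan is to recognize $A^*_{ST} A_{ST}$ as a sum of independent rank-one random matrices and to control its spectral norm by its mean plus a concentration term, exactly in the spirit of Lemma \ref{lem::show the invertibility of A*_ST A_ST}. Writing $S = \{ i : \delta_i = 1 \}$ with $\{\delta_i\}$ i.i.d.\ $\Ber(\eta \rho)$, and letting $u_i \in \R^k$ denote the $i$-th row of $A$ restricted to the columns in $T$, we have $A^*_{ST} A_{ST} = \sum_{i=1}^n \delta_i\, u_i \otimes u_i$. Since $A$ is orthonormal, its columns indexed by $T$ are orthonormal, so $\sum_{i=1}^n u_i \otimes u_i = A^*_{\bullet T} A_{\bullet T} = I_k$, and therefore $\E[A^*_{ST} A_{ST}] = \eta \rho\, I_k$, whose spectral norm is exactly $\eta \rho$. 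I also record the two elementary facts $\eta \rho = s/n$ (from $\eta = m/n$, $\rho = s/m$) and $\norm{u_i}_2^2 \leq \mu k / n$ (since $T$ has $k$ indices and $\abs{A_{ij}}^2 \leq \mu/n$).

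By the triangle inequality, $\norm{A^*_{ST} A_{ST}} \leq \eta \rho + \norm{A^*_{ST} A_{ST} - \eta \rho\, I_k}$, so it suffices to show that the centered sum $\sum_{i=1}^n (\delta_i - \eta \rho)\, u_i \otimes u_i$ has spectral norm at most $\sqrt{\mu k \log n / s}\;\eta \rho$ with high probability. For this I would invoke the same high-order moment inequality used in the proof of Lemma \ref{lem::show the invertibility of A*_ST A_ST} (Theorem 5 of \cite{NDT_LRMA_2009_C}) with $q = \log n$. Applied to the normalized centered variable $I_k - (\eta \rho)^{-1} \sum_i \delta_i u_i \otimes u_i$ and then rescaled by $\eta \rho$, it yields $\bigl( \E \norm{A^*_{ST} A_{ST} - \eta \rho\, I_k}^{\log n} \bigr)^{1/\log n} \leq C \sqrt{\eta \rho\, \log n}\;\max_i \norm{u_i}_2 \leq C \sqrt{\eta \rho\, \mu k \log n / n}$.

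The last step is Markov's inequality, carried out verbatim as in Lemma \ref{lem::show the invertibility of A*_ST A_ST}: since $e^{-\log n} = n^{-1}$, taking the threshold to be $e$ times the moment bound above gives $\norm{A^*_{ST} A_{ST} - \eta \rho\, I_k} \leq e C \sqrt{\eta \rho\, \mu k \log n / n}$ with probability at least $1 - n^{-1}$. Substituting $\eta \rho = s/n$ rewrites the right-hand side as $eC\, \eta \rho \sqrt{\mu k \log n / s}$, which matches the claimed deviation $\sqrt{\mu k \log n / s}\;\eta \rho$ once the absolute constant $eC$ from \cite{NDT_LRMA_2009_C} is absorbed (or the sharp constant of Theorem 5 is tracked). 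Combining this with the triangle inequality of the previous paragraph yields the proposition.

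I expect no genuine obstacle here: the argument is essentially a re-run of Lemma \ref{lem::show the invertibility of A*_ST A_ST} with the normalization stripped off, so the heavy lifting (the matrix moment bound) is already available. The only points that demand care are (i) correctly pinning the mean to $\eta \rho\, I_k$ through orthonormality of the $T$-columns of $A$, and (ii) the bookkeeping that converts $\eta \rho$ together with $\max_i \norm{u_i}_2^2 \leq \mu k / n$ into precisely the factor $\sqrt{\mu k \log n / s}$ via $s = \eta \rho n$ — in particular ensuring that the absolute constant emerging from the moment inequality is reconciled with the coefficient $1$ displayed in the statement.
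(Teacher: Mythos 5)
Your proposal is correct and is essentially the paper's own argument: the paper simply invokes Lemma \ref{lem::show the invertibility of A*_ST A_ST} with $S \sim \Ber(\eta\rho)$ and the choice $\epsilon_1 = \sqrt{\mu k \log n / s}$ (so that the hypothesis on $\rho_0$ is saturated and $\norm{A^*_{ST}A_{ST}} \leq (1+\epsilon_1)\eta\rho$ follows), whereas you inline that lemma's proof via the same moment bound of \cite{NDT_LRMA_2009_C} and Markov's inequality. The constant-absorption caveat you flag is equally present in the paper's version, since the constant $C_0 > 1$ in Lemma \ref{lem::show the invertibility of A*_ST A_ST} is silently dropped there as well.
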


\begin{proof}

Recall that $S \sim \Ber (\eta \rho)$. By Lemma \ref{lem::show the invertibility of A*_ST A_ST}, with high probability, we have
\begin{equation}
\label{eqt::bound norm of I - rho A*_ST A_ST}
\norm{I - (\eta \rho)^{-1} A^*_{ST} A_{S T}} \leq \epsilon_1,
\end{equation}
provided $\eta \rho \geq C_0 \frac{\epsilon^{-2}_1 \mu k \log n}{n}$. Note that $\eta \rho =  \frac{s}{n}$, thus the condition is equivalent to $s \geq \epsilon^{-2}_1 \mu k \log n$. This inequality is automatically satisfied by setting $\epsilon_1 = \sqrt{\frac{\mu k \log n}{s}}$. Therefore, (\ref{eqt::bound norm of I - rho A*_ST A_ST}) gives us
\begin{align*}
\norm{A^*_{ST} A_{S T}} \leq \left( 1 + \sqrt{\frac{\mu k \log n}{s}} \right) \eta \rho
\end{align*}
as claimed.
\end{proof}

\subsubsection{Main proofs}

\begin{proof} [\textbf{ Part 1}]  We will start with the construction of $w^{(e)}_{T^c}$ in (\ref{eqt::construction of w^e_T^c - part 2}). Our goal is to show that with high probability,
$$
V := \lambda \norm{A^*_{J^c T^c } (A_{J^c T^c} A^*_{J^c T^c })^{-1} \left(
                                  \begin{array}{c}
                                    \sgn(e^{\star}_S) \\
                                    0_{\Omega^c} \\
                                  \end{array}
                                \right)}_{\infty} < 1/4.
$$

\noindent By series expansion, $(I-H)^{-1} = I + \sum_{j=1}^{\infty} H^j$. We have
\begin{align*}
A^*_{J^c T^c } (A_{J^c T^c} A^*_{ J^c T^c})^{-1} &= A^*_{J^c T^c } (I - A_{J^c T} A^*_{J^c T })^{-1}\\
& = A^*_{J^c T^c } + \sum_{j\geq 1} A^*_{J^c T^c } (A_{J^c T} A^*_{J^c T})^j.
\end{align*}

\noindent Toward that end, denote $r := \left(
                                  \begin{array}{c}
                                    \sgn(e^{\star}_S) \\
                                    0_{\Omega^c} \\
                                  \end{array}
                                \right)$. To establish the upper bound of $V$, we elaborate on the $\ell_{\infty}$-norms of two quantities relating to summands of the series expansion. The bound of $V$ is then followed from the triangular inequality. For the first term $V_1 := \lambda \norm{A^*_{ J^c T^c} r}_{\infty}$, we have
$$
V_1 = \lambda \norm{A^*_{S T^c} \sgn(e^{\star}_S) }_{\infty} = \lambda \max_i |\inner{u_i, \sgn(e_S)}|,
$$
where $u_i$ is denoted as a column vector of $A_{S T^c}$. We notice that $\inner{u_i, \sgn(e^{\star}_S)}$ is a sum of zero mean random variables (by the random assumption on the sign of $e_S$). Applying Hoeffding's inequality yields
\begin{align*}
\Prob \left( | \inner{u_i, \sgn(e^{\star}_S)} | \geq \tau\right) &\leq 2 \exp \left( - \frac{2 \tau^2}{4 \norm{u_i}_2^2} \right) \\
&\leq 2 \exp \left( -\frac{\tau^2}{2} \frac{n}{\mu s}  \right),
\end{align*}
where the last inequality is due to $\norm{u_i}_2^2 \leq \frac{\mu s}{n}$. Next, choosing $\tau = \frac{1}{8 \lambda}$ and taking the union bound over all $i \in T^c$ yield
$$
\Prob \left(  \lambda \norm{A^*_{J^c T^c } r}_{\infty} \geq \frac{1}{8} \right) \leq \exp \left( - \frac{n}{128 \mu \lambda^2 s} + \log (2n)\right),
$$
which is bounded away by $e^{-\log n} = n^{-1}$ as long as $s \leq C \frac{n}{\mu \lambda^2 \log n} = C \gamma m$.

For the remainder term, denote the quantity $V_r := \lambda \norm{\sum_{j \geq 1} A^*_{J^c T^c } (A_{J^c T} A^*_{J^c T})^j r}_{\infty}$, we have
\begin{align*}
V_r &= \lambda \norm{\sum_{j \geq 1} (A^*_{J^c T^c } A_{J^c T}) (A^*_{J^c T} A_{J^c T})^{j-1} A^*_{J^c T} r}_{\infty} \\
&= \lambda \norm{\sum_{j \geq 1} (A^*_{J T^c } A_{J T}) (A^*_{ J^c T} A_{J^c T})^{j-1} A^*_{ST} \sgn(e^{\star}_S)}_{\infty} \\
&= \lambda \max_{i \in T^c} \left| \sum_{j \geq 0} u^*_i A_{J T} (A^*_{J^cT} A_{J^c T})^j A^*_{ ST} \sgn(e^{\star}_S) \right|,
\end{align*}
where $u_i$ is denoted as the $i^{th}$ column vector of $A_{J T^c}$. Notice that vector $u_i$ has length $(m-s)$.

Let $W :=  \sum_{j \geq 0} A_{J T} (A^*_{ J^c T} A_{J^c T})^j A^*_{ S T}$. We consider the term inside the max function $V_i = | \inner{W^* u_i, \sgn(e^{\star}_S)} |$. Again, this quantity's bound is an application of Hoeffding's inequality,
$$
\Prob \left( | \inner{W^* u_i, \sgn(e^{\star}_S)} | \geq \tau\right) \leq 2 \exp \left( - \frac{2 \tau^2}{4 \norm{W^* u_i}_2^2} \right)
$$

\noindent Next, we have
\begin{align*}
\norm{W^* u_i } &\leq \norm{A_{S T}} \left( \sum_{j \geq 0} \norm{A^*_{T J^c} A_{J^c T}}^j \right) \norm{A^*_{T J}}  \norm{u_i}_2 \\
&= \frac{\norm{A_{S T}} \norm{A^*_{ J T}}}{1 - \norm{A^*_{J^c T} A_{J^c T}}} \norm{u_i}_2.
\end{align*}

\noindent We now provide the spectral and $\ell_2$ norms of these terms. Define the following three events
$$
\oper E_1 := \{ \norm{A^*_{J^c T} A_{J^c T}} \leq 1 - \rho_0/2 \},
$$
$$
\oper E_2 := \{ \norm{A^*_{J T }} \leq \sqrt{3 \rho_0/2} \}, \quad \text{and}
$$
$$
\oper E_3 := \{ \norm{A_{S T}} \leq (1 + \sqrt{\frac{\mu k \log n}{s}})^{1/2} \sqrt{ \eta \rho} \}.
$$
Recall by Proposition \ref{lem::bound spectral norm of A_Jc Tc} that the event $\oper E_1$ occurs with high probability. Moreover, from Lemma \ref{lem::show the invertibility of A*_ST A_ST}, with high probability $\norm{ I_{k \times k} - \rho^{-1}_0 A^*_{JT} A_{J T}  } \leq \epsilon$ provided $\rho_0 \geq C \epsilon^{-2} \frac{\mu k \log n}{n}$. Thus, $\norm{A^*_{JT}} \leq \sqrt{\rho_0 (1+\epsilon)} \leq \sqrt{3 \rho_0/2} $, assuming that $\epsilon \leq 1/2$. Finally, $\oper E_3$ occurs by Proposition \ref{prop::spectral norm bound of A_TS} and the fact that $\norm{u_i}_2^2 \leq \frac{\mu (m-s)}{n} = \mu \rho_0$. Conditioning on these events, we conclude that
\begin{align*}
\norm{W^* u_i}_2^2 &\leq \frac{(3 \rho_0/2) (\mu \rho_0) (\eta \rho)}{(\rho_0/2)^2}  \left(1 + \sqrt{\frac{\mu k \log n}{s}} \right)  \\
&\leq 6 \mu \frac{s}{n} \left(1 + \sqrt{\frac{\mu k \log n}{s}} \right).
\end{align*}

We consider two following cases regarding the size of the set $S$:

\textit{Case 1:} if $s \geq \mu k \log n$, then $\norm{W^* u_i}_2^2 \leq 12 \mu \frac{s}{n}$. Set $\tau = \frac{1}{8 \lambda}$ and take the union bound over all $i \in T^c$, we attain
\begin{align*}
\Prob \left( V_r \geq \frac{1}{8 \lambda}   \right) &\leq 2 \exp \left( -\frac{1}{256 \lambda^2 \mu \frac{s}{n}}  + \log n \right) \\
&{ }+ \Prob (\oper E_1^c) + \Prob (\oper E_2^c) + \Prob (\oper E_3^c) .
\end{align*}

\noindent By assuming $s \leq C \gamma m$ with a sufficiently small constant $C$, $\mu \lambda^2 \frac{s}{n} \leq C \frac{1}{\log n}$. Hence, $V_r \leq \frac{1}{8 \lambda}$ with probability $1 - 5 n^{-1}$.

\textit{Case 2:} if $s \leq \mu k \log n$, then $\norm{W^* u_i}_2^2 \leq 12 \mu \frac{s}{n} \sqrt{\frac{\mu k \log n}{s}}$. Again, setting $\tau = \frac{1}{8 \lambda}$ and taking the union bound, we have
\begin{align*}
\Prob \left( V_r \geq \frac{1}{8 \lambda} \right) &\leq 2 \exp \left( -\frac{1}{256 \lambda^2 \mu \frac{s}{n} \sqrt{\frac{\mu k \log n}{s}}}  + \log n \right) \\
&\leq 2 e^{-\log n} = 2 n^{-1},
\end{align*}
provided that $s \leq C \gamma m$ and $k \leq C' \frac{\gamma m}{\mu \log n}$.

We complete the proof by employing the triangular inequality: $V \leq V_1 + V_r \leq \frac{1}{4 \lambda}$.
\end{proof}

\begin{proof} [\textbf{ Part 2}]
In this part, we need to show that with high probability
$$
V := \lambda \norm{A_{J T^c } A^*_{J^c T^c } (A_{J^c T^c} A^*_{ J^c T^c})^{-1} \left(
                                  \begin{array}{c}
                                    \sgn(e^{\star}_S) \\
                                    0_{\Omega^c} \\
                                  \end{array}
                                \right)}_{\infty} < \frac{\lambda}{4}.
$$

Again by series expansion, we first have $(A_{J^c T^c} A^*_{ J^c T^c})^{-1} = \sum_{j \geq 0} (A_{J^c T} A^*_{J^c T})^j$. Moreover, since $A_{J T^c } A^*_{ J^c T^c} = -A_{J T} A^*_{J^c T}$, we arrive at
\begin{align*}
&A_{J T^c } A^*_{ J^c T^c} (A_{J^c T^c} A^*_{ J^c T^c})^{-1} \left(
                                  \begin{array}{c}
                                    \sgn(e^{\star}_S) \\
                                    0_{\Omega^c} \\
                                  \end{array}
                                \right) \\
&= \sum_{j \geq 0} A_{J T} A^*_{J^c T} (A_{J^c T} A^*_{J^c T})^j \left(
                                  \begin{array}{c}
                                    \sgn(e^{\star}_S) \\
                                    0_{\Omega^c} \\
                                  \end{array}
                                \right) \\
&= \sum_{j \geq 0} A_{J T} (A^*_{J^c T} A_{J^c T})^j A^*_{ST} \sgn(e^{\star}_S).
\end{align*}

\noindent Let $W := \sum_{j\geq 0} (A^*_{J^c T} A_{J^c T})^j A^*_{ S T}$ and let $u_i \in \R^k$ be a row vector of $A_{J T}$. We consider the following bound $V_i := \left| \inner{W^* u^*_i, \sgn(e^{\star}_S)} \right|$. Analogous to the preceding proofs, Hoeffding's inequality is used to estimate $V_i$,
$$
\Prob \left( V_i \geq \tau  \right) \leq 2 \exp \left( - \frac{2 \tau^2}{4 \norm{W^* u^*_i}_2^2} \right).
$$

\noindent The spectral norm of $W$ can now be estimated as follows
$$
\norm{W} \leq \norm{A^*_{ST}} (\sum_{j\geq 0} \norm{A^*_{ J^c T} A_{J^c T}}^j) = \frac{\norm{A^*_{ST}}}{1 - \norm{A^*_{J^c T} A_{J^c T}}}.
$$

\noindent Conditioning on events $\oper E_1$ and $\oper E_3$ in Part 1, together with $\norm{u_i}_2 \leq \sqrt{\mu \frac{k}{n}}$, we get
$$
\norm{W^* u^*_i}_2 \leq \norm{W} \norm{u_i}_2 \leq \sqrt{ 4 \frac{2 \eta \rho }{\rho_0^2} \frac{\mu k}{n}}.
$$

\noindent Set $\tau = 1/4$ and take the union bound over all $i \in J$,
$$
\Prob (V \geq 1/4 \text{ }|\text{ } \oper E_1, \oper E_3) \leq 2 \exp \left( - \frac{\rho_0^2 n}{256 \mu \eta \rho k} + \log n \right).
$$

\noindent The right-hand side is less than $2 e^{-\log n} = 2 n^{-1}$ as long as $\frac{\rho_0^2 n}{256 \mu \eta \rho k} = \frac{(m-s)^2}{256 \mu s k} \geq 6 \log n$. This is automatic from the assumptions that $k \leq C \frac{(1-\gamma)^2}{\gamma} \frac{m}{\mu \log n}$ and $s \leq \gamma m$.
\end{proof}

\section{Proof of Theorems \ref{thm::main theorem - with dense noise and without sparse noise} and \ref{thm::main theorem - with dense noise and sparse noise}: Dealing with both sparse and dense errors}
\label{sec::proof of last 2 theorems}

\subsection{Proof of Theorem \ref{thm::main theorem - with dense noise and without sparse noise}}

Our proof technique is adapted from \cite{CP_MatCompletion_2009_J} (see also \cite{ZWLCM_StablePCP_2010_C}) but in a different context. In \cite{CP_MatCompletion_2009_J}, the authors studied the matrix completion problem under noisy observations, while we consider the conventional compressed sensing case. Let $\widehat{x}$ be the optimal solution of (\ref{opt::l1-l1 minimization with noise and without sparse noise - sub orthogonal matrix}). Since $x^{\star}$ is also a feasible solution of (\ref{opt::l1-l1 minimization with noise and without sparse noise - sub orthogonal matrix}), $\norm{A_{\Omega \bullet} x^{\star} - b}_2 \leq \sigma$. We have an important observation
\begin{equation}
\label{inq::first observation}
\norm{A_{\Omega \bullet} (\widehat{x} - x^{\star})}_2 \leq \norm{A_{\Omega \bullet} \widehat{x} - b}_2 + \norm{A_{\Omega \bullet} x^{\star} - b}_2 \leq 2 \sigma.
\end{equation}

\noindent Denote $g = \widehat{x} - x^{\star}$, our goal is to establish a bound for $\norm{g}_2$. At first, note that $\norm{g}^2_2 = \norm{A g}^2_2$, the triangular inequality gives us
\begin{equation}
\begin{split}
\label{inq::bound l2-norm of g - first step}
\norm{g}^2_2 = \norm{A_{\Omega \bullet} g}^2_2 + \norm{ A_{\Omega^c \bullet} g}^2_2 = 4 \sigma^2 + \norm{ A_{\Omega^c \bullet} g}_2^2.
\end{split}
\end{equation}

It now remains to bound the second term. Our strategy is to bound $\norm{A^*_{\Omega^c T } A_{\Omega^c \bullet} g}_2$ and $\norm{A^*_{\Omega^c T^c } A_{\Omega^c \bullet} g}_2$ separately, then the bound of $\norm{ A_{\Omega^c \bullet} g}^2_2$ is obtained via the following expression
\begin{equation}
\label{inq::expression of l2 A_omegaC g}
\begin{split}
\norm{ A_{\Omega^c \bullet} g}^2_2 &= \norm{A^*_{\Omega^c \bullet} A_{\Omega^c \bullet} g}^2_2 \\
&= \norm{A^*_{\Omega^c T } A_{\Omega^c \bullet} g}_2^2 + \norm{A^*_{\Omega^c T^c } A_{\Omega^c \bullet} g}_2^2,
\end{split}
\end{equation}
where the first expression follows from $ \norm{A^*_{ \Omega^c \bullet} A_{\Omega^c \bullet} g}_2^2 = \inner{A_{\Omega^c \bullet} g, A_{\Omega^c \bullet} A^*_{ \Omega^c \bullet} A_{\Omega^c \bullet} g} = \inner{A_{\Omega^c \bullet} g, A_{\Omega^c \bullet} g} = \norm{ A_{\Omega^c \bullet} g}^2_2$.

To bound $\norm{A^*_{\Omega^c T^c } A_{\Omega^c \bullet} g}_2$, we bring Lemma \ref{lem::construction of dual certificate (z, e) - Sub orthogonal matrix - tradition} into action: for any perturbation pair ($h,0$) satisfying $A_{\Omega \bullet} h = 0$, we have
\begin{equation}
\norm{x^{\star} + h}_1 \geq \norm{x^{\star}}_1  + \frac{1}{4} \norm{h_{T^c}}_1.
\end{equation}

\noindent By setting $h = A^*_{\Omega^c \bullet } A_{\Omega^c \bullet} g$, we see that $A_{\Omega \bullet} h = 0$. Hence, applying Lemma \ref{lem::construction of dual certificate (z, e) - Sub orthogonal matrix - tradition} yields
\begin{equation}
\label{inq::bound l1 norm of x + A*_Omega^c A_Omega^c g}
\norm{x^{\star} + A^*_{\Omega^c \bullet } A_{\Omega^c \bullet} g}_1 \geq \norm{x^{\star}}_1 + \frac{1}{4} \norm{A^*_{\Omega^c T^c } A_{\Omega^c \bullet} g}_1.
\end{equation}

\noindent Furthermore, noting that $x^{\star} + g$ is the optimal solution of the convex program (\ref{opt::l1-l1 minimization with noise and without sparse noise - sub orthogonal matrix}). This yields
$$
\norm{x^{\star}}_1 \geq \norm{x^{\star} + g}_1 \geq \norm{x^{\star} + A^*_{ \Omega^c \bullet} A_{\Omega^c \bullet} g}_1 - \norm{A^*_{\Omega \bullet} A_{\Omega \bullet} g}_1.
$$

\noindent  In combination with (\ref{inq::bound l1 norm of x + A*_Omega^c A_Omega^c g}), we have an important inequality: $\norm{A^*_{ \Omega^c T^c} A_{\Omega^c \bullet} g}_1 \leq 4  \norm{A^*_{ \Omega \bullet} A_{\Omega \bullet} g}_1$. Since the $\ell_1$-norm dominates the $\ell_2$-norm, $\norm{A^*_{ \Omega^c T^c} A_{\Omega^c \bullet} g}_2 \leq \norm{A^*_{\Omega^c T^c } A_{\Omega^c \bullet} g}_1$ and we have
\begin{equation}
\begin{split}
\label{inq::bound l2-norm of A*_T^c Omega^c u}
\norm{A^*_{ \Omega^c T^c} A_{\Omega^c \bullet} g}_2 &\leq 4 \norm{A^*_{ \Omega \bullet} A_{\Omega \bullet} g}_1 \\
& \leq 4 \sqrt{n} \norm{A^*_{ \Omega \bullet } A_{\Omega \bullet} g}_2 \\
&= 4 \sqrt{n} \norm{A_{\Omega \bullet} g}_2.
\end{split}
\end{equation}

It is left to develop a bound for $\norm{A^*_{\Omega^c T} A_{\Omega^c \bullet} g}_2$. We observe that $A_{\Omega T} A^*_{\Omega^c T} = - A_{\Omega T^c} A^*_{\Omega^c T^c}$ due to the orthogonality property of $A$. Thus, for any vector $u$,
\begin{align*}
\norm{A_{\Omega T} A^*_{\Omega^c T} u}_2 &= \norm{A_{\Omega T^c} A^*_{\Omega^c T^c } u}_2  \\
&\leq \norm{A_{\bullet T^c} A^*_{\Omega^c T^c } u}_2 = \norm{ A^*_{\Omega^c T^c } u}.
\end{align*}

\noindent In addition, applying Lemma \ref{lem::show the invertibility of A*_ST A_ST} with $\Omega \sim \Ber (\eta)$ we have $\norm{I - \eta^{-1} A^*_{\Omega T} A_{\Omega T}} \leq \epsilon \leq 1/2$ with high probability. Hence,
\begin{align*}
&\eta^{-1} \norm{A_{\Omega T} A^*_{\Omega^c T} u}_2^2 \\
&= \eta^{-1} \inner{A_{\Omega T} A^*_{\Omega^c T} u, A_{\Omega T} A^*_{\Omega^c T} u}  \\
&= \eta^{-1} \inner{ A^*_{\Omega^c T} u, A^*_{\Omega T} A_{\Omega T} A^*_{\Omega^c T} u } \\
&= \inner{A^*_{\Omega^c T} u, A^*_{ \Omega^c T} u} - \inner{A^*_{ \Omega^c T} u, (I - \eta^{-1} A^*_{ \Omega T} A_{\Omega T}) A^*_{ \Omega^c T} u} \\
&\geq \norm{A^*_{ \Omega^c T} u}_2^2 - \norm{I - \eta^{-1} A^*_{ \Omega T} A_{\Omega T}} \norm{A^*_{ \Omega^c T} u}_2^2 \\
&\geq \frac{1}{2}\norm{A^*_{ \Omega^c T} u}_2^2.
\end{align*}

\noindent In other words, $ \sqrt{\eta/2}\norm{A^*_{\Omega^c T} u}_2 \leq \norm{A_{\Omega T} A^*_{ \Omega^c T} u}_2$. Combining these pieces together while setting $u = A_{\Omega^c \bullet} g$ yields
\begin{equation*}
\label{inq::bound l2-norm of A*_T Omega^c u}
\norm{A^*_{ \Omega^c T} A_{\Omega^c \bullet} g}_2 \leq \frac{1}{\sqrt{\eta/2}} \norm{A^*_{\Omega^c T^c} A_{\Omega^c \bullet} g}_2.
\end{equation*}

\noindent The right-hand side is in turn bounded by $\frac{4\sqrt{n}}{\sqrt{\eta/2}} \norm{A^*_{\Omega^c T^c}}_2$ due to (\ref{inq::bound l2-norm of A*_T^c Omega^c u}). Inserting this bound with the bound in (\ref{inq::bound l2-norm of A*_T^c Omega^c u}) into (\ref{inq::expression of l2 A_omegaC g}), we obtain
\begin{equation*}
\norm{ A_{\Omega^c \bullet} g}_2^2 \leq \left( \frac{2}{\eta} + 1 \right) 16 n \norm{ A_{\Omega \bullet} g}_2^2 \leq \left( \frac{2}{\eta} + 1 \right) 64 \sigma^2 n ,
\end{equation*}
where the last inequality follows from the known bound in (\ref{inq::first observation}). Combining this result with (\ref{inq::bound l2-norm of g - first step}) we can conclude that
$$
\norm{g}_2 \leq 2 \sigma + 8 \sigma \sqrt{n \left( \frac{2}{\eta} + 1 \right)},
$$
as claimed.

\subsection{Proof of Theorem \ref{thm::main theorem - with dense noise and sparse noise}}

\begin{proof}
The proof of this theorem is considerably more involved since we have to control two residual components $\widehat{x} - x^{\star}$ and $\widehat{e} - e^{\star}$, where $(\widehat{x}, \widehat{e})$ is the optimal solution pair of (\ref{opt::l1-l1 minimization with noise - sub orthogonal matrix}). Set $g^{(x)} = \widehat{x} - x^{\star}$ and $g^{(e)} = \widehat{e} - e^{\star}$, our goal is to bound $\norm{g^{(x)}}_2 + \norm{g^{(e)}}_2$.

At first, notice that $(x^{\star}, e^{\star})$ and $(\widehat{x}, \widehat{e})$ are pairs of feasible solution, we establish an important bound
\begin{equation}
\label{inq::first bound}
\begin{split}
\norm{A_{\Omega \bullet} g^{(x)} + g^{(e)} }_2 &\leq \norm{A_{\Omega \bullet} \widehat{x} + \widehat{e} - b}_2 \\
&{ }+  \norm{A_{\Omega \bullet} x^{\star} + e^{\star} - b}_2 \leq 2 \sigma.
\end{split}
\end{equation}

To bound $ \norm{g^{(x)}}_2 + \norm{g^{(e)}}_2$, we first express $\norm{g^{(x)}}_2$ as $\norm{g^{(x)}}_2^2 = \norm{A g^{(x)}}_2^2 = \norm{A_{\Omega^c \bullet} g^{(x)}}_2^2 + \norm{A_{\Omega \bullet} g^{(x)}}_2^2$. Furthermore, from the fact that $\frac{1}{2}\norm{a + b}^2_2 + \frac{1}{2}\norm{a-b}^2_2 = \norm{a}^2_2 + \norm{b}^2_2$ for any vectors $a$ and $b$, we get
\begin{equation}
\begin{split}
\label{inq::bound l-2 of g^x + l-2 of g^2 - first step}
&\norm{g^{(x)}}^2_2 + \norm{g^{(e)}}^2_2 \\
&= \norm{A_{\Omega^c \bullet} g^{(x)}}^2_2 + \norm{A_{\Omega \bullet} g^{(x)}}^2_2 + \norm{g^{(e)}}^2_2 \\
&\leq \norm{A_{\Omega^c \bullet} g^{(x)}}^2_2 +  \frac{1}{2} \norm{A_{\Omega \bullet} g^{(x)} + g^{(e)}}^2_2 + \frac{1}{2} \norm{A_{\Omega \bullet} g^{(x)} - g^{(e)}}^2_2 \\
&\leq 2 \sigma^2 + \norm{A_{\Omega^c \bullet} g^{(x)}}^2_2 + \frac{1}{2}\norm{A_{\Omega \bullet} g^{(x)} - g^{(e)}}^2_2.
\end{split}
\end{equation}

It is left to bound the sum of the second and third term on the right-hand side of the equation. We express this sum as
\begin{equation*}
\begin{split}
&\norm{A_{\Omega^c \bullet} g^{(x)}}^2_2 +  \frac{1}{2} \norm{A_{\Omega \bullet} g^{(x)} - g^{(e)}}^2_2 \\
&= \norm{A_{\Omega^c \bullet} g^{(x)}}^2_2 + \frac{1}{2} \norm{A_{S \bullet} g^{(x)} - g^{(e)}_S}^2_2 + \frac{1}{2} \norm{A_{J \bullet} g^{(x)} - g^{(e)}_J}^2_2,
\end{split}
\end{equation*}
where we recall that indices in $S$ are locations where measurements are available but unreliable and indices in $J$ are locations where measurements are available and trustworthy and $\Omega = S \cup J$. To upper bound this sum, we consider the establishment of the upper bounds for each term $M_1 := \norm{A_{\Omega^c \bullet} g^{(x)}}^2_2 + \norm{A_{S \bullet} g^{(x)} - g^{(e)}_S}^2_2$ and $M_2 := \norm{A_{J \bullet} g^{(x)} - g^{(e)}_J }^2_2$ separately.

One of the crucial steps in bounding $M_1$ and $M_2$ is the use of Lemma \ref{lem::construction of dual certificate (z, e) - Sub orthogonal matrix - tradition}, which states that for any perturbation pair ($h,f$) satisfying $f=-A_{\Omega \bullet} h$,
\begin{equation*}
\begin{split}
\norm{x^{\star} + h}_1 + \lambda \norm{e^{\star}+f}_1 &\geq \norm{x^{\star}}_1 + \lambda \norm{e^{\star}}_1 \\
&{}+ \frac{\lambda}{4} \norm{f_J}_1 + \frac{1}{4} \norm{h_{T^c}}_1.
\end{split}
\end{equation*}

\noindent Now let us denote
$$
f^+ := -\frac{1}{2} (A_{\Omega \bullet} g^{(x)} + g^{(e)}) \quad \text{and} \quad f^- := -\frac{1}{2} (A_{\Omega \bullet} g^{(x)} - g^{(e)}),
$$
as well as
$$h^+ := -A^*_{ \Omega \bullet} f^+  \quad \text{and} \quad h^- := -A^*_{ \Omega \bullet} f^-  +  A^*_{ \Omega^c \bullet } A_{\Omega^c \bullet} g^{(x)}.
$$

\noindent It is easy to establish the following properties from this construction
\begin{equation}
\begin{cases}
\label{eqt::properties of h+ h- f+ f-}
g^{(x)} = -h^+ + h^- \\
g^{(e)} = -f^+ + f^- \\
\norm{h^+}_2 = \norm{f^+}_2 \leq \sigma  \\
M_1 = \norm{A_{\Omega^c \bullet} g^{(x)}}^2_2 + 2 \norm{f^-_S}_2^2  \\
M_2 = 2 \norm{f^-_J}_2^2.
\end{cases}
\end{equation}

\subsubsection{Bound $M_2$}

At first, since $(x^{\star} + g^{(x)}, e^{\star} + g^{(e)})$ is the pair of optimal solution of the convex program, we have $\norm{x^{\star}}_1 + \norm{e^{\star}}_1 \geq \norm{x^{\star} + g^{(x)}}_1 + \norm{e^{\star} + g^{(e)}}_1$. Furthermore, decomposing $g^{(x)}$ and $g^{(e)}$ and using the triangular inequality, we can derive
\begin{equation}
\label{inq::bound l1 norm of x+gx plus e+ge}
\begin{split}
&\norm{x^{\star} + g^{(x)}}_1 + \lambda \norm{e^{\star} + g^{(e)}}_1 \\
&= \norm{x^{\star} - h^+ + h^-}_1 + \lambda \norm{e^{\star} - f^+ + f^-}_1 \\
&\geq \norm{x^{\star} + h^-}_1 + \lambda \norm{e^{\star} + f^-}_1  - ( \norm{h^+}_1 + \lambda \norm{f^+}_1 ).
\end{split}
\end{equation}

\noindent Applying Lemma \ref{lem::construction of dual certificate (z, e) - Sub orthogonal matrix - tradition} together with the observation that $f^- = - A_{\Omega \bullet} h^-$ yields
\begin{multline*}
\norm{x^{\star} + h^-}_1 + \lambda \norm{e^{\star} + f^-}_1  \\
\geq \norm{x^{\star}}_1 + \lambda \norm{e^{\star}}_1 + \frac{\lambda}{4} \norm{f^-_{J}}_1 + \frac{1}{4} \norm{h^-_{T^c}}.
\end{multline*}

\noindent  Combining these arguments, we get
$$
\frac{\lambda}{4} \norm{f^-_{J}}_1 + \frac{1}{4} \norm{h^-_{T^c}}_1 \leq \norm{h^+}_1 + \lambda \norm{f^+}_1.
$$

\noindent Converting both sides to the $\ell_2$-norm using the crude inequality $\norm{a}_2 \leq \norm{a}_1 \leq \sqrt{n} \norm{a}_2$ for all $a \in \R^n$, then applying $\norm{f^+}_2 = \norm{h^+}_2 \leq \sigma$, we obtain the bound
\begin{equation}
\begin{split}
\label{inq::bound l2 norm of f^-_J + h^-_Tc - last step}
\frac{\min \{\lambda, 1 \}}{4} (\norm{f^-_{J}}_2 + \norm{h^-_{T^c}}_2) &\leq \sqrt{n} (1 + \lambda) \norm{f^+}_2  \\
&\leq \sqrt{n} (1 + \lambda) \sigma .
\end{split}
\end{equation}

\noindent A specific consequence of this analysis is a bound of $M_2$
\begin{equation}
\label{inq::bound M2}
\begin{split}
M_2 = 2 \norm{f^-_J}_2^2 &\leq 2 (\norm{f^-_{J}}_2 + \norm{h^-_{T^c}}_2)^2  \\
&\leq 2 \left( \frac{4 (1 + \lambda ) }{\min \{\lambda, 1 \}} \right)^2 \sigma^2 n.
\end{split}
\end{equation}

\subsubsection{Bound $M_1$}

In this section we would like to bound $\norm{A_{\Omega^c \bullet} g^{(x)}}_2^2 + 2 \norm{f^-_{S}}_2^2$. Denoting $z := \left(
  \begin{array}{c}
    f^-_S \\
    -A_{\Omega^c \bullet} g^{(x)} \\
  \end{array}
\right)$, then to bound the quantity of interest, it is equivalent to bounding $\norm{z}_2$. By the construction of $h^-$ and $f^-$, we have $A^*_{\Omega \bullet } f^- + h^- - A^*_{\Omega^c \bullet } A_{\Omega^c \bullet} g^{(x)}= 0$ leading to
\begin{equation}
\label{eqt::main equation to bound M1}
\begin{split}
-A^*_{ J \bullet} f^-_J - \left(
  \begin{array}{c}
    0_T \\
    h^-_{T^c} \\
  \end{array}
\right)
&= A^*_{ S \bullet} f^-_S  - A^*_{ \Omega^c \bullet} A_{\Omega^c \bullet} g^{(x)} + \left(
  \begin{array}{c}
    h^-_T \\
    0_{T^c} \\
  \end{array}
\right) \\
&= A^*_{J^c \bullet}  \left(
  \begin{array}{c}
    f^-_S \\
    -A_{\Omega^c \bullet} g^{(x)} \\
  \end{array}
\right)  +  \left(
  \begin{array}{c}
    h^-_T \\
    0_{T^c} \\
  \end{array}
\right) \\
&= A^*_{J^c \bullet}  z  +  \left(
  \begin{array}{c}
    h^-_T \\
    0_{T^c} \\
  \end{array}
\right),
\end{split}
\end{equation}
where the second identity follows from $J^c = S \cup \Omega^c$.

First we control the upper bound of the $\ell_2$-norm of the left-hand side of (\ref{eqt::main equation to bound M1}), which can be attained easily from the triangular inequality
\begin{align*}
\norm{A^*_{J \bullet} f^-_J + \left(
  \begin{array}{c}
    0_T \\
    h^-_{T^c} \\
  \end{array}
  \right)}_2 &\leq \norm{A^*_{J \bullet } f^-_J}_2 + \norm{h^-_{T^c}}_2 \\
&= \norm{f^-_J}_2 + \norm{h^-_{T^c}}_2.
\end{align*}

Next, the $\ell_2$-norm of the right-hand side of (\ref{eqt::main equation to bound M1}) is now lower bounded by
\begin{align*}
&\norm{A^*_{J^c \bullet } z + \left(
  \begin{array}{c}
    h^-_T \\
    0_{T^c} \\
  \end{array}
\right)}^2_2 \\
&= \norm{A^*_{J^c \bullet } z}_2^2 + \norm{h^-_T}^2_2 + 2 \inner{A^*_{ J^c T} z, h^-_T} \\
&\geq \norm{z}_2^2 + \norm{h^-_T}^2_2 - 2 \norm{A^*_{ J^c T}} \norm{z}_2 \norm{h^-_T}_2 \\
&\geq \norm{z}_2^2 + \norm{h^-_T}^2_2 - 2 \norm{A^*_{J^c T}} \norm{z}_2 \norm{h^-_T}_2 \\
&\geq \norm{z}_2^2 + \norm{h^-_T}^2_2 - 2 \sqrt{1 - \rho_0/2} \norm{z}_2 \norm{h^-_T}_2 \\
&\geq (1 - \sqrt{1 - \rho_0 / 2}) (\norm{z}_2^2 + \norm{h^-_T}^2_2),
\end{align*}
where the third inequality follows from Proposition \ref{lem::bound spectral norm of A_Jc Tc}: $\norm{A^*_{J^c T} A_{J^c T}} \leq 1 - \rho_0/2$ and the last inequality follows from the standard argument $a^2 + b^2 - 2\alpha ab \geq (1-\alpha)(a^2 + b^2)$.

Combine these pieces together with the fact that $1 - \sqrt{1 - \rho_0 / 2} \geq \frac{\rho_0}{4}$, we attain
\begin{equation}
\label{inq::bound l2 norm of z + h^-_T - last step}
\norm{z}_2^2 + \norm{h^-_T}_2^2 \leq \frac{4}{\rho_0 }  (\norm{f^-_J}_2 + \norm{h^-_{T^c}}_2)^2.
\end{equation}

\noindent Next, notice that $\norm{z}_2^2 = \norm{f^-_S}_2^2 + \norm{A_{\Omega^c \bullet} g^{(x)}}_2^2$ and together with (\ref{inq::bound M2}), we get the following bound of $M_1$
\begin{equation}
\label{inq::bound M1}
\begin{split}
M_1 \leq 2 (\norm{z}_2^2 + \norm{h^-_T}_2^2) &\leq \frac{8}{\rho_0 } (\norm{f^-_J}_2 + \norm{h^-_{T^c}}_2)^2 \\
&\leq \frac{8}{\rho_0} \left( \frac{4 (1 + \lambda)}{\min \{1, \lambda \}} \right)^2 \sigma^2 n.
\end{split}
\end{equation}

Obviously, from combining these two previous inequalities on $M_1$ and $M_2$, we can establish the bound of the sum $M_1 + M_2$. However, we can tighten this bound by a constant factor from the following simple steps:
\begin{align*}
M_1 + M_2 &\leq 2 (\norm{A_{\Omega^c \bullet} g^{(x)}}_2^2 + \norm{f^-_S}_2^2 + \norm{f^-_J}_2^2) \\
&\leq 2 \left[ \frac{4}{\rho_0} (\norm{f^-_J}_2 + \norm{h^-_{T^c}}_2)^2 + \norm{f^-_J}_2^2 \right] \\
&\leq 2 \left( \frac{4}{\rho_0} + 1 \right) (\norm{f^-_J}_2 + \norm{h^-_{T^c}}_2)^2 \\
&\leq 2 \left( \frac{4}{\rho_0} + 1 \right) \left( \frac{4 (1 + \lambda ) }{\min \{\lambda, 1 \}} \right)^2 \sigma^2 n,
\end{align*}
where the second inequality follows from (\ref{inq::bound l2 norm of z + h^-_T - last step}) and the last inequality follows from (\ref{inq::bound l2 norm of f^-_J + h^-_Tc - last step}).

Inserting the above bound into (\ref{inq::bound l-2 of g^x + l-2 of g^2 - first step}) leads to
$$
\norm{g^{(x)}}_2^2 + \norm{g^{(e)}}_2^2 \leq 2 \sigma^2 + 2 \left( \frac{4}{\rho_0}+1 \right) \left( \frac{4(\lambda+1)}{\min \{1, \lambda \}}  \right)^2  \sigma^2 n.
$$

\noindent Finally, applying $ (\norm{g^{(x)}}_2 + \norm{g^{(e)}}_2)^2 \leq 2 (\norm{g^{(x)}}_2^2 + \norm{g^{(e)}}_2^2 )$ will complete our proof.
\end{proof}

\section{Oracle inequalities}
\label{sec::oracle inequalities}

In this section we would like to discuss the optimality of the reconstruction error bound in Theorem \ref{thm::main theorem - with dense noise and sparse noise}. In particular, we compare this result with the best possible accuracy one can achieve. Suppose we had available an oracle informing us in advance the locations of $T$ nonzero coefficients of the signal as well as $S$ nonzero coefficients of the sparse noise. Then one can use this valuable information to construct the ideal estimator pair $(x^{\Oracle}, e^{\Oracle})$ by least-square projection. To see this, we decompose $y$ into two components: $y_S$ and $y_J$, where $y_J$ is not affected by sparse error. Thus,
$$
y_J = A_{JT}x^{\star}_T + \nu_J.
$$

\noindent Recall from (\ref{inq::bound spectral norm of I - rho A*_JT A_IT}), $A^*_{JT} A_{JT}$ is invertible. In particular, $\rho_0/2 \leq \sigma_{\min} (A^*_{JT} A_{J T}) \leq \sigma_{\max} (A^*_{JT} A_{J T}) \leq 3 \rho_0/2$ where $\sigma_{\min}$ and $\sigma_{\max}$ are the minimum and the maximum singular value of the matrix, respectively. Therefore, the least-square solution of this linear system is
$$
x^{\Oracle}_T = (A_{JT}^* A_{JT})^{-1} A_{JT}^* y_J.
$$

The oracle error bound on the signal is now estimated by
$$
\norm{x^{\Oracle}_T - x^{\star}_T}_2 = \norm{(A_{JT}^* A_{JT})^{-1} A_{JT}^* \nu_J}_2.
$$

\noindent It is obvious that $\norm{H^{-1}} \leq \frac{1}{\sigma_{\min} (H)}$ for any matrix $H$. Therefore, \begin{equation}
\norm{x^{\Oracle}_T - x^{\star}_T}_2 \leq \norm{(A_{JT}^* A_{JT})^{-1}} \norm{A_{JT}} \norm{\nu_J}_2 \leq \sigma \sqrt{6/\rho_0}.
\end{equation}

Now the oracle solution of the error can be found from the identity equation on the set $S$: $y_S = A_{ST} x^{\Oracle}_T + e^{\Oracle}_S + \nu_S$. This leads to
$$
e^{\Oracle}_S = e^{\star}_S + A_{ST} (x^{\star}_T - x^{\Oracle}_T).
$$

\noindent Recall in Proposition \ref{prop::spectral norm bound of A_TS} that
\begin{align*}
\norm{A_{ST}} &\leq (\eta \rho)^{1/2} \left( 1 + \sqrt{\frac{\mu k \log n}{s}} \right)^{1/2} \\
&= \eta^{1/2} \left( \frac{s}{m} + \sqrt{\frac{s}{m}} \sqrt{\frac{\mu k \log n}{m}}  \right)^{1/2} \leq \sqrt{2},
\end{align*}
provided that $m \geq \mu k \log n$. We conclude that the oracle error bound on $e^{\star}$ has to satisfy
$$
\norm{e^{\Oracle}_S - e^{\star}_S}_2 \leq \sqrt{2} \norm{x^{\Oracle}_T - x^{\star}_T}_2 \leq \sqrt{12/\rho_0}.
$$

\noindent In conclusion, with the help of the oracle, we have
\begin{equation}
\norm{x^{\Oracle} - x^{\star} }_2 + \norm{e^{\Oracle} - e^{\star} }_2 \leq 3 \sigma \sqrt{6/\rho_0}.
\end{equation}
with adversarial noise. Consequently, our error bound in Theorem \ref{thm::main theorem - with dense noise and sparse noise} loses a $\sqrt{n}$ vis-a-vis over the ideal bound achieved via the oracle help.

\section{Numerical experiments}
\label{sec::experiments}

In this section, we provide extensive simulations to confirm the validity of our theoretical results. Since the observation model in (\ref{eqt::measurement model}) can be expressed as $y = [A_{\Omega \bullet} \quad \frac{1}{\lambda} I] z^{\star} = B z^{\star}$ where $z^{\star} = [x^{\star^T}, \lambda e^{\star^T}]^T$ and $I$ is the $m \times m$ identity matrix, the extended $\ell_1$-minimization in (\ref{opt::l1-l1 minimization - sub orthogonal matrix}) and the noisy version in (\ref{opt::l1-l1 minimization with noise - sub orthogonal matrix}) can be recast as conventional $\ell_1$ programs
$$
\min_z \norm{z}_1 \quad \text{s.t.}  \quad y = B z,
$$
and
$$
\min_z \norm{z}_1 \quad \text{s.t.}  \quad \norm{b - B z} \leq \sigma.
$$
In this section, we use the Homotopy solver introduced in \cite{AR_homotopy_2011_J} for our experiments. Another important implementation detail is the choice of the parameter $\lambda$. For moderate signal dimensions (e.g $n \leq 10^8$), we suggest to set $\lambda = \sqrt{\frac{n}{m (\log n)^{1/2}}}$. With this choice, measurements are allowed to be corrupted up to $25\%$ as presented in our theorems. Of course, if we know in prior that the signal is very sparse, reducing the value of $\lambda$ will help retrieve the signal under more corrupted measurements. In practical applications, we recommend $\lambda = \sqrt{\frac{n}{m (\log n)^{1/2}}}$ as a "good-for-all" parameter.


\subsection{Exact recovery from grossly corrupted measurements}

We first illustrate the correct recoverability of the signal under gross error as provided in Theorem \ref{thm::detail of the main theorem}. We consider random signals $x^{\star}$ of varying lengths $n = \{1024, 2048, 4096, 8192 \}$. For each $n$, we generate signals of sparsity $k$ where $k$ varies from $1$ to $60$ with step size $2$. Here, magnitudes of nonzero entries are Gaussian distributed and their locations are chosen uniformly at random. For each sparsity level, the measurement matrix $A_{\Omega \bullet}$ is produced by uniformly selecting $m=500$ rows at random from the Fourier matrix $A$. Error vector $e^{\star}$ is generated to have uniformly distributed support with cardinality $s = m/4$ and the polarity of nonzero entries being equally likely positive or negative. We set magnitudes of $e^{\star}$ such that $\norm{e^{\star}}_2 \geq 100 \norm{x^{\star}}_2$. The reader should note that this setting yields an observed signal that is significantly dominated by the noise.

For each value of the signal sparsity $k$, we repeat the experiment $100$ times and keep track of the probability of exact recovery. In all experiments, we set $\lambda = \sqrt{\frac{n}{m (\log n)^{1/2}}}$. The algorithm is declared to be successful if the relative error with respect to $x^{\star}$ satisfies $\norm{\widehat{x} - x^{\star}}_2/\norm{x^{\star}}_2 \leq 10^{-3}$. The performance curve is plotted in Fig. \ref{fig::noiseless case}. Numerical values on the x-axis denote signal sparsity whereas those on the y-axis denote the probability of exact recovery. Interestingly, this experiment demonstrates that the theory provides an accurate prediction of the simulation results even for relatively small problem sizes. In particular, perfect recovery is still attained with signals of moderate sparsity level even if $25\%$ measurements are grossly perturbed. Furthermore, the sparsity level is proportional with $\frac{m}{(\log n)^{3/2}}$ as expected.

Next, we fix the signal dimension to $n=1024$ and performs the same experiments with varying signal sparsity $k = [20, 25, 30]$. Fig \ref{fig::probability vs fraction error} demonstrates the probability of success with varying fraction error $s/m$. Note that as the signal's sparsity level increases, we expect to recover the signal under fewer corrupted measurements.

\begin{figure}[!t]
\centering
\includegraphics[width=2.4in]{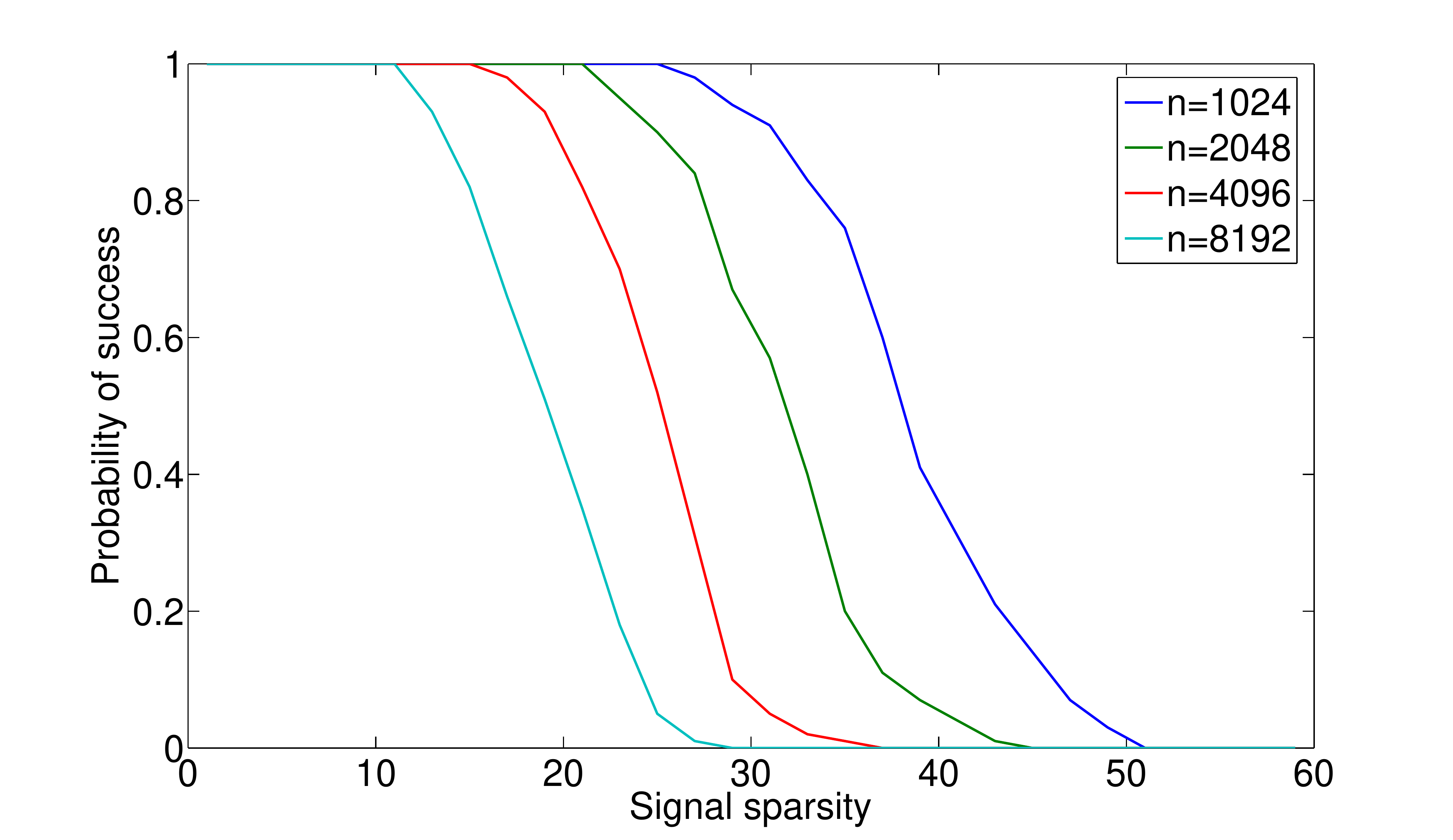}
\caption{The probability of success as a function of signal sparsity for various signal dimensions. Here, a total of $m=500$ measurements are observed and $1/4$ of them are grossly corrupted.}
\label{fig::noiseless case}
\end{figure}

\begin{figure}[!t]
\centering
\includegraphics[width=2.4in]{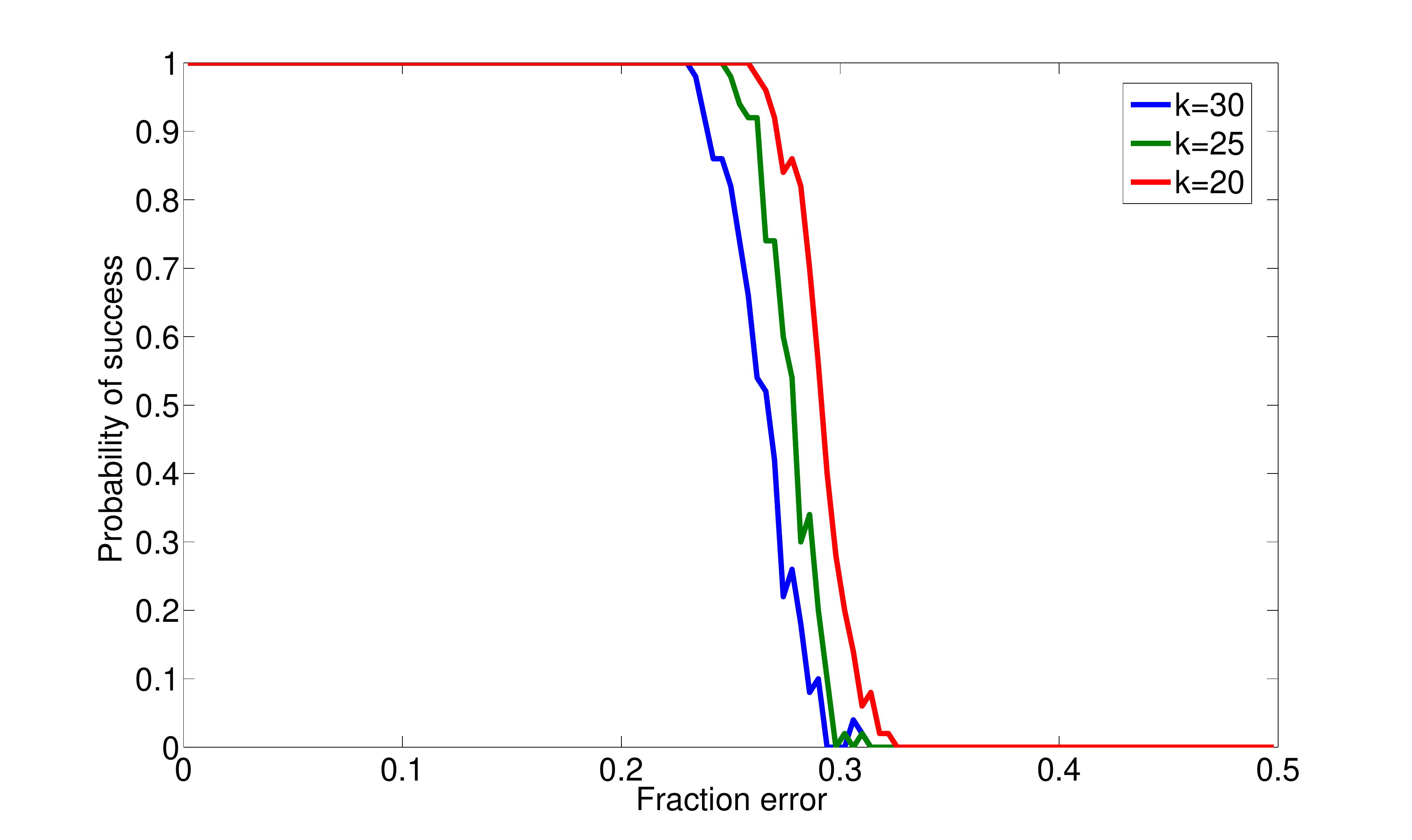}
\caption{The probability of success as a function of fraction error $s/m$. Here, we fix signal dimension to $n=1024$, a total of $m=500$ measurements are used and the signal sparsity is $k = [25, 30, 35]$.}
\label{fig::probability vs fraction error}
\end{figure}

\subsection{Stable recovery from both dense and sparse corrupted measurements}

We now demonstrate stable recoverability when measurements are both contaminated by gross sparse and small dense error. We generate small noise $\nu$ from i.i.d. $\oper N(0,\delta^2)$. The signal $x^{\star}$, the sparse error $e^{\star}$ and the measurement matrix $A_{\Omega\bullet}$ are constructed similarly as in previous experiments. For each setting, we perform the simulations $100$ times and report the average error.

We first evaluate the performance of (\ref{opt::l1-l1 minimization with noise - sub orthogonal matrix}) with the signal $x^{\star}$ whose dimension and sparsity level are fixed to be $n = 1024$ and $k = 20$. We also set the number of measurements and the error sparsity to be $m = 500$ and $s = m/4 $. Non-zero entries of the signal and the sparse errors are i.i.d. $\oper N(0,10)$. Estimation errors are quantified by the root-mean square (RMS), which is defined as $\norm{\widehat{x} - x^{\star}}_2/n$ and $\norm{\widehat{e}-e^{\star}}_2/n$, respectively. Fig. \ref{fig::RMS with sigma} shows the RMS error with varying noise level. We also demonstrate in this figure the RMS errors of an oracle obtained from Section \ref{sec::oracle inequalities}. Fig. \ref{fig::RMS with sigma} clearly illustrates that the RMS errors grow almost linearly with the noise level. Furthermore, RMS errors attained by solving (\ref{opt::l1-l1 minimization with noise - sub orthogonal matrix}) is just twice the RMS error achieved by the oracle.

Now we fix $\sigma = 1$ and run the optimization in (\ref{opt::l1-l1 minimization with noise - sub orthogonal matrix}) for varying values of error sparsity. Fig. \ref{fig::RMS with s} establishes fact that as $s$ decreases, we expect to achieve more accurate recovery.


\begin{figure}[!t]
\centering
\includegraphics[width=2.4in]{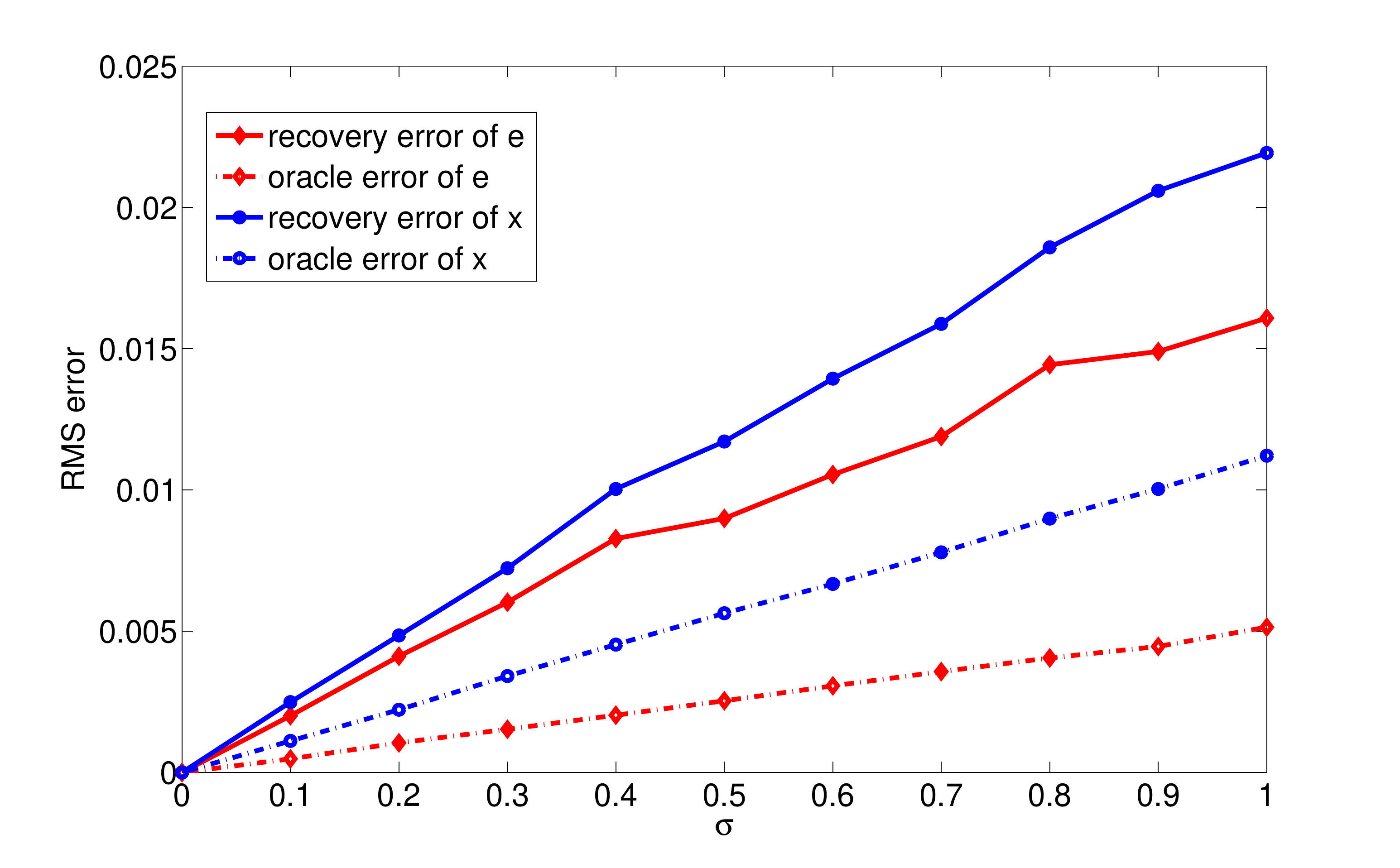}
\caption{RMS error as a function of $\sigma$ with $n=1024$, $m = 500$, $k=20$ and $s = m/4$.}
\label{fig::RMS with sigma}
\end{figure}

\begin{figure}[!t]
\centering
\includegraphics[width=2.4in]{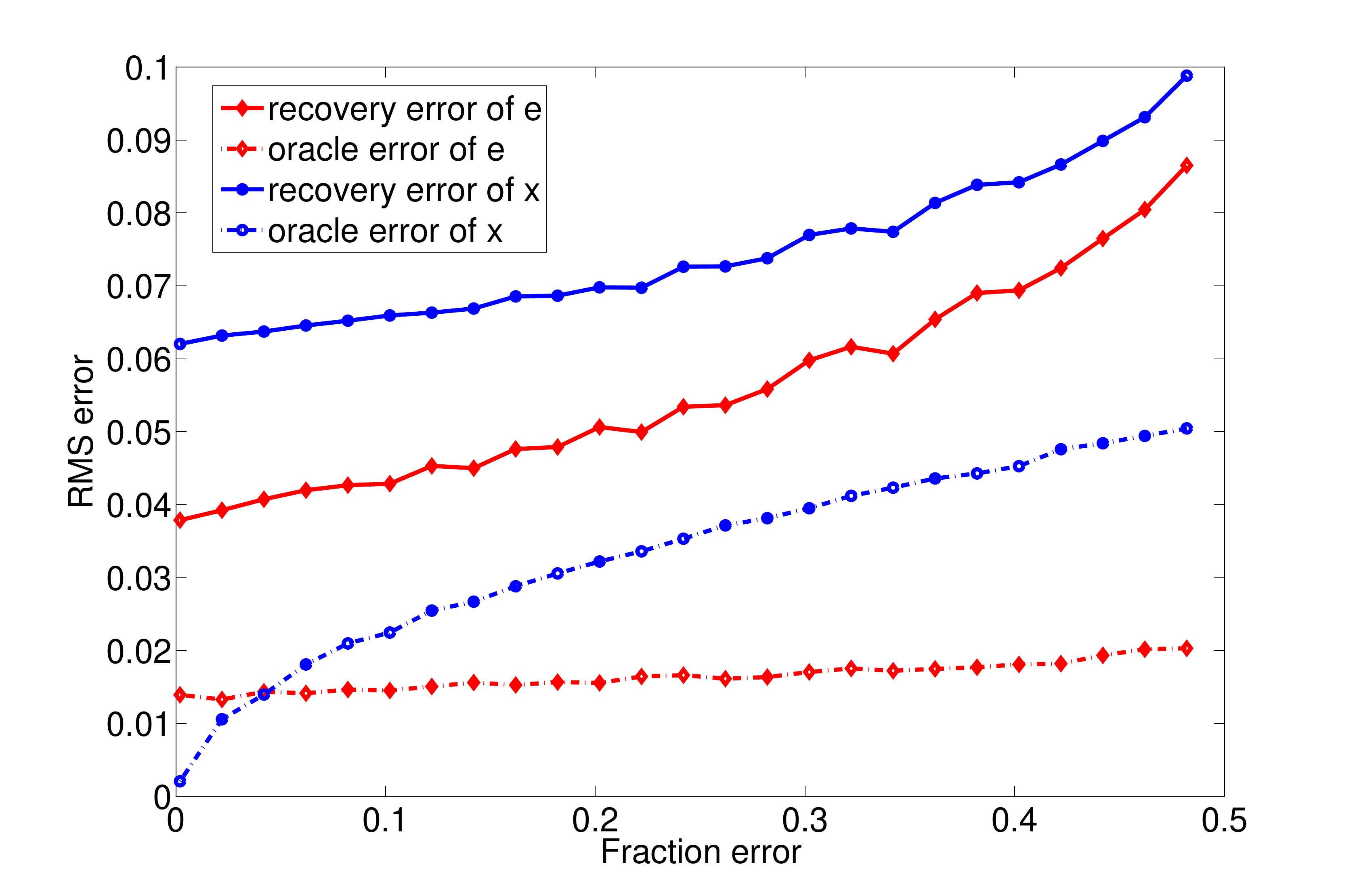}
\caption{RMS error as a function of $s$ with $n=1024$, $m = 500$, $k=20$ and $\sigma = 1$.}
\label{fig::RMS with s}
\end{figure}

\subsection{Experiments with images}

In our last experiment, we consider the problem of recovering an image from highly corrupted undersampled Fourier coefficients. As usual, the data is given by $y = A_{\Omega \bullet} x^{\star} + e^{\star} + \nu$ where $A_{\Omega \bullet}$ is a partial Fourier matrix obtained from subsampling rows of the full 2D Fourier matrix $A$, $e^{\star}$ is a sparse error vector whose nonzero entries can have arbitrarily large magnitudes, and $\nu$ is a small dense noise vector. In this experiment, $x^{\star}$ is the Shepp-Logan phantom image (see Fig. \ref{fig::phantom}), which is not sparse in the spatial domain but in the gradient domain. Therefore, to reconstruct $x^{\star}$, we use the total variation (TV) criterion and minimize
\begin{equation}
\label{opt::TV-L1 norm}
\min_{x,e} \norm{x}_{\TV} + \lambda \norm{e}_1 \quad \text{s.t.} \quad \norm{y - A_{\Omega \bullet}x - e}_2 \leq \sigma,
\end{equation}
where $\norm{\nu}_2 \leq \sigma$ is assumed to be known and $\norm{x}_{\TV}$ is the $\ell_1$-norm of the gradient, also known as the total-variation of $x$. This norm is formally defined as
\begin{equation}
\norm{x}_{\TV} = \sum_{ij} \sqrt{(\nabla_h x)^2_{ij} + (\nabla_v x)^2_{ij}},
\end{equation}
where $\nabla_h$ and $\nabla_v$ denote the discrete finite difference operators along the horizonal and vertical coordinates. To optimize (\ref{opt::TV-L1 norm}), we employ the classic alternating direction method (ADM) as presented in \cite{YZY_2010_J}. In this particular experiment, we perform a two-step algorithm
\begin{enumerate}
    \item We solve (\ref{opt::TV-L1 norm}) via the ADM method. The optimal solution is denoted as $(\widehat{x}, \widehat{e})$.
    \item Next, we select $J \in \{1,...,m \}$ as locations where coefficients of $\widehat{e}$ are zeros or approximately zeros. These locations correspond to reliable observations. Then, we solve the following optimization
    \begin{equation}
    \label{opt::TV opt}
    \min_{x} \norm{x}_{\TV}  \quad \text{s.t.} \quad \norm{y_J - A_{J \bullet}x}_2 \leq \sigma,
    \end{equation}
    where only clean observations are considered. The output of (\ref{opt::TV opt}) is what we expect to get.
\end{enumerate}


In this experiment, we sample $12267$ Fourier coefficients of the $256 \times 256$ phantom image $x^{\star}$ along a number of radical lines (as seen in the top right of Fig. \ref{fig::phantom}, $45$ radical lines are sampled). We then select $50 \%$ of these coefficients uniformly at random and purposely add them to a deterministic large noise vector whose magnitudes are twice larger than the magnitudes of Fourier coefficients. This process assumes that half of the observed Fourier coefficients are significantly corrupted during the data acquisition. We note that the locations of these missing entries are unknown. All the Fourier coefficients is afterward contaminated by a Gaussian noise vector with zero mean and standard deviation $0.01$. Fig. \ref{fig::phantom} on the bottom left and right shows the reconstruction from minimizing the TV only and from the aforementioned two-step algorithm, respectively. In the optimization (\ref{opt::TV-L1 norm}), $\lambda$ is set to be $\sqrt{\frac{n}{m \log n}}$. It is clear that while the conventional TV minimization fails to recover the original image, our proposed method recovers the image almost exactly. Notably, the relative error $\frac{\norm{x^{\star} - x_{\text{recovered}}}_2}{\norm{x^{\star}}_2}$ of our method is $0.0887$.


\begin{figure}[!t]
\centering
\subfigure{
\includegraphics[width=1.2in]{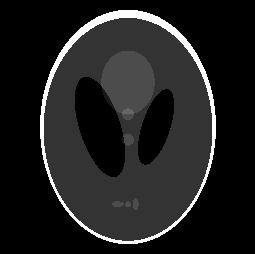} }
\subfigure{
\includegraphics[width=1.2in]{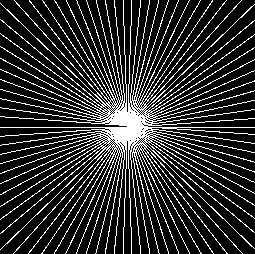} }
\subfigure{
\includegraphics[width=1.2in]{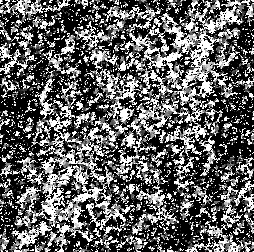} }
\subfigure{
\includegraphics[width=1.2in]{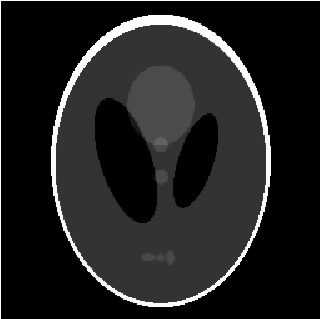} }
\caption{Top left: original $256 \times 256$ phantom image. Top right: Fourier domain sampling positions with $45$ radical lines. Bottom left: recovered image from the TV only. Bottom right: recovered image from our proposed optimization in (\ref{opt::TV-L1 norm}).}
\label{fig::phantom}
\end{figure}

\section{Discussion and conclusion}
\label{sec::conclusion}

In this paper, we present a complete analysis of a surprising phenomenon: one can recover perfectly a sparse signal from grossly corrupted measurements by linear programming (\ref{opt::l1-l1 minimization - sub orthogonal matrix}), even if the corruption is up to a significant fraction of all the entries. More specifically, we deliver an explicit connection between sparsity levels of the signal and the error. Our result can be interpreted as a generalization of compressed sensing, where measurements are both incomplete and corrupted by sparse errors. Furthermore, our results indicate that robustness is still retained even in a more challenging situation: the convex program (\ref{opt::l1-l1 minimization with noise - sub orthogonal matrix}) can stably recover a sparse signal under measurements perturbed by both gross sparse and small dense errors. Particularly, recovery error lies within a constant fraction of the dense noise level. We also establish stable recovery for a much more general class of signals $-$ approximately sparse signals.

As exhibited in Theorem \ref{thm::main theorem}, when the fraction of error is close to $1$ $-$ or in other words, most of the measurements are corrupted, signal sparsity $k$ is still allowed to be proportional to $\frac{m}{\mu^2 log^2 n}$ in order to retain accurate recovery. We conjecture that this bound is optimal. That is, we cannot achieve perfect reconstruction when $k \sim O(\frac{m}{\mu^2 \log n})$ and the error support size $s$ is close to $m$. In fact, we claim this conjecture in our upcoming paper for a class of Gaussian measurement matrices \cite{NNT_RobustLasso_2011_C}. How to establish a similar result for suborthogonal measurement matrices is an interesting open problem.


We would like to mention a related work that describes a similar phenomenon. Recently, Cand\`es \textit{et al.} \cite{CLMW_RobustPCA_2009_J}, \cite{GWLCM_RPCA_2010_C}, Chandrasekaran \textit{et al.} \cite{CSPW_2011_J}, Xu \textit{et al.} \cite{XCS_RPCA_2010_C} and Agarwal \textit{et al.} \cite{ANW_RPCA_2011_C} have shown that one can exactly recover a low-rank matrix $L \in \R^{n_1 \times n_2}$ from its grossly corrupted entries $M = L + S$ by solving the following convex program:
\begin{equation}
\label{opt::Robust PCA}
\min_{L, S} \norm{L}_* + \lambda \norm{S}_1 \quad\quad \text{subject to} \quad\quad M = L + S.
\end{equation}
More specifically, the authors of \cite{CLMW_RobustPCA_2009_J}, \cite{GWLCM_RPCA_2010_C} proved that as long as the rank of $L$ is an order of $\frac{n}{\log^2 n}$ with $n = \max\{n_1, n_2\}$, then the solution of (\ref{opt::Robust PCA}) with an appropriate choice of parameter $\lambda$ is exact even if almost all entries of $L$ are arbitrarily perturbed. Interestingly, the results in these papers shares similar behavior as what presented here in our paper. We believe that similar phenomena also holds for other high-dimensional signal and error models as well.

\section{Appendix}

\begin{proof} [\textit{Proof of Corollary \ref{corr::main corr - with generic signal and dense noise and sparse noise}}]

At first, we observe a variant of Lemma \ref{lem::construction of dual certificate (z, e) - Sub orthogonal matrix - tradition}. Assuming the existence of a dual vector ($z^{(x)}, z^{(e)}$) satisfying properties of Lemma \ref{lem::construction of dual certificate (z, e) - Sub orthogonal matrix - tradition}, then for any perturbation pair ($f, h$) such that $f = - A_{\Omega \bullet} h$, we have
\begin{equation}
\label{inq::main lemma with non-sparse signal}
\begin{split}
\norm{x^{\star} + h}_1 &+ \lambda \norm{e^{\star} + f}_1 \geq \norm{x^{\star}_T}_1 + \lambda \norm{e^{\star}}_1 \\
&{ }- \norm{x^{\star}_{T^c}}_1 + \frac{1}{4} (\norm{h_{T^c}}_1 + \lambda \norm{A_{J \bullet} h}_1).
\end{split}
\end{equation}

\noindent The proof is essentially analogous to that of Lemma \ref{lem::construction of dual certificate (z, e) - Sub orthogonal matrix - tradition}. The only difference is the non-sparse nature of $x^{\star}$. Now decompose $x^{\star}$ into $x^{\star}_T$ and $x^{\star}_{T^c}$ and use the triangular inequality to provide a lower bound for $\norm{x^{\star} + h}_1$, we have
$$
\norm{x^{\star} + h}_1 + \lambda \norm{e^{\star} + f}_1 \geq \norm{x^{\star}_T + h}_1 + \lambda \norm{e^{\star} + f}_1 - \norm{x^{\star}_{T^c}}_1.
$$

\noindent Applying Lemma \ref{lem::construction of dual certificate (z, e) - Sub orthogonal matrix - tradition} to the bound $\norm{x^{\star}_T + h}_1 + \lambda \norm{e^{\star} + f}_1$ will lead to the inequality (\ref{inq::main lemma with non-sparse signal}).

Following closely the proof of Theorem \ref{thm::main theorem - with dense noise and sparse noise}, except in bounding the quantity $M_2$, we employ the inequality in (\ref{inq::main lemma with non-sparse signal}). With the same notations, we have $\norm{x^{\star} + g^{(x)}}_1 + \norm{e^{\star} + g^{(e)}}_1 \leq \norm{x^{\star}}_1 + \norm{e^{\star}}_1 = \norm{x^{\star}_T}_1 + \norm{x^{\star}_{T^c}}_1 + \norm{e^{\star}}_1$. Using the lower bound of $\norm{x^{\star} + g^{(x)}}_1 + \norm{e^{\star} + g^{(e)}}_1$ in (\ref{inq::bound l1 norm of x+gx plus e+ge}) together with (\ref{inq::main lemma with non-sparse signal}), we get a similar result as in (\ref{inq::bound l2 norm of f^-_J + h^-_Tc - last step})
$$
\frac{\min \{\lambda, 1 \}}{4} (\norm{f^-_{J}}_2 + \norm{h^-_{T^c}}_2) \leq \sqrt{n} (1 + \lambda) \sigma + 2 \norm{x^{\star}_{T^c}}_1 .
$$

\noindent The rest of our proof follows exactly from the analysis of Theorem \ref{thm::main theorem - with dense noise and sparse noise}.
\end{proof}

\begin{proof} [\textit{Proof of Lemma \ref{lem::show the almost orthogonality of A_ST^c versus A_ST}}]
The proof is essentially analogous to the one presented in \cite{CR_incoherence_2007_J}. We first establish a bound for $\E \norm{ A^*_{S_0 T} u }_{2}$, and then show that $\norm{ A^*_{S_0 T} u }_{2}$ concentrates around its expectation.

Define $S_0 = \{i: \delta_i = 1 \}$ where $\delta_i$ is an independent sequence of Bernoulli variables with parameter $\rho_0$ and denote by $v_i \in R^k$ the $i^{th}$ column of matrix $A^*_{ S_0 T}$. With these notations, we have
$$
A^*_{S_0 T} u = \sum_{i \in S_0} u_i v_i = \sum_{i=1}^n \delta_i u_i v_i.
$$

\noindent Notice that from the orthogonality property of $A$, $\sum_{i=1}^n u_i v_i = A^*_{\bullet T} a = 0$ where $a$ is a column of matrix $A_{\bullet T^c}$. Thus, by subtracting this zero term from $A^*_{S_0 T} u$, one can see that $A^*_{S_0 T} u$ is a sum of zero-mean random variable
$$
A^*_{S_0 T} u = \sum_{i=1}^n (\delta_i - \rho_0) u_i v_i.
$$

\noindent We can now estimate $\E \norm{A^*_{S_0 T} u}_2$ as follows
\begin{align*}
\E \norm{A^*_{S_0 T} u}_2^2 &= \E \sum_{i=1}^n (\delta_i - \rho_0)^2 u^2_i \inner{v_i, v_i} \\
&{ }+ \E \sum_{i,j; i\neq j} (\delta_i - \rho_0) (\delta_j - \rho_0) u_i u_j \inner{v_i, v_j}.
\end{align*}

\noindent The second term vanishes due to the independence of $\delta_i$, $i = 1,...,n$. Furthermore, $\E (\delta_i - \rho_0)^2 = \rho_0 (1 - \rho_0) \leq \rho_0$. Hence,
\begin{align*}
\E \norm{A^*_{S_0 T} u}_2^2  &\leq \rho_0 \max_i \norm{v_i}_2^2 (\sum_{i=1}^n u_i^2) \\
&= \rho_0 \max_i \norm{v_i}_2^2 \leq \rho_0 \frac{\mu k}{n}.
\end{align*}

\noindent Therefore, by Jensen's inequality, we conclude that $\E \norm{A^*_{S_0 T} u}_2 \leq \sqrt{\E \norm{A^*_{S_0 T} u}_2^2 } \leq \sqrt{\rho_0 \frac{\mu k}{n}}$.

We now apply a remarkable result from Talagrand that bounds the supremum of a sum of independent random variables. Let $Z_1,..., Z_n$ be a sequence of independent random variables and let $M$ be the supremum defined by
$$
M = \sup_{g \in \oper G} \sum_{i=1}^n g(Z_i),
$$
where $g$ is a family of real-valued functions.
\begin{thm}
\label{thm::Talagrand inequality}
If $|g| \leq B$ for every $g \in \oper G$ and $\{ g(Z_i) \}_{i=1,...,n}$ have zero mean for every $g \in \oper G$, then for all $\tau \geq 0$,
$$
\Prob \left( |M - \E M| \geq \tau \right) \leq 3 \exp \left( - \frac{t}{C_T B} \log \left( 1 + \frac{B \tau}{\sigma^2 + B \E \overline{M}} \right) \right),
$$
where $\sigma^2 = \sup_{g \in \oper G} \sum_{i=1}^n \E g^2 (Z_i)$, $\overline{M} = \sup_{g \in \oper G} | \sum_{i=1}^n g(Z_i) |$ and $C_T > 0$ is a small numerical constant.
\end{thm}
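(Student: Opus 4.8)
The statement is the classical Talagrand concentration inequality for suprema of sums of independent random variables, and the plan is to reproduce the entropy (modified log-Sobolev) proof in the style of Ledoux and Massart rather than Talagrand's original isoperimetric argument. Write $M = M(Z_1,\dots,Z_n)$ and let $\varphi(\lambda) = \log \E\, e^{\lambda(M - \E M)}$ be the log-Laplace transform of the centered supremum. The whole argument reduces to deriving a differential inequality for $\varphi$ and then optimizing a Chernoff bound. The single indispensable tool is the \emph{tensorization of entropy}: setting $\mathrm{Ent}(f) = \E[f\log f] - \E f\,\log\E f$ and applying it to $f = e^{\lambda M}$, independence of the $Z_i$ gives $\mathrm{Ent}(e^{\lambda M}) \leq \sum_{i=1}^n \E\,\mathrm{Ent}_i(e^{\lambda M})$, where $\mathrm{Ent}_i$ is the entropy taken in $Z_i$ with the remaining coordinates frozen. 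After dividing by $\E e^{\lambda M}$, the left-hand side is exactly $\lambda\varphi'(\lambda) - \varphi(\lambda)$, so the problem becomes that of bounding each conditional entropy.

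First I would estimate the per-coordinate entropy by a resampling argument. Let $Z_i'$ be an independent copy of $Z_i$ and let $M^{(i)}$ be the supremum computed after replacing $Z_i$ by $Z_i'$; since every $g \in \oper G$ is additive over coordinates with $|g|\leq B$, the increment $M - M^{(i)}$ is controlled by the single-coordinate contribution of the (near-)maximizing $g^\star$. Combining the variational formula for entropy with the elementary bound $\mathrm{Ent}_i(e^{\lambda M}) \leq \E_i\!\left[\phi\!\left(\lambda(M^{(i)}-M)\right)e^{\lambda M}\right]$, where $\phi(x)=e^x - x - 1$, and a self-bounding estimate for $\sum_i (M - M^{(i)})_+^2$ in terms of the weak variance $\sigma^2 = \sup_g \sum_i \E g^2(Z_i)$ and of $\E\overline M$, I would obtain a modified log-Sobolev inequality of the form $\lambda\varphi'(\lambda) - \varphi(\lambda) \leq v\,\phi(\lambda B)/B^2$ up to absolute constants, with $v := \sigma^2 + B\,\E\overline M$.

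Next I would integrate this ordinary differential inequality. Using $(\varphi(\lambda)/\lambda)' = (\lambda\varphi'(\lambda)-\varphi(\lambda))/\lambda^2$ and integrating from $0$, where $\varphi(0)=0$ and $\varphi(\lambda)/\lambda \to 0$, yields the Bennett-type bound $\varphi(\lambda) \leq (v/B^2)\,(e^{\lambda B} - \lambda B - 1)$ again up to an absolute factor. The tail then follows from Markov's inequality, $\Prob(M - \E M \geq \tau) \leq e^{-\lambda\tau + \varphi(\lambda)}$, optimized over $\lambda > 0$: the minimizer $\lambda^\star = B^{-1}\log(1 + B\tau/v)$ produces precisely the $\log\!\left(1 + B\tau/v\right)$ factor displayed in the statement, and running the identical argument for $-M$ supplies the lower tail, so that the two one-sided bounds combine into the two-sided estimate with the constant $3$.

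The hard part will be the second step, the sharp control of the conditional entropies. A naive bounded-differences (Hoeffding-type) treatment yields only a sub-Gaussian tail governed by $nB^2$, which is hopelessly weak; the whole point is to extract the \emph{weak} variance $\sigma^2$ in place of $nB^2$ and to pick up the correct additional $B\,\E\overline M$ term in the denominator. This requires a careful self-bounding argument that genuinely exploits the fact that $M$ is a supremum of \emph{linear} functionals of the coordinates, and it is exactly the technical heart of Talagrand's theorem; the tensorization, the ODE integration, and the Herbst/Chernoff optimization around it are standard entropy-method machinery.
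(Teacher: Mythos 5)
You should first be aware that the paper does not prove this statement at all: it is quoted verbatim as a known external result (Talagrand's concentration inequality for suprema of empirical processes) and used as a black box inside the Appendix proof of Lemma 4, bounding $\norm{A^*_{S_0 T}u}_2$ around its expectation. So there is no in-paper proof to compare yours against; the only question is whether your outline stands on its own as a proof of the theorem.

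It does not, for two reasons. First, you have correctly reproduced the skeleton of the Ledoux--Massart entropy method (tensorization of entropy applied to $e^{\lambda M}$, a modified log-Sobolev inequality $\lambda\varphi'(\lambda)-\varphi(\lambda)\leq v\,\phi(\lambda B)/B^2$ with $v=\sigma^2+B\,\E\overline{M}$, integration of the resulting ODE, and Chernoff optimization giving the $\log\left(1+B\tau/v\right)$ factor), but you explicitly defer the step that carries all of the mathematical content: extracting the weak variance $\sigma^2+B\,\E\overline{M}$, rather than the useless $nB^2$, from the conditional entropies. You name this as "the technical heart" and do not execute it; a roadmap whose central lemma is acknowledged but unproved is not a proof. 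Second, your claim that the lower tail follows by "running the identical argument for $-M$" is incorrect as stated: $-M$ is an infimum of sums, not a supremum, the self-bounding structure that the supremum provides (the near-maximizer $g^\star$ being a valid competitor after resampling one coordinate) is destroyed under negation, and historically the left tail of empirical-process suprema resisted the entropy method until the separate work of Klein and Rio. So even granting the missing entropy bound, your outline delivers only the one-sided inequality $\Prob\left(M-\E M\geq\tau\right)\leq\exp\left(-c\,\tau B^{-1}\log\left(1+B\tau/v\right)\right)$, not the two-sided statement with constant $3$ that the paper uses. Since the paper treats the theorem as a citation, the honest fix is to do the same: attribute it (Talagrand, \emph{Invent.\ Math.} 1996; see also Ledoux's entropy-method treatment and Bousquet/Klein--Rio for sharp one-sided versions) rather than attempt a self-contained derivation.
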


By the definition of norm, we have
\begin{align*}
M &:= \norm{A^*_{S_0 T} u}_2 = \max_{\norm{g}_2 \leq 1} \inner{A^*_{S_0 T} u, g} \\
&= \max_{\norm{g}_2 \leq 1} \sum_{i=1}^n (\delta_i - \rho_0) u_i \inner{v_i, g}.
\end{align*}

Denote $Z_i =  (\delta_i - \rho_0) u_i v_i$, we have $M$ is the supremum sum of independent random variable $g(Z_i)$ where $g(Z_i) := (\delta_i - \rho_0) u_i \inner{v_i, g}$. Since $M \geq 0$, $\E M = \E \overline{M}$. The absolute value of $g(Z_i)$ is bounded by
$$
|g(Z_i)|  \leq \norm{ (\delta_i - \rho_0) u_i v_i }_2 \norm{g}_2 \leq |u_i| \norm{v_i}_2 \leq \frac{\mu}{n} \sqrt{k} := B.
$$

\noindent In addition, from $\E (\delta_i - \rho_0)^2 = \rho_0 (1 - \rho_0)$, $\sigma^2$ is computed from the argument
\begin{align*}
\sum_{i=1}^n \E g^2 (Z_i) &= \sum_{i=1}^n \rho_0 (1 - \rho_0) u_i^2 \inner{v_i, g}^2 \\
& \leq \rho_0 \max_i u_i^2 g (\sum_{i=1}^n v_i v_i^*) g \leq \rho_0 \frac{\mu}{n} \norm{g}_2^2 \norm{\sum_{i=1}^n v_i v_i^*}.
\end{align*}

\noindent Notice that $\sum_{i=1}^n v_i v_i^* = A^*_{T \bullet} A_{\bullet T} = I$ by the orthogonality property of $A$. Then, $\sigma^2 \leq \max_{\norm{g}_2 \leq 1} \rho_0 \frac{\mu}{n} \norm{g}_2^2 \leq \rho_0 \frac{\mu}{n}$. Applying Talagrand's inequality yields
\begin{equation}
\label{inq::bound S by Talagrand inequality}
\begin{split}
&\Prob \left( M \geq \E M + \tau \right) \\
&\leq 3 \exp \left( - \frac{\tau}{C_T \sqrt{k} \mu/n} \log \left( 1 + \frac{ \tau \sqrt{k}}{\rho_0 + k \sqrt{\rho_0} (\mu/ n)^{1/2} } \right) \right).
\end{split}
\end{equation}

\noindent We need to consider two cases
\begin{enumerate}
	\item If $\rho_0 \geq k \sqrt{\rho_0 \frac{\mu}{n}}$, or equivalently, $\rho_0 \geq \frac{\mu k^2}{n}$, we select $\tau$ such that $\tau \leq \rho_0/\sqrt{k}$. Thus, the right-hand side of (\ref{inq::bound S by Talagrand inequality}) is bounded by
$$
3 \exp \left( - \frac{\tau}{C_T \sqrt{k} \mu/n} \log \left( 1 + \frac{\tau \sqrt{k}}{2 \rho_0} \right) \right) ,
$$
which is in turn smaller than $3 \exp \left( - \frac{\tau^2}{3C_T \rho_0 \mu / n} \right)$ due to the simple observation that $\log (1 + x) \geq 2x/3$ for $0 \leq x \leq 1$. Set $\tau^2 := C \rho_0 \frac{\mu \log n}{n}$ where $C = 15 C_T$, the right-hand side of (\ref{inq::bound S by Talagrand inequality}) will be less than $3 e^{-\log n^5} = 3 n^{-1}$. Note that this choice of $\tau$ is consistent with the condition $\tau \leq \rho_0/\sqrt{k}$ as long as $\rho_0 \geq C \frac{\mu k \log n}{n}$. We conclude that in this case
$$
\Prob \left( M \geq \sqrt{\rho_0 \frac{\mu k}{n} } + \sqrt{C \rho_0 \frac{\mu \log n}{n}} \right) \leq 3 n^{-1}.
$$

\noindent In other words, with high probability, $M \leq \sqrt{C' \rho_0 \frac{\max \{k,\log n \}}{n}}$.

	\item On the other hand, if $\rho_0 \leq \frac{\mu k^2}{n}$, we select $\tau$ such that $\tau \leq \sqrt{\rho_0 \frac{\mu k}{n}}$. The right-hand side of (\ref{inq::bound S by Talagrand inequality}) is now less than
\begin{align*}
&3 \exp \left( - \frac{\tau}{C \sqrt{k} \mu/n} \log \left( 1 + \frac{\tau }{2 (\rho_0 k \mu /n)^{1/2}} \right) \right) \\
&\leq 3 \exp \left( - \frac{\tau^2}{3C_T k \rho^{1/2}_0 (\mu / n)^{3/2}} \right).
\end{align*}

\noindent Similarly, the right-hand side of (\ref{inq::bound S by Talagrand inequality}) will be less than $3 n^{-1}$ by setting $\tau^2 := C k \rho_0^{1/2} (\frac{\mu}{n})^{3/2} \log n$. This choice of $\tau$ is consistent with its bound as long as $\rho_0 \geq C \frac{\mu (\log n)^2}{n}$. Therefore,
$$
\Prob \left( M \geq \sqrt{\rho_0 \frac{\mu k}{n} } + \sqrt{C k \log n} \rho_0^{1/4} (\frac{\mu}{n})^{3/4}  \right) \leq 3 n^{-1}.
$$

\noindent In other words, with high probability, $M \leq  \sqrt{C' \rho_0 \frac{\mu k}{n} }$ and the proof is completed.
\end{enumerate}
\end{proof}

\bibliographystyle{IEEEtran}
\bibliography{all_references}

\end{document}